\documentclass[aps,pra,10pt,twocolumn,superscriptaddress,notitlepage,nofootinbib,longbibliography]{revtex4-2}

\usepackage[utf8]{inputenc} 
\usepackage[english]{babel}

\usepackage{amsmath,amssymb,amsfonts}
\usepackage{mathtools}
\usepackage{bm}
\usepackage{dsfont}
\usepackage{amsthm}
\usepackage{physics}

\usepackage{graphicx}
\usepackage[dvipsnames]{xcolor}

\usepackage{siunitx}

\usepackage[autostyle]{csquotes}
\MakeOuterQuote{"}
\usepackage[shortlabels]{enumitem}
\usepackage{comment}
\usepackage{verbatim}

\usepackage{xpatch}
\usepackage{etoolbox}
\usepackage{tcolorbox}
\usepackage{soul}
\usepackage{postnotes}

\usepackage[hidelinks]{hyperref}
\usepackage[capitalize]{cleveref}

\AtBeginDocument{\RenewCommandCopy\qty\SI}

\date{\today}

\definecolor{cadetgrey}{rgb}{0.57, 0.64, 0.69}

\newtheorem{lemma}{Lemma}
\newtheorem{theorem}{Theorem}
\theoremstyle{definition}
\newtheorem{remark}{Remark}

\newcommand{\set}[1]{\{#1\}}

\newcommand{\identity}{\mathds{1}}

\newcommand{\rv}[1]{\bm{#1}}
\newcommand{\chiT}[1]{\rv{\chi}_{\rm W}^{#1}}
\newcommand{\Tsind}{\rm W}

\newcommand{\vos}[1]{\vec{#1}}

\newcommand{\epsleak}{\varepsilon}
\newcommand{\epsproB}{\epsilon_{\rm pB}}
\newcommand{\epsproK}{\epsilon_{\rm pK}}
\newcommand{\epsproS}{\epsilon_{\rm pS}}

\newcommand{\ptrash}{\alpha}
\newcommand{\pB}[1]{p_{#1}^{\rm B}}
\newcommand{\piA}[1]{p_{#1}^{\rm A}}
\newcommand{\Tobs}[1]{\hat{W}_{#1}}
\newcommand{\Tobsn}[1]{\hat{W}_{#1}^{n}}

\newcommand{\Iav}[1]{I_{u}^{\rm av}(#1)}

\newcommand{\loss}{\perp}
\newcommand{\LC}{LWM}

\tcbuselibrary{theorems}

\tcbset{
  myboxcolor/.style={colframe=#1!25!black}
}

\newtcbtheorem[]{mybox}{Box}%
  {colback=black!8!white, colframe=blue!25!black,
   fontupper=\small, fonttitle=\bfseries}{box}

\newcommand{\nstates}{n_A}

\crefname{appendix}{Appendix}{Appendices}

\makeatletter
\AddToHook{cmd/appendix/before}{\def\cref@section@alias{appendix}\def\cref@subsection@alias{appendix}}
\makeatother

\begin{document}
\setstcolor{red}

\title{Numerical security analysis for practical quantum key distribution}

\author{Álvaro Navarrete}
\email{anavarrete@vqcc.uvigo.es}
\affiliation{Vigo Quantum Communication Center, University of Vigo, Vigo E-36310, Spain}
\affiliation{School of Telecommunication Engineering, Department of Signal Theory and Communications, University of Vigo, Vigo E-36310, Spain}
\affiliation{atlanTTic Research Center, University of Vigo, Vigo E-36310, Spain}

\author{Guillermo Currás-Lorenzo}
\affiliation{Vigo Quantum Communication Center, University of Vigo, Vigo E-36310, Spain}
\affiliation{School of Telecommunication Engineering, Department of Signal Theory and Communications, University of Vigo, Vigo E-36310, Spain}
\affiliation{atlanTTic Research Center, University of Vigo, Vigo E-36310, Spain}

\author{Margarida Pereira}
\affiliation{Vigo Quantum Communication Center, University of Vigo, Vigo E-36310, Spain}
\affiliation{School of Telecommunication Engineering, Department of Signal Theory and Communications, University of Vigo, Vigo E-36310, Spain}
\affiliation{atlanTTic Research Center, University of Vigo, Vigo E-36310, Spain}

\author{Marcos Curty}
\affiliation{Vigo Quantum Communication Center, University of Vigo, Vigo E-36310, Spain}
\affiliation{School of Telecommunication Engineering, Department of Signal Theory and Communications, University of Vigo, Vigo E-36310, Spain}
\affiliation{atlanTTic Research Center, University of Vigo, Vigo E-36310, Spain}

\begin{abstract}
Quantum key distribution (QKD) promises information-theoretic security based on quantum mechanics and idealized device models. Practical implementations, however, deviate from these models due to unavoidable device imperfections, and existing security proofs fall short of capturing the complexity of real-world systems. Here we introduce a versatile numerical finite-key security framework valid against general coherent attacks and applicable to a broad class of practical QKD setups. It accommodates most relevant imperfections at both transmitter and receiver, including non-independent-and-identically-distributed (non-IID) signals arising in high-speed QKD systems due to the limited bandwidth of optical modulators, while requiring only partial characterization of the apparatuses. We demonstrate the power of our framework by proving the security of a realistic decoy-state QKD implementation with laser sources, providing a practical route towards rigorous security certification of real-world QKD setups.
\end{abstract}

\maketitle


\section{Introduction}\label{sec:Introduction}
In principle, quantum key distribution (QKD)~\cite{lo2014secure,pirandola2020advances,xu2020secure} enables two distant parties to establish a shared secret key whose security is guaranteed by quantum mechanics. Its central premise is that any eavesdropping attempt on non-orthogonal quantum states necessarily induces disturbances, which leave observable signatures in the measurement record. In practice, however, device imperfections can undermine this guarantee and make the security of real QKD implementations uncertain~\cite{NSA_QKD,NCSC_QKD,ANSSI_BSI_NLNCSA_QKD}.

Over the past two decades, substantial progress has narrowed the gap between theory and practice. Modern analytical security proofs now accommodate a broad range of device imperfections in both sources and detectors~\cite{tamaki2014loss,pereira,pereira2020quantum,sixto2025quantum,tupkary2024phase,curras2025security,wangPhaseError2025,pereira2024quantum,zapatero2021security,xoel,curras2023security,curras2026securitydecoy,curras-lorenzoRigorousPhaseerrorestimation2026}. The most advanced such analysis~\cite{curras2023security2} can guarantee finite-key security against coherent attacks while requiring only partial device characterization. However, although this proof is tight for BB84 \cite{curras2025numerical,pereira2025optimal}, its techniques become loose for other protocols with non-qubit encoding \cite{navarrete2021practical}, as shown in \cite{curras2025numerical}. Moreover, its BB84 application assumes single-photon sources, and even though an extension to the decoy-state method~\cite{hwang2003quantum,lo2005decoy,wang2005beating} is possible in principle, it remains unclear how to incorporate decoy-state-specific imperfections within this framework. More fundamentally, analytical proofs are tailored to specific protocols and are difficult to extend to highly unstructured schemes with many transmitted states; even minor protocol changes typically require substantial modifications to the security analysis. This rigidity severely limits their adaptability to the diversity of QKD implementations.

Numerical security proofs have recently emerged as flexible alternatives to analytical methods~\cite{coles2016numerical,winick2018reliable,wang2019characterising,george2021numerical,bunandar2020numerical,kamin2025improved,kamin2024finite,zhou2022numerical,metger2023security,lorente2024quantum,navarro2025finite,tupkary2026rigorous}. They allow rapid evaluation across a broad range of QKD scenarios and often yield optimal or near-optimal asymptotic key rates, even for protocols lacking the symmetries on which analytical proofs usually rely. This flexibility makes them well suited to address device imperfections, which often break these symmetries. Along these lines, recent works have incorporated some device imperfections into numerical frameworks~\cite{kamin2025r,naharImperfectDetectors2026}, while others have extended numerical techniques to settings with only partial state characterization~\cite{curras2025numerical,pereira2025optimal}. The latter, however, remain confined to the collective-attack scenario. More critically, no existing numerical proof to date can deal with pulse correlations, i.e., scenarios in which the state emitted in one round may depend on previous setting choices. Such correlations typically arise from the finite bandwidth of optical modulators in high-speed systems~\cite{grunenfelder2020performance,agulleiro2025modeling,roberts2018patterning,yoshino2018quantum,trefilov2025intensity} and break the round-by-round independence of Alice's signal preparation on which existing numerical techniques largely rely.

In this work, we overcome these limitations by introducing a versatile numerical finite-key security framework that remains valid in the presence of non-independent-and-identically-distributed (non-IID) signals and is applicable to a broad range of practical QKD implementations. Our approach establishes security against coherent attacks while incorporating the most relevant device imperfections at both the transmitter and receiver, requiring only partial characterization of the apparatuses. The resulting bounds are tight and surpass state-of-the-art analytical security proofs~\cite{curras2023security2} across all considered regimes where the latter apply. As a concrete demonstration, we establish the security of a decoy-state QKD implementation with laser sources in the presence of state-preparation flaws (SPFs), side-channel information leakage, source correlations affecting the bit, basis, and intensity settings, and basis-dependent detection efficiency mismatch. These results provide a clear route towards rigorous security certification of real-world QKD systems.

\section{Security analysis}\label{sec:SecurityAnalysisNew}

\subsection{Protocol, notation and security condition}

\begin{figure*}
\begin{mybox}{Actual Protocol (AP)}{AP}
\begin{enumerate}
    \item In each round\footnote{In this box we omit the round index $u$ from the quantities whenever it is clear from the context.} $u\in\set{1,\dots,N}$,
    \begin{enumerate}
        \item  With probability $p_i^{\rm A}$, $i\in\set{0,1,\dots,n_A-1}$, 
        where $n_A$ is the number of setting choices, Alice prepares a system $C$ in the state $\sigma_{C}^{i^u_{u-L}}$---which in general may depend not only on the current setting $i \equiv i_u$, but also on the previous settings $i^{u-1}_{u-L}\equiv i_{u-1},i_{u-2},\dots,i_{u-L}$, where $L$ is the correlation length---and sends it to Bob through the quantum the channel. The settings $i\in\set{0,1}$ correspond to Alice's $Z$ basis.
        \item Bob selects a measurement basis $\beta\in\set{Z,X_1,\dots,X_{n_M}}$ with probability $\pB{\beta}$, and measures the received system $B$---which may differ from system $C$ due to Eve's intervention---with a positive-operator-valued measure (POVM) $\set{\hat{\Gamma}^{b|\beta}_B}_{b}$, obtaining the measurement outcome $b$. Here we shall consider that Bob's $Z$-basis measurement has possible outcomes $b\in\set{0,1,\loss}$, where $\loss$ is the non-detection outcome.
    \end{enumerate}
    \item Alice and Bob publicly exchange the classical information required for sifting via an authenticated channel. In particular, they reveal the basis information and whether a detection occurred in each round. They classify detected $Z$-basis rounds as \textit{key} rounds and use them to generate their sifted bits. All remaining rounds are classified as \textit{test} rounds, for which the parameters $i$, $\beta$, and $b$ are publicly revealed.
    \item Alice and Bob perform error correction, error verification, and privacy amplification to obtain a secret key.
\end{enumerate}
\end{mybox}
\end{figure*}

For illustration purposes, we shall focus on BB84-type prepare-and-measure (P\&M) QKD protocols, though the analysis is valid for more general schemes, including entanglement-based, interference-based, and decoy-state QKD setups (see Appendix~\ref{app:applications_to_QKD_Protocols}). Specifically, the type of P\&M protocol considered is described in Box~\ref{box:AP}. 
We assume that Alice's transmitted states $\sigma_{C_u}^{i^u_{u-L}}$ are partially characterized and may depend not only on the current setting choice $i_u\in\set{0,1,\dots,\nstates-1}$, where $\nstates$ is the number of settings, but also on the previous choices $i_{u-L}^{u-1}$, with $L$ being the correlation length. We adopt the notation $x_{n}^{m}\equiv x_{m},x_{m-1},\dots,x_{n+1},x_{n}$, where $n\leq m$, and the sequence must be understood to be truncated to rounds within $[1,N]$, with $N$ being the number of transmitting rounds. To be precise, we assume that these states satisfy the fidelity constraint
\begin{equation}\label{eq:assumption1thm_main}
    F\left(\sigma_{C_u}^{i^u_{u-L}},\ket{\phi_{i_u}}_{C_u}\right)\geq 1-\epsleak^{i_u},\quad \forall i^{u}_{u-L}
\end{equation}
for certain parameters $\epsleak^{i_u}>0$, where the states $\ket{\phi_{i_u}}_{C_u}$ are some predefined reference states. From~\cref{eq:assumption1thm_main}, it follows (see \cref{lemma:density_op_correlations} in Appendix~\ref{app:correlated_sources}) that, for the purpose of proving the protocol's security, without loss of generality, the transmitted states can be considered to take the alternative pure-state form
\begin{equation}\label{eq:assumption1extra_main}
    \ket*{\psi_{i_{u-L}^u}^{\epsleak}}_{C_u}=\sqrt{1-\epsleak}\ket*{\phi_{i_u}}_{C_u}+\sqrt{\epsleak}\ket*{\phi^{\perp}_{i_{u-L}^{u}}}_{C_u},
\end{equation}
where the states $\ket*{\phi^{\perp}_{i_{u-L}^{u}}}_{C_u}$ are unknown but satisfy $\braket*{\phi^{\perp}_{i_{u-L}^{u}}}{\phi_{i_u}}_{C_u}=0$, and for simplicity we fixed $\epsleak^{i_u}=\epsleak$ for all $i_u$. In this latter scenario, the joint state of all transmitted systems $C_u$ and ancilla systems $A_u$ in an entanglement-based view of the protocol can be written as
\begin{equation}\label{eq:global_correlated_state_main}
\ket{\Psi^{\epsleak}}_{A_1^NC_1^N}=\sum_{i_1^N}\bigotimes_{u=1}^{N}\sqrt{p_{i_u}^{\rm A}} \ket{i_u}_{A_u}\ket*{\psi_{i_{u-L}^u}^{\epsleak}}_{C_u},
\end{equation}
where $\piA{i_u}$ is the probability that Alice selects the setting $i_u$ and $\set{\ket{i_u}}_{i_u}$ is an orthogonal basis. For notational simplicity, from now on we drop the round index $u$ when it refers to the current round and this is clear from the context (e.g., $A \equiv A_u$).

In the channel, Eve applies a general completely positive trace-preserving (CPTP) map $\mathcal{C}:\mathcal{L}(\mathcal{H}_{C_1^N})\to \mathcal{L}(\mathcal{H}_{B_1^N})$.
Subsequently, Alice and Bob perform a joint positive-operator-valued measure (POVM) $\set{\hat{S}_{AB}^{i,\beta,b}}_{i,\beta,b}$ on each pair of systems $AB$, where
\begin{equation}\label{eq:SABoperators_main}
\begin{split}
    \hat{S}_{AB}^{i,\beta,b} &= 
    \pB{\beta} \dyad{i}_A\otimes\hat{\Gamma}^{b|\beta}_B,
\end{split}
\end{equation}
$\pB{\beta}$ is the probability that Bob selects the measurement basis $\beta$ and measures the incoming signal with the POVM $\set{\hat{\Gamma}^{b|\beta}_B}_{b\in\set{0,1,\loss}}$, and $\loss$ is the non-detection outcome. We shall consider that Alice and Bob construct their sifted keys from the detected $Z$-basis rounds, namely, the set of rounds with $(i,\beta,b)\in\mathcal{K}$, where $\mathcal{K}:=\set{(i,\beta,b)|\beta=Z\text{ and }i,b\in\set{0,1}}$.

To prove the security of the scheme, we consider a virtual protocol (VP) in which Alice and Bob perform the POVM $\set{\hat{V}_{AB}^{i,b}}_{i,b\in\set{0,1}}\cup\set{\hat{S}_{AB}^{i,\beta,b}}_{(i,\beta,b)\notin \mathcal{K}}$, where
\begin{equation}\label{eq:VABoperators_main}
\begin{split}
    \hat{V}_{AB}^{i,b} &=
    \pB{Z} \dyad{i_X}_A\otimes\hat{\Gamma}^{b|V}_B,
\end{split}
\end{equation}
with $\ket{0_X}=(\ket{0}+\ket{1})/\sqrt{2}$ and $\ket{1_X}=(\ket{0}-\ket{1})/\sqrt{2}$ being the Hadamard states. 
Here, $\hat{\Gamma}^{0|V}_B$ and $\hat{\Gamma}^{1|V}_B$ denote the POVM elements associated with the detection outcomes 0 and 1 in an arbitrary virtual measurement
$\set{\hat{\Gamma}^{b|V}_B}_{b\in\set{0,1,\perp}}$ performed by Bob, subject only to the constraint
$\hat{\Gamma}^{\loss|V}_B=\hat{\Gamma}^{\loss|Z}_B$.
For efficient BB84-type protocols (see Appendix~\ref{app:applications_to_QKD_Protocols}) in which Bob's POVMs satisfy the basis-independent detection efficiency (BIDE) condition---that is, $\hat{\Gamma}^{\loss|Z}_B=\hat{\Gamma}^{\loss|X}_B$, where X is the test basis---this virtual measurement can be simply set to $\hat{\Gamma}^{b|V}_B=\hat{\Gamma}^{b|X}_B$. In fact, for such protocols, our analysis can even be directly applied without any characterization at all of Bob's POVMs $\set{\hat{\Gamma}^{b|\beta}_B}_{b}$ beyond the BIDE assumption (see Appendix~\ref{app:applications_to_QKD_Protocols} for further details). 
For those scenarios in which Bob's measurement setup does not satisfy the BIDE assumption one can apply the lifting theorem in~\cite{curras2025security}, thus covering also imperfect and partially characterized detection setups whose POVMs may even vary from round to round, within certain allowed deviation limits (see Appendix~\ref{app:finite-key}).

In the VP, we consider that a phase error occurs when Alice's and Bob's bits differ in a virtual key round, which are the rounds associated with the operators $\set{\hat{V}_{AB}^{i,b}}_{i,b\in\set{0,1}}$. That is, the phase-error operator for any round can be defined as
\begin{equation}\label{eq:phase-error-operator}
\begin{split}
    \hat{E}^{\rm ph}_{AB} &= \hat{V}_{AB}^{0,1}+\hat{V}_{AB}^{1,0}.
\end{split}
\end{equation}
Note, however, that the phase-error definition could be chosen differently.

Now, let $\rv{N}_{\rm key}$ ($\rv{N}_{\rm ph}$) denote the number of sifted-key bits (phase errors) observed in the VP, and let $\rv{e}_{\rm ph}=\rv{N}_{\rm ph}/\rv{N}_{\rm key}$ be the phase-error rate. Moreover, let us assume that the \textit{a priori} probabilistic statement
\begin{equation}\label{eq:probabilistic_statement}
    \Pr[\rv{e}_{\rm ph}> e_{\rm ph}^{\rm U}(\rv{N}_{\rm key},\set{\rv{N}_{\rm test}^{i,\beta,b}}_{i,\beta,b})]\leq \epsilon_{\rm pro},
\end{equation}
holds for certain $\epsilon_{\rm pro}$, where $e_{\rm ph}^{\rm U}$ is a function of the observed random variables (RVs) in an execution of the protocol, and $\rv{N}_{\rm test}^{i,\beta,b}$ is the number of test rounds associated with the tuple $(i,\beta,b)$. Then, there exists a classical postprocessing (i.e., step 3 in Box~\ref{box:AP}) that returns a secret key of length
\begin{equation}\label{eq:secret-key-length}
\begin{split}
    \rv{l}_{\rm key} ={}&  \rv{N}_{\rm key}\left[1-h\left(e_{\rm ph}^{\rm U}(\rv{N}_{\rm key},\set{\rv{N}_{\rm test}^{i,\beta,b}}_{i,\beta,b})\right)\right]-\rv{\lambda}_{\rm EC}
    \\
    & -2\log(1/2\epsilon_{\rm PA})-\log(2/\epsilon_{\rm EV}),
\end{split}
\end{equation}
that guarantees that the protocol is $\epsilon_{\rm tot}$-secure \cite[Theorem~1]{curras-lorenzoRigorousPhaseerrorestimation2026} (see also \cite[Theorem~1]{tupkary2024phase}). Here, $\epsilon_{\rm tot}=\epsilon_{\rm EV}+\epsilon_{\rm sec}$, with $\epsilon_{\rm EV}$ being the failure probability of the error-verification step, $\epsilon_{\rm sec}=2\sqrt{\epsilon_{\rm pro}}+\epsilon_{\rm PA}$ is the secrecy parameter, $\epsilon_{\rm PA}>0$ is freely chosen, and $\rv{\lambda}_{\rm EC}$ is the information revealed during the error correction routine.
The main contribution of this work is to derive a probabilistic bound like~\cref{eq:probabilistic_statement} for practical QKD setups in the presence of multiple device imperfections.

\subsection{Single-round operator inequality}
The phase-error probability of any given single round of the VP can be upper bounded with the following semidefinite program (SDP),
\begin{equation}\label{eq:primal1}
    \begin{split}
        \max_{\rho_{AB}} \quad &  \Tr\set{\hat{E}^{\rm ph}_{AB}\rho_{AB}}\\
        \text{s.t.} \quad & \rho_{AB}\succeq 0 \\
        \quad & \Tr\set{\hat{T}_{A}^{k} \rho_{AB}} = t_k, \quad (k=1,\dots,n_{\rm T})\\
        \quad & \Tr\set{\hat{Q}_{AB}^{l} \rho_{AB}} = q_l, \quad (l=1,\dots,n_{\rm Q}),
    \end{split}
\end{equation}
where $\hat{E}^{\rm ph}_{AB}$ is the phase-error operator defined in~\cref{eq:phase-error-operator}, $\rho_{AB}$ is the (unknown) state of systems $AB$, the $n_{\rm T}$ measurement operators $\hat{T}_{A}^{k}$ serve to characterize---either tomographically or only partially---the marginal state of system $A$, with the quantities $t_k$ being their associated probabilities, and the $n_{\rm Q}$ operators $\hat{Q}_{AB}^{l}$ together with the quantities $q_l$ are defined as
\begin{equation}\label{eq:linear_comb_c}
\begin{split}
    \hat{Q}_{AB}^{l}&:= \sum_{i,\beta,b}c_{i,\beta,b}^l\dyad{i}_A\otimes\hat{\Gamma}^{b|\beta}_B,\\
    q_l&:= \sum_{i,\beta,b}c_{i,\beta,b}^l p_{i}^{A}p_{b|i,\beta}.
\end{split}
\end{equation}
That is, $\hat{Q}_{AB}^{l}$ and $q_l$ are linear combinations---determined by the real coefficients $c_{i,\beta,b}^l$---of the operators and statistics, respectively, that are associated with the test rounds of the protocol. 
These coefficients must be chosen such that the quantities $q_l$ can be related to the statistics publicly revealed in the actual protocol. This means, for instance, that $c_{0,Z,0}^l=c_{0,Z,1}^l=c_{1,Z,0}^l=c_{1,Z,1}^l$, since the bit-wise information of the $Z$-basis rounds is never revealed by Alice and Bob.
The quantities $q_l$ depend on the unknown state $\rho_{AB}$ through the probabilities $p_{b|i,\beta}$, and therefore they are in principle unknown. On the other hand, the probabilities $t_k$ only depend on Alice's marginal state $\rho_A$. This state is perfectly known in certain scenarios---i.e., if one assumes perfect state characterization---but in general it may be unknown. Moreover, note that the quantities $q_l$ and $t_k$ are generally round dependent.

Importantly, the SDP introduced in~\cref{eq:primal1} can be used to derive an operator inequality involving the phase-error operator and all the aforementioned measurement operators~\cite{zhou2022numerical}. For this, before running the protocol, one needs to make a guess on the values of $q_1,\dots,q_{n_{\rm Q}}$ and $t_1,\dots,t_{n_{\rm T}}$. We denote these estimates as the "guesses" $q^{\rm gs}_l$ and $t^{\rm gs}_k$, respectively. For example, the $t^{\rm gs}_k$ can be derived from the reference states $\ket{\phi_i}$ introduced in~\cref{eq:assumption1thm_main}---which typically correspond to the states transmitted in an ideal scenario without any imperfection---while $q^{\rm gs}_l$ can be obtained by considering these reference states together with a realistic model of the quantum channel and Bob's measurement device. In doing so, one can construct the following SDP (which is indeed the dual of~\cref{eq:primal1} particularized for these guesses)
\begin{equation}\label{eq:dualSDP2}
\begin{split}
    \min_{\vos{\Lambda}} \quad &  \sum_l\eta_lq^{\rm gs}_l  + \sum_k\lambda_kt^{\rm gs}_k\\
    \text{s.t.} \quad & \hat{E}^{\rm ph}_{AB}  \preceq \sum_l\eta_l\hat{Q}_{AB}^{l}  +\sum_k\lambda_k\hat{T}_{A}^{k}\otimes\identity_B,
\end{split}
\end{equation}
where the vector $\vos{\Lambda}:=(\eta_1,\dots,\eta_{n_{\rm Q}},\lambda_1,\dots,\lambda_{n_{\rm T}})$ contains all the optimization variables.
Note that, if $\vos{\Lambda}^{*}_{\rm gs}:=(\eta_1^*,\dots,\eta_{n_{\rm Q}}^*,\lambda_1^*,\dots,\lambda_{n_{\rm T}}^*)$ is the optimal solution of the previous SDP, then the operator inequality
\begin{equation}\label{eq:operator_ineq_2}
 \hat{E}^{\rm ph}_{AB}  \preceq \sum_l\eta_l^*\hat{Q}_{AB}^{l} +\Tobs{A}\otimes\identity_B
\end{equation}
holds---indeed, it remains true even if $\vos{\Lambda}^{*}_{\rm gs}$ is not necessarily optimal, so long as it is a feasible point---where we conveniently define the observable 
\begin{equation}\label{eq:Tk_obs_definition}
\Tobs{A}:=\sum_k\lambda_k^*\hat{T}_{A}^{k}=\sum_{\omega}\omega\dyad{\omega}_{A},
\end{equation}
with $\set{\omega}_{\omega}$ ($\set{\ket{\omega}}_{\omega}$) being its set of eigenvalues (eigenvectors). Importantly, \cref{eq:operator_ineq_2} implies that the quantity $\Tr\{(\sum_l\eta_l^*\hat{Q}_{AB}^{l} +\Tobs{A}\otimes\identity_B)\rho_{AB}\}$ is a valid upper bound for the phase-error probability $ \Tr\set{\hat{E}^{\rm ph}_{AB}\rho_{AB}}$, which is expected to be tight if the unknown state $\rho_{AB}$ confirms the guesses, i.e., if $\Tr\set{\hat{T}_{A}^{k} \rho_{AB}} \approx t_k^{\rm gs}$ and $\Tr\set{\hat{Q}_{AB}^{l} \rho_{AB}} \approx q_l^{\rm gs}$ for all $t$ and $l$, respectively. In fact the primal SDP in~\cref{eq:primal1} is introduced only for illustration, since the construction of~\cref{eq:dualSDP2} is direct: one inserts a linear combination of the available measurement operators into the target operator inequality constraint, and takes its expected value under normal operation of the protocol as the objective function.

\subsection{Finite-key security with imperfect devices}
\begin{figure*}
\begin{mybox}[myboxcolor=green]{Virtual Estimation Protocol (VEP)}{VEP}
\begin{enumerate}
    \item Alice prepares the entangled state $\ket{\Psi^{\epsleak}}_{A_1^NC_1^N}$ defined in~\cref{eq:global_correlated_state_main}---which depends on some $\epsleak$ such that~\cref{eq:assumption1thm_main} holds with $\epsleak^{i_u}=\epsleak$---and sends all systems $C_1^N$ through the channel to Bob, who receives the systems $B_1^N$. 
    \item Then, for each round $u$, Alice and Bob perform the POVM
    \begin{equation}
        \set{(1-\ptrash)\hat{V}_{AB}^{i,b}}_{i,b\in\set{0,1}}\cup\set{(1-\ptrash)\hat{S}_{AB}^{i,\beta,b}}_{(i,\beta,b)\notin\mathcal{K}}\cup\set{\ptrash\dyad{\omega}_A\otimes\identity_{B}}_{\omega},\nonumber
    \end{equation}
    on their respective round systems $A$ and $B$, where $\ptrash$ is the probability that a round is designated as a "local $W$-measurement" (\LC{}) round, the operators $\hat{V}_{AB}^{i,b}$ and $\hat{S}_{AB}^{i,\beta,b}$ are defined in~\cref{eq:SABoperators_main,eq:VABoperators_main}, and the states $\ket{\omega}$ are the eigenvectors of the observable $\Tobs{A}$ given by~\cref{eq:Tk_obs_definition}. Then, they obtain the values of the following random variables:
    \begin{itemize}\label{it:RVs_main}
        \item[\small $\bullet$] $\rv{\chi}_{\rm key}^{u}$: it takes the value 1 if a key round is observed, which is associated with any of the POVM elements $(1-\ptrash)\hat{V}_{AB}^{a,b}$. Otherwise, it takes the value 0.
        \item[\small $\bullet$] $\rv{\chi}_{\rm ph}^{u}$: it takes the value 1 if a phase error occurs, which is associated with the POVM elements $(1-\ptrash)\hat{V}_{AB}^{1,0}$ and $(1-\ptrash)\hat{V}_{AB}^{0,1}$ (see~\cref{eq:phase-error-operator}). Otherwise, it takes the value 0.
        \item[\small $\bullet$] $\rv{\chi}_{{\rm Q},l}^{u}$: it takes the value $\frac{c_{i,\beta,b}^l}{p^{\rm B}_{\beta}}$ if Alice and Bob observe the measurement outcome associated with the POVM element $(1-\ptrash)\hat{S}_{AB}^{i,\beta,b}$. Besides, it takes the value\footnote{Note that the coefficients $c_{i,Z,b}^l$ must be identical for all $i,b\in\set{0,1}$.} $\frac{c_{0,Z,0}^l}{\pB{Z}}$ if $\rv{\chi}_{\rm key}^{u}=1$. Otherwise, it takes the value 0.
        \item[\small $\bullet$]  $\chiT{u}$: it takes the value $\omega$ if Alice and Bob observe the measurement outcome associated with the POVM element $\ptrash\dyad{\omega}_A\otimes\identity_{B}$. Otherwise, it takes the value 0.
    \end{itemize}
    \item Alice and Bob compute $\rv{M}_{\rm key}=\sum_u\rv{\chi}_{\rm key}^{u}$,  $\rv{M}_{\rm ph}=\sum_u\rv{\chi}_{\rm ph}^{u}$ and $\rv{M}_{{\rm Q},l}=\sum_u\rv{\chi}_{{\rm Q},l}^{u}$.
\end{enumerate}
\end{mybox}
\end{figure*}
To prove the security of the protocol in the finite-key regime, we consider the Virtual Estimation Protocol (VEP) \cite{zhou2022numerical} introduced in Box~\ref{box:VEP}. The quantum communication phase of this protocol is similar to that of the VP. 
The main difference is that in the VEP each round is designated as a "local $W$-measurement" (\LC{}) round with probability $\alpha$. In these rounds, Alice measures the observable $\Tobs{A}$ on her ancilla system, rather than applying the measurements used in the VP.
At the end of the VEP, Alice observes the quantities $\rv{M}_{\rm key}=\sum_u\rv{\chi}_{\rm key}^{u}$,  $\rv{M}_{\rm ph}=\sum_u\rv{\chi}_{\rm ph}^{u}$, $\rv{M}_{{\rm Q},l}=\sum_u\rv{\chi}_{{\rm Q},l}^{u}$ and $\rv{M}_{\Tsind}=\sum_u\chiT{u}$, which depend on some virtual random variables defined in Box~\ref{box:VEP}.

Now, let $\mathcal{F}_{u}$ be a filtration identifying the RVs $\rv{\chi}_{\rm ph}^{u'}$, $\rv{\chi}_{{\rm Q},l}^{u'}$ (for all $l$) and $\chiT{u'}$, for $u'\in\set{1,\dots,u}$, and let $\rho_{AB}^{\mathcal{F}_{u-1}}$ be the state of Alice's ancilla system $A_u$ and Bob's received system $B_u$ in the VEP after conditioning on $\mathcal{F}_{u-1}$. Then, it is clear that
\begin{align}\label{eq:Expectations1}
    \mathbb{E}\left[\rv{\chi}_{\rm ph}^{u}|\mathcal{F}_{u-1}\right]
    &=(1-\ptrash)\Tr\set{\hat{E}_{AB}^{\rm ph}\rho_{AB}^{\mathcal{F}_{u-1}}},\nonumber
    \\
    \mathbb{E}\left[\rv{\chi}_{{\rm Q},l}^{u}|\mathcal{F}_{u-1}\right]
    &=(1-\ptrash)\Tr\set{\hat{Q}_{AB}^{l}\rho_{AB}^{\mathcal{F}_{u-1}}},\quad\forall l,
    \\
    \mathbb{E}\left[\chiT{u}|\mathcal{F}_{u-1}\right]
    &=\ptrash\Tr\set{\Tobs{A}\otimes\identity_B\rho_{AB}^{\mathcal{F}_{u-1}}}.\nonumber
\end{align}
Note that, whereas in the absence of source correlations $\set{\chiT{u}}_u$ defines a sequence of independent RVs, $\set{\mathbb{E}\left[\chiT{u}|\mathcal{F}_{u-1}\right]}_u$ is generally a sequence of dependent RVs, as these conditional expectations are RVs that depend on Eve's potential coherent attack. From~\cref{eq:operator_ineq_2,eq:Expectations1}, it follows that
\begin{equation}\label{eq:ineq_conditional_expect_Main}
\begin{split}
\sum_{u=1}^N\mathbb{E}\left[\rv{\chi}_{\rm ph}^{u}|\mathcal{F}_{u-1}\right] 
\leq
&
\sum_l\eta_l^*\sum_{u=1}^N\mathbb{E}\left[\rv{\chi}_{{\rm Q},l}^{u}|\mathcal{F}_{u-1}\right]
\\
&+  \frac{1-\ptrash}{\ptrash}\sum_{u=1}^N\mathbb{E}\left[\chiT{u}|\mathcal{F}_{u-1}\right].
\end{split}
\end{equation}
Next, one can invoke Azuma's~\cite{azuma} or Kato's~\cite{kato} inequalities to link the previous sums of conditional expectations with the sums of the actual RVs, namely $\rv{M}_{\rm ph}$, $\rv{M}_{{\rm Q},l}$ and $\rv{M}_{\Tsind}$. In particular, by employing Kato's inequality, we prove the following result in Appendix~\ref{app:finite-key} (see~\cref{eq:Kfinal}): let assume that, prior to the protocol execution, one can guarantee that the probabilistic statement
\begin{equation}\label{eq:priori_bound_MT}
    \Pr[\rv{M}_{\Tsind}> M_{\Tsind}^{\rm U}]\leq \epsproB
\end{equation}
holds for certain $M_{\Tsind}^{\rm U}$ and $\epsproB$. Then, the inequality
\begin{widetext}
\begin{equation}\label{eq:Kfinal_main}
\begin{split}
    \rv{M}_{\rm ph} 
    \leq 
    \frac{N}{\sqrt{N}-2\tilde{a}_{\rm U}}\left[\frac{1}{\sqrt{N}}
    \left( \sum_l\eta_l^*K_{\epsilon_l,N,x_{\min}^l,x_{\max}^l}^{\text{sign}(\eta_l^*)}(\rv{M}_{{\rm Q},l})
    +
    \frac{1-\ptrash}{\ptrash} K_{\epsproK,N,\omega_{\min},\omega_{\max}}^{1}(M_{\Tsind}^{\rm U})
    \right)
    +\tilde{b}_{\rm U}-\tilde{a}_{\rm U}\right]
\end{split}
\end{equation}
\end{widetext}
holds except with probability $3\epsproK+\epsproB$, where the quantities $\tilde{a}_{\rm U}$ and $\tilde{b}_{\rm U}$ are computed prior to the protocol execution according to~\cref{eq:optimal_ab_tilde}; $x_{\min}^l$ and $x_{\max}^l$ ($\omega_{\min}$ and $\omega_{\max}$) denote the minimum and maximum possible outcome of the RVs $\rv{\chi}_{{\rm Q},l}^u$ ($\rv{\chi}_{{\Tsind}}^u$), respectively; the quantities $\epsilon_l$ represent some arbitrary failure probabilities satisfying $\sum_{l} \epsilon_l=\epsproK$; the functions $K_{\epsilon,N,x_{\min},x_{\max}}^{\pm 1}$ are defined in~\cref{eq:Kfunctions_unnormalized}; and $\text{sign}(\cdot)$ denotes the sign function. We refer the reader to Appendix~\ref{app:finite-key} for further details.

The quantities $\rv{M}_{{\rm Q},l}$ in~\cref{eq:Kfinal_main} can be readily related to observable statistics in the actual protocol. Therefore, to use~\cref{eq:Kfinal_main} one only needs to obtain a bound $M_{\Tsind}^{\rm U}$ satisfying~\cref{eq:priori_bound_MT}. 
This bound is derived in Appendix~\ref{app:correlated_sources}, where we prove the following result:

\noindent {\bf Theorem \ref{thm:correlations}. }(Probabilistic upper bound on $\rv{M}_{\Tsind}$) {\itshape 
Consider the virtual estimation protocol (VEP) introduced in Box~\ref{box:VEP}, and let $\ket{\Psi^{\epsleak}}_{A_1^NC_1^N}$, $N$, $\chiT{u}$, $\rv{M}_{\Tsind}$, $\alpha$, and $\Tobs{A}$ be as defined there, where $\ket{\Psi^{\epsleak}}_{A_1^NC_1^N}$ is given in~\cref{eq:global_correlated_state_main} and depends on certain set of reference states $\set{\ket{\phi_i}}_{i\in\set{0,1,\dots,\nstates}}$, setting probabilities $\set{\piA{i}}_{i\in\set{0,1,\dots,\nstates}}$, correlation length $L$, and on the parameter $\epsleak\geq0$.
Let the quantities $E_{\chi_{\Tsind}}^{\rm L}$, $E_{\chi_{\Tsind}}^{\rm U}$ and $E_{\chi_{\Tsind}^2}^{\rm U}$ satisfy
\begin{equation}\label{eq:optimizations_theorem_main}
\begin{split}
E_{\chi_{\Tsind}}^{\rm L}
&\leq \ptrash\min_{G\in\mathcal{S}_{\rm const}}\left[\Tr\set{\Tobs{A} \rho_{A}^{\varepsilon'}}\right],
\\
E_{\chi_{\Tsind}}^{\rm U}
&\geq\ptrash\max_{G\in\mathcal{S}_{\rm const}}\left[\Tr\set{\Tobs{A} \rho_{A}^{\varepsilon'}}\right],
\\
E_{\chi_{\Tsind}^2}^{\rm U}
&\geq\ptrash\max_{G\in\mathcal{S}_{\rm const}}\left[\Tr\set{(\Tobs{A})^2 \rho_{A}^{\varepsilon'}}\right],
\end{split}
\end{equation}
where
\begin{equation}
G:=
\begin{bmatrix}
G_{\phi} & G_{\rm c}     \\
G_{\rm c}^{\dagger}   & G_{\perp}   \\
\end{bmatrix}
\end{equation}
is a $2\nstates \times 2\nstates$ Hermitian block matrix built from the $\nstates\times\nstates$ matrices $G_{\phi}$, $G_{\rm c}$, and $G_{\perp}$; $\rho_{A}^{\varepsilon'}$ is a density matrix whose entries depend linearly on those of $G$, with $[\rho_{A}^{\varepsilon'}]_{ij} := \sqrt{\piA{i}\piA{j}}\big[(1-\varepsilon')[G_{\phi}]_{ji} +\varepsilon'[G_{\perp}]_{ji} + \sqrt{(1-\varepsilon')\varepsilon'} ([G_{\rm c}]_{ij}^{*} + [G_{\rm c}]_{ji})$ and $\epsleak':=1-(1-\epsleak)^{L+1}$; and $\mathcal{S}_{\rm const}$ is a convex set defined by the constraints
\begin{equation}\label{eq:SDP_gram_main}
    \begin{split}
        \quad & G \succeq 0,\\
        \quad & [G_{\phi}]_{ij}=\braket{\phi_{i}}{\phi_{j}}, \quad i,j\in\set{0,1,\dots,n_{A}},\\
        \quad & [G_{\rm c}]_{ii}=0, \quad i\in\set{0,1,\dots,n_{A}},\\
        \quad & [G_{\perp}]_{ii}=1,\quad i\in\set{0,1,\dots,n_{A}}.
    \end{split}
\end{equation}
Then, for any $\epsproB>0$, we have that
\begin{align}
\Pr[\rv{M}_{\Tsind}\geq M_{\Tsind}^{\rm U}] &\leq \epsproB,
\end{align}
holds for
\begin{align}
M_{\Tsind}^{\rm U}:=& (L+1)\bar{N}_L\left[E_{\chi_{\Tsind}}^{\rm U}+\tilde{\Delta}_{\rm B}\right],
\end{align}
where

\begin{align}
\tilde{\Delta}_{\rm B} := &\frac{c}{3\bar{N}_L} \ln\frac{L+1}{\epsproB} \nonumber
\\
&+ \sqrt{ \left(\frac{c}{3\bar{N}_L} \ln\frac{L+1}{\epsproB}\right)^2  + 2 \frac{V_{\chi_{\Tsind}}^{\rm U}}{\bar{N}_L} \ln\frac{L+1}{\epsproB}}, \nonumber
\\
c := & \max\set{\omega_{\rm max}-E_{\chi_{\Tsind}}^{\rm L},E_{\chi_{\Tsind}}^{\rm U}-\omega_{\rm min}},\nonumber
\\
V_{\chi_{\Tsind}}^{\rm U}
:=&
E_{\chi_{\Tsind}^2}^{\rm U} - \left[ \max(0, E_{\chi_{\Tsind}}^{\rm L}) + \min(0, E_{\chi_{\Tsind}}^{\rm U}) \right]^2,\nonumber
\end{align}
$\bar{N}_{L}:=\lceil N/(L+1) \rceil$, and $\omega_{\min}$ ($\omega_{\max}$) denotes the minimum (maximum) possible value of the RVs $\chiT{u}$.
}

\cref{thm:correlations} provides the bound required to use~\cref{eq:Kfinal_main}. 
Importantly, the optimization problems in~\cref{eq:optimizations_theorem_main} are SDPs, meaning that rigorous bounds on the relevant quantities can be obtained by numerically solving their duals, as any dual-feasible solution constitutes a valid bound.
The tightness of these optimizations is affected by the correlation length $L$ via the effective fidelity-type parameter $\varepsilon'=1-(1-\epsleak)^{L+1}\approx (L+1)\varepsilon$, which grows approximately linearly with $L$. Remarkably, our results can be adapted to other correlation models, including models in which the correlations have an unbounded length~\cite{pereira2024quantum,agulleiro2025modeling,curras-lorenzoRigorousPhaseerrorestimation2026}.

In Appendix~\ref{app:finite-key} we extend \cref{eq:Kfinal_main} to cover also the case of imperfect receivers suffering a detection efficiency mismatch. Moreover, in that Appendix we derive a probabilistic upper bound $N_{\rm ph}^{\rm U}\left(\rv{M}_{{\rm Q},l},\rv{M}_{\rm key},\rv{N}_{\rm key}-\rv{M}_{\rm key}\right)$ (see~\cref{eq:final_bound_Nph}) on the number of phase errors $\rv{N}_{\rm ph}$ in the VP via random sampling arguments. In doing so, we obtain a bound in the form of~\cref{eq:probabilistic_statement}, which allows to compute the secret-key length (via~\cref{eq:secret-key-length}) of general QKD schemes with flawed, leaky, correlated, and partially characterized transmitted states at Alice, and basis-dependent detection efficiency at Bob.

\section{Results}\label{sec:Results}

\cref{fig:SKR_BB84_main} shows the secret-key rate of a single-photon BB84 scheme obtained with our numerical approach in the presence of multiple device imperfections. Specifically, we model SPFs through the particular choice of reference states $\set{\ket{\phi_i}_C}_i$ introduced in~\cref{eq:ideal_states_BB84}, in which the encoding angles---representing, for instance, the signal polarizations---deviate from their ideal values $\theta_i$ according to $\theta_i\to(1+\Delta\theta/\pi)\theta_i$, and we fix $\Delta\theta=0.063$ \cite{xu2015experimental,honjo2004differential} (see Appendix~\ref{app:applications_to_QKD_Protocols}). Moreover, we consider that the transmitted states are partially characterized and correlated, which is modeled with the assumption given in~\cref{eq:assumption1extra_main}. For illustration purposes, we consider two values for the parameter $\epsleak\in\set{10^{-5},0}$, and three values for the maximum correlation length $L\in\set{0,1,2}$. As for the receiver, we assume a pessimistic dark-count probability $p_d=10^{-6}$, a detection efficiency $\eta_{\rm det}=0.73$ \cite{pittaluga2021600}, and we fix the detector's tolerances to $\Delta_{\eta}=\Delta_{\rm dc}\in\set{0,0.05}$ (see Appendix~\ref{app:finite-key}). In addition, we fix the number of transmitted signals to $N=10^{10}$, the security parameters to $\epsilon_{\rm sec}=\epsilon_{\rm EV}=10^{-10}$, and we set for simplicity $\epsilon_{\rm PA}=\epsilon_{\rm sec}/2$, $\epsilon_{\rm pro}=(\epsilon_{\rm sec}/4)^2$, $\epsilon_{\rm proK}=\epsilon_{\rm proB}=\epsilon_{\rm proS}=\epsilon_{\rm pro}/6$ and $\epsilon_{\rm dep-1}^2=\epsilon_{\rm dep-2}^2=\epsilon_{\rm pro}/12$ (see Appendix~\ref{app:finite-key} for the definitions of these security parameters). As for the quantum channel, we consider a lossy but noiseless fiber-based channel model with attenuation coefficient $\alpha_{\rm dB}=0.2$ dB/km. For each distance, we optimize over the probability to select the $Z$-basis (which for simplicity we consider equal for Alice and Bob), and the parameters $\ptrash$ and $\gamma_{\Tsind}$ (see Appendix~\ref{app:applications_to_QKD_Protocols}).

\begin{figure}
    \centering
    \includegraphics[width=0.99\columnwidth]{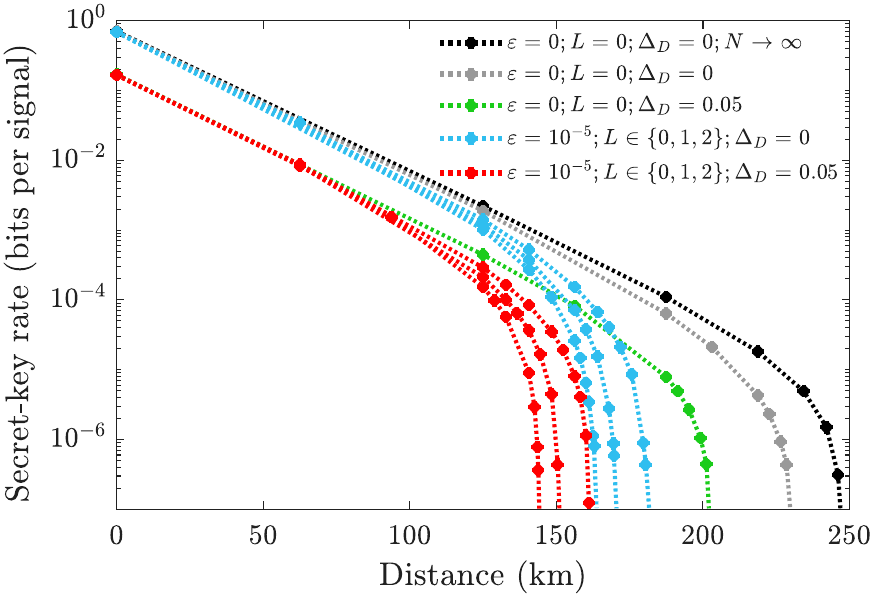}
    \caption{Optimized secret-key rate of a single-photon BB84 scheme in the presence of multiple device imperfections. For all lines, we consider SPFs with $\Delta\theta=0.063$ \cite{xu2015experimental,honjo2004differential}, and we set the number of protocol rounds to $N=10^{10}$ (except for the black line, which assumes the ideal asymptotic case in the absence of any imperfection). Besides, we consider that the quantum states emitted by Alice are partially uncharacterized and correlated, which is modeled via the parameter $\epsleak$ and the correlation length $L$. For the receiver, we fix the detector tolerances to the same value, i.e.,  $\Delta_{\eta}=\Delta_{\rm dc}=\Delta_{D}$.\label{fig:SKR_BB84_main}}
\end{figure}

The results shown in \cref{fig:SKR_BB84_main} demonstrate that our numerical method is robust to multiple realistic device imperfections---even when all of them are taken into account simultaneously---while guaranteeing security against coherent attacks. In particular, the red curves jointly incorporate most relevant imperfections present in a BB84 setup, yet achieve a maximum transmission distance of $\sim 145$km, exceeding half of the maximum distance attainable under ideal conditions. For further details about the application of the security proof to this particular scenario, we refer the reader to Appendix~\ref{subsec:bb84_example}. There we also compare our numerical method with the state-of-the-art analytical security proof~\cite{curras2023security2}, which accommodates the same device imperfections considered in this scenario. We find that both approaches yield comparable performance, with our numerical method slightly outperforming the analytical one across all regimes considered. This is not unexpected, as BB84 is a well-studied protocol for which the analysis in~\cite{curras2023security2} is known to provide tight bounds. One of the main advantages of the numerical approach lies in its versatility, as it delivers tight security bounds for protocols where no analytical proof is available, or where the existing analytical proofs are suboptimal. A representative example is the coherent-light-based measurement-device-independent (MDI) protocol introduced in~\cite{navarrete2021practical}, whose non-qubit encoding differs markedly from that of BB84 and for which our method achieves a significant performance improvement with respect to the available analytical proofs~\cite{navarrete2021practical,curras2023security2}, as shown in Appendix~\ref{subseq:MDI}.

\begin{figure}
    \centering
    \includegraphics[width=0.99\columnwidth]{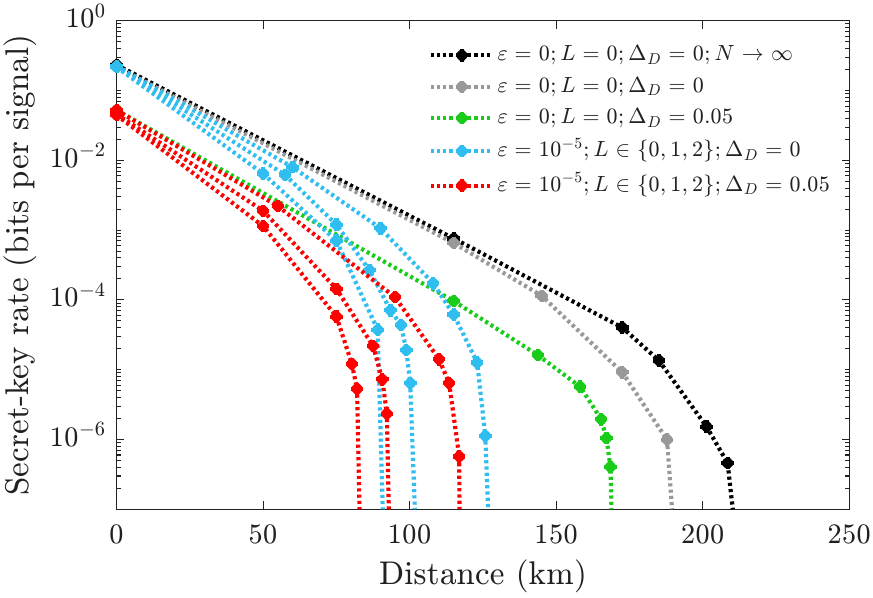}
    \caption{Optimized secret-key rate of a decoy-state BB84 QKD scheme with a laser source and three intensity settings in the presence of multiple device imperfections. For all lines, we consider SPFs with $\Delta\theta=0.063$ \cite{xu2015experimental,honjo2004differential}, and we set the number of protocol rounds to $N=10^{11}$ (except for the black line, which considers the ideal asymptotic case in the absence of any imperfection). Besides, we consider that the quantum states emitted by Alice are partially characterized and correlated, which is modeled via the parameter $\epsleak$ and the correlation length $L$. For the receiver, we fix the detector tolerances to the same value, i.e.,  $\Delta_{\eta}=\Delta_{\rm dc}=\Delta_{D}$.\label{fig:SKR_decoy_main}}
\end{figure}

This versatility is especially relevant in the context of decoy-state schemes, where, to date, no security proof---analytical or numerical---can accommodate all device imperfections considered in this work. In particular, \cref{fig:SKR_decoy_main} shows the secret-key rate of a decoy-state QKD scheme with coherent light generated by a laser in the presence of SPFs, information leakage, bit/basis and intensity correlations, and detection efficiency mismatch.

We model SPFs as in the BB84 setting by considering that Alice's encoding angles deviate from their ideal values $\theta_i$ according to $\theta_i\to(1+\Delta\theta/\pi)\theta_i$ (see~\cref{eq:ideal_states_DS}). Besides, we assume that Alice's emitted states are only partially characterized and correlated, as described by the assumption in~\cref{eq:assumption1thm_decoy}. For the channel model and device imperfections, we use the same parameter values as in the previous example, namely $\Delta\theta=0.063$, $\epsleak\in\set{10^{-5},0}$, $L\in\set{0,1,2}$, $\Delta_{\eta}=\Delta_{\rm dc}=\Delta_{D}\in\set{0,0.05}$, $p_d=10^{-6}$, and $\eta_{\rm det}=0.73$, and we fix the number of transmitted signals to $N=10^{11}$. 
To characterize intensity correlations, we use the model provided in Appendix~\ref{app:model_correlations}, with  nearest-neighbor correlation factor $\epsilon_{\rm ic}^1=0.03$, and a decay parameter $\zeta=6$---which aligns with the results in~\cite{agulleiro2025modeling}. In addition, we fix $\epsilon_{\rm sec}=\epsilon_{\rm EV}=10^{-10}$, and we set for simplicity $\epsilon_{\rm PA}=\epsilon_{\rm sec}/2$, $\epsilon_{\rm pro}=(\epsilon_{\rm sec}/4)^2$, $\epsilon_{\rm proK}=\epsilon_{\rm proB}=\epsilon_{\rm proS}=\epsilon_{\rm dep-1}^2=\epsilon_{\rm dep-2}^2=\epsilon_{\rm pro}/10$ (see Appendix~\ref{app:general_protocols_decoy_state}), and use $n_{\rm cut}=3$.

For each distance, we optimize the parameter $\alpha$, the basis selection probability $p_Z$---for simplicity, we select the same probability for Alice and Bob---the parameter $\gamma_{\Tsind}$ (see Appendix~\ref{app:general_protocols_decoy_state} for further details), the average values of the two higher intensities, and the probabilities of selecting each intensity setting. The weakest intensity is fixed to $10^{-5}$ to account for the finite extinction ration of practical intensity modulators.  

The simulations plotted in \cref{fig:SKR_decoy_main} demonstrate that the numerical security proof is robust against the aforementioned device imperfections in the decoy-state scenario, reaching up to $\sim 83$ km even in the presence of second-order intensity and bit/basis correlations with $\epsleak=10^{-5}$, and a basis-dependent detection efficiency mismatched with $\Delta_{D}=0.05$.

\section{Conclusions}
We have introduced a numerical finite-key security framework for QKD protocols against coherent attacks in the presence of realistic device imperfections. 
The framework accommodates general state-preparation flaws and side-channel information leakage, and remains applicable even when the emitted states are only partially characterized, requiring only lower bounds on their fidelity to suitable reference states.
Crucially, it also incorporates pulse correlations—one of the principal outstanding challenges for previous numerical security proofs—as well as detector imperfections such as basis-dependent detection efficiency. The approach is broadly applicable to prepare-and-measure, measurement-device-independent, and entanglement-based QKD protocols, while outperforming state-of-the-art analytical proofs in regimes where a comparison is possible.

Remarkably, the framework also applies to decoy-state protocols in the presence of general imperfections, including cross correlations between bit, basis, and intensity settings. In particular, it captures scenarios in which both the emitted Fock states and their emission probabilities depend not only on the current settings, but also on previous intensity and bit/basis choices—correlations that naturally arise in high-speed QKD implementations. Our results bridge an important gap between theoretical security analyses and the behaviour of practical QKD systems, constituting a significant step towards the deployment and certification of secure quantum communication technologies.

\section{Acknowledgments}
This work was supported by the Galician Regional Government (consolidation of Research Units: AtlantTIC); 
the Spanish Ministry of Economy and Competitiveness (MINECO); 
the Fondo Europeo de Desarrollo Regional (FEDER) through the grant No. PID2024-162270OB-I00; 
the "Hub Nacional de Excelencia en Comunicaciones Cuánticas" funded by the Ministerio para la Transformación Digital y de la Función Pública and the European Union NextGenerationEU; 
the European Union's Horizon Europe Framework Programme under the Marie Sklodowska-Curie Grant No. 101072637 (Project QSI); 
the project "Quantum Secure Networks Partnership" (QSNP, grant agreement No 101114043); 
the European Union via the European Health and Digital Executive Agency (HADEA) under the Project QuTechSpace (grant 101135225);
the European Union under the Project IberianQCI (grant 101249593), and the Programa de Cooperación Interreg VI-A España--Portugal (POCTEP) 2021--2027 through the project QUANTUM\_IBER\_IA. 
A.N. acknowledges financial support from the Xunta de Galicia (Consellería de Educación, Ciencia, Universidades e Formación Profesional) through a Xunta de Galicia Postdoctoral Fellowship (No. ED481B-2025/113). 
G.C.-L. acknowledges funding from the European Union's Horizon Europe research and innovation programme under the Marie Skłodowska-Curie Postdoctoral Fellowship grant agreement No.\ 101149523.

\clearpage
\onecolumngrid

\appendix
\section{Application of the method to specific QKD protocols}\label{app:applications_to_QKD_Protocols}
The security analysis introduced in~\cref{sec:SecurityAnalysisNew} of the main text is versatile and can be readily applied to many different QKD schemes based on single photons or practical signals such as weak coherent pulses that suffer from different types of device imperfections. In this Appendix, we illustrate this with various examples.

\subsection{Single-photon BB84 protocol}\label{subsec:bb84_example}
Here we assume that Alice has a single-photon source and her transmitted quantum states satisfy~\cref{eq:assumption1thm_main} for certain parameter $\epsleak \ge \epsleak^{i_u}\quad \forall\, i_u$, a given correlation length $L$, and a set of reference states. Specifically, we consider the following reference states
\begin{equation}\label{eq:ideal_states_BB84}
\begin{split}
    \ket{\phi_i}_{C}&=\cos(\kappa_i\theta_i)\ket{0}_{C}+\sin(\kappa_i\theta_i)\ket{1}_{C},\\
\end{split}
\end{equation}
where $i\in\set{0,1,2,3}$, the phase parameters are given by $(\theta_0,\theta_1,\theta_2,\theta_3)=(0,\frac{\pi}{2},\frac{\pi}{4},\frac{3\pi}{4})$, and the coefficients $\kappa_i$ quantify the characterized deviations inside the qubit space spanned by $\set{\ket{0}_C,\ket{1}_C}$ due to a flawed state preparation. In particular, for illustration purposes we consider here that $\kappa_i=(1+\Delta\theta/\pi)$ for certain deviation term $\Delta\theta$, which is a model that has been used in previous works~\cite{pereira2020quantum,curras2025numerical,curras2023security2}. That is, the description of the states given in~\cref{eq:ideal_states_BB84} already accounts for SPFs, as this type of imperfection is typically easy to characterize experimentally by the users. We emphasize, however, that this selection of reference states is just a particular example, as these states can be chosen freely. As usual, the $Z$- and $X$-basis states are those associated with $i\in\set{0,1}$ and $i\in\set{2,3}$, respectively.

As for Bob, in the BB84 protocol he uses a single measurement basis for testing, namely the $X$ basis. Besides, let us assume only for the moment that he receives either a single-photon state or a vacuum state---this assumption is in fact not necessary, as we show below, but simplifies the discussion. In the absence of imperfections at Bob's side, a convenient matrix representation of the receiver's measurement POVM elements in the basis $\set{\ket{\perp},\ket{0},\ket{1}}$ is simply given by
\begin{equation}
    \hat{\Gamma}_B^{\perp|\beta}=\dyad{\perp_{\beta}} \equiv{\dyad{\perp}}=
\begin{bmatrix}
1 & 0 & 0 \\
0 & 0 & 0 \\
0 & 0 & 0
\end{bmatrix},
\quad
(\beta\in\set{Z,X}),
\end{equation}
\text{and}
\begin{equation}
\hat{\Gamma}_B^{b|\beta}=\dyad{b_{\beta}}=
\begin{bmatrix}
0 & 0 \\
0 & \hat{\Gamma}_B^{b|\beta,\text{det}} 
\end{bmatrix}
\quad
(b\in\set{0,1}, \beta\in\set{Z,X}),
\end{equation}
with 
\begin{equation}\label{eq:four-matrices-POVM}
\begin{array}{cc}
\hat{\Gamma}_B^{0|Z,\text{det}}=\dyad{0_Z}=
\begin{bmatrix}
1 & 0 \\
0 & 0 
\end{bmatrix},
&\quad
\hat{\Gamma}_B^{1|Z,\text{det}}=\dyad{1_Z}=
\begin{bmatrix}
0 & 0 \\
0 & 1 
\end{bmatrix},
\\[1em]
\hat{\Gamma}_B^{0|X,\text{det}}=\dyad{0_X}=
\frac{1}{2}
\begin{bmatrix}
1 & 1 \\
1 & 1 
\end{bmatrix},
&\quad
\hat{\Gamma}_B^{1|X,\text{det}}=\dyad{1_X}=
\frac{1}{2}
\begin{bmatrix}
1 & -1 \\
-1 & 1 
\end{bmatrix}.
\end{array}
\end{equation}
The phase-error operator in this simple scenario is given by
\begin{equation}
\hat{E}^{\rm ph}_{AB}=\pB{Z}\dyad{0_X}_A\otimes \hat{\Gamma}_B^{1|X} + \pB{Z}\dyad{1_X}_A\otimes \hat{\Gamma}_B^{0|X}.
\end{equation}
That is, here we assume the BIDE condition, i.e. $\hat{\Gamma}^{\loss|Z}_B=\hat{\Gamma}^{\loss|X}_B$, which allows us to choose $\hat{\Gamma}_B^{b|V}=\hat{\Gamma}_B^{b|X}$.

\begin{remark}\label{rem:remark_detectors}
Importantly, even though a specific representation of Bob's $X$-basis POVM elements is required to run the SDP of~\cref{eq:dualSDP2} and obtain an operator inequality in the form of~\cref{eq:operator_ineq_2}, a precise characterization of Bob's POVM is not required in practice. To see this, we first note that the statistics associated with Bob's $Z$-basis measurement are irrelevant to estimate the phase-error rate, and so only Bob's $X$-basis measurement operators are employed in~\cref{eq:dualSDP2}. Now, if one plugs the ideal $X$-basis operators introduced in~\cref{eq:four-matrices-POVM} to solve~\cref{eq:dualSDP2}, the resulting operator inequality---i.e., the one given in~\cref{eq:operator_ineq_2}---can be rewritten as
\begin{equation}\label{eq:operators-separated}
    \hat{O}_{A}^{(0_X)}\otimes \dyad{0_{X}}_B + \hat{O}_{A}^{(1_X)}\otimes \dyad{1_{X}}_B  +\hat{O}_{A}^{(\perp_X)}\otimes \dyad{\perp}_B \succeq 0,
\end{equation}
where the operator $\hat{O}_{A}^{(b_X)}$ is constructed from Alice's operators that are associated with Bob's POVM element $\dyad{b_X}_B$ with $b\in\set{0,1,\perp}$. Due to the orthogonality of the states $\dyad{b_X}_B$, \cref{eq:operators-separated} can be separated into three independent operator inequalities of the form $\hat{O}_{A}^{(b_X)} \succeq 0$, $b\in\set{0,1,\perp}$, that involve only Alice's operators. This means that, for any uncharacterized operators $\tilde{\Gamma}_B^{X,b}\succeq 0$, the inequalities $\hat{O}_{A}^{(b_X)}\otimes\tilde{\Gamma}_B^{X,b} \succeq 0$ must also hold. And since the sum of positive semi-definite operators is also positive semi-definite, we have that~\cref{eq:operators-separated}---and by extension~\cref{eq:operator_ineq_2}---also holds after making the substitutions $\dyad{b_X}_B\to\tilde{\Gamma}_B^{X,b}$. 

The previous argumentation implicitly assumes that the BIDE condition is satisfied. This guarantees that $\Gamma_B^{Z,\perp}=\Gamma_B^{X,\perp}$, which allows us to select $\Gamma_B^{V,b}=\Gamma_B^{X,b}$. Nonetheless, as explained in Appendix~\ref{app:finite-key}, the techniques introduced in~\cite{curras2025security} can be incorporated into our security proof for it to apply to scenarios with a basis-dependent detection efficiency mismatch (see Appendix~\ref{app:finite-key} for further details).
\end{remark}

Now, regarding the coefficients of the linear combinations introduced in~\cref{eq:linear_comb_c}, we note that if $\Delta\theta=0$---i.e., when there are no SPFs---the $X$-basis bit error probability is typically sufficient to obtain a good estimation of the phase-error probability. Specifically, in this particular scenario one could consider a single value of the index $l$, namely $l=1$, and set
\begin{equation}
c^{1}_{i,\beta,b} = 
\begin{cases}
1 &\text{ if } (i,b,\beta)\in\set{(2,1,X),(3,0,X)},\\
0 &\text{ otherwise. }
\end{cases}
\end{equation}
In general, however, additional statistics are required to tightly estimate the phase-error probability~\cite{tamaki2014loss}. In particular, one can directly use all the statistics associated with detected events in which Bob selects the $X$ basis---which in this protocol correspond exactly to eight possible events. For instance, one could simply consider some binary coefficients $c^{l}_{i,\beta,b}$ with $l=1,\dots,8$, defined as
\begin{equation}
c^{1+i'+4b'}_{i,\beta,b} = 
\begin{cases}
1 &\text{ if } i=i',b=b', \beta=X, \\
0 &\text{ otherwise, }
\end{cases}
\end{equation}
where $i'\in\set{0,1,2,3}$ and $b'\in\set{0,1}$. Given these coefficients, one can determine the guesses $q^{\rm gs}_l$ and the operators $\hat{Q}^l_{AB}$. In any case, note that this is just a simple example that employs all the available detection statistics.

As for the fictitious characterization measurement, one can use the following set of operators~\cite{zhou2022numerical}
\begin{equation}\label{eq:Tk_BB84}
\hat{T}_A^{i,j} = 
\begin{cases}
\frac{1}{2}\ketbra{i}{j} + \frac{1}{2}\ketbra{j}{i} &\text{ if } i\geq j, \\
\frac{1}{2\text{i}}\ketbra{i}{j} - \frac{1}{2\text{i}}\ketbra{j}{i} &\text{ if } i < j,
\end{cases}
\end{equation}
where $i,j\in\set{0,1,2,3}$. Given~\cref{eq:ideal_states_BB84,eq:Tk_BB84}, one can obtain good guesses by computing $t_{i,j}^{\rm gs}=\Tr{\hat{T}^{i,j}_A\rho_A^{\epsleak'}}$ for the marginal state
\begin{equation}\label{eq:rhoA_eps_BB84}
    \rho_{A}^{\varepsilon'} = \sum_{i,j}\sqrt{p_i^Ap_j^A}\left[(1-\varepsilon')\braket{\phi_j}{\phi_i} + \varepsilon'\braket*{\phi_j^{\perp}}{\phi_i^{\perp}} + \sqrt{(1-\varepsilon')\varepsilon'} (\braket*{\phi_j}{\phi_i^{\perp}} + \braket*{\phi_j^{\perp}}{\phi_i})\right] \ketbra{i}{j}_A,
\end{equation}
where $\varepsilon'=1-(1-\epsleak)^{L+1}$.
Importantly, since $\rho_{A}^{\epsleak'}$ depends on the unknown states $\ket*{\phi^{\perp}_{i}}$, here we shall consider a rather pessimistic---although not necessarily the true worst-case---scenario in which $\braket*{\phi^{\perp}_{i}}{\phi^{\perp}_{j}}=\braket*{\phi^{\perp}_{i}}{\phi_{j}}=0$ for $i\neq j$. We remark that this is not an assumption required in the security proof, but only a supposition to obtain some suitable unconfirmed guesses $t_{i,j}^{\rm gs}$. Naturally, each tuple $(i,j)$ can be associated with a different value of the index $k\in\set{1,\dots,16}$, such that we can rewrite $\hat{T}^{i,j}_A$ as $\hat{T}^{k}_A$ and $t_{i,j}^{\rm gs}$ as $t_{k}^{\rm gs}$. Once all the operators and guesses are specified, one can run the SDP given in~\cref{eq:dualSDP2} and follow the subsequent steps described in the security analysis to obtain the secret-key rate of the protocol. 

We note that~\cref{eq:dualSDP2} can be re-normalized to improve numerical stability and to make its input statistics independent of Alice's settings probabilities $\piA{i}$ (see Appendix~\ref{app:normalizationSDP}).
Moreover, in certain scenarios, \cref{eq:dualSDP2} may return solutions $\vos{\Lambda}_{\rm gs}^*$ that, while close to optimal in the asymptotic scenario, lead to suboptimal phase-error rate estimations in the finite-key regime. This occurs because the concentration bounds used in Appendix~\ref{app:finite-key} introduce deviations that are not accounted for in~\cref{eq:dualSDP2}. Such situations may arise, for instance, when one incorporates redundant statistics into the SDP, such as non-detected events. Moreover, the quantities $\vos{\eta}$ and $\vos{\lambda}$ that minimize the single-photon phase-error probability in~\cref{eq:dualSDP2} may be very large in some cases, which can introduce a significant penalty in the finite-key concentration bounds. For example, the bound provided in~\cref{thm:correlations} depends on the minimum and maximum possible outcomes of the RVs $\chi_{\Tsind}^{u}$, which are given by the eigenvalues of $\Tobs{A}$---together with the value zero---and, by extension, depend on the solutions $\set{\lambda_k^*}_k$. To account for this, one can always introduce additional constraints in the SDP that limit the maximum magnitude of these eigenvalues. For instance, the SDP in~\cref{eq:dualSDP2} can be modified as
\begin{equation}\label{eq:dualSDP3}
\begin{split}
    \min_{\vos{\Lambda}} \quad &  \sum_l\eta_lq^{\rm gs}_l  + \sum_k\lambda_kt^{\rm gs}_k\\
    \text{s.t.} \quad & \hat{E}^{\rm ph}_{AB}  \preceq \sum_l\eta_l\hat{Q}^l_{AB}  +\sum_k\lambda_k\hat{T}^k_{A}\otimes\identity_B,
    \\
    &-\gamma_{\Tsind}\identity_A \preceq \sum_k\lambda_k\hat{T}_A^k \preceq \gamma_{\Tsind}\identity_A,
\end{split}
\end{equation}
where the parameter $\gamma_{\Tsind}$ can be optimized for each considered scenario.

\subsubsection{Simulations}
The secret-key rate attainable in the presence of multiple device imperfections has already been presented in~\cref{sec:Results} of the main text. To benchmark these results, \cref{fig:SKR_BB84_N_and_eps} compares the secret-key rate of the single-photon BB84 protocol achievable in the presence of SPFs and information leakage using both our numerical security proof and a state-of-the-art analytical security proof that accommodates the same device imperfections~\cite{curras2023security2}, and which is known to be nearly optimal in the asymptotic regime~\cite{curras2025numerical,pereira2025optimal}. Here we do not consider bit/basis correlations because the simulations in~\cite{curras2023security2} do not include such scenario.

As before, we fix the security parameters to $\epsilon_{\rm sec}=\epsilon_{\rm EV}=10^{-10}$, and, for simplicity, set $\epsilon_{\rm PA}=\epsilon_{\rm sec}/2$, $\epsilon_{\rm pro}=(\epsilon_{\rm sec}/4)^2$, and $\epsilon_{\rm proK}=\epsilon_{\rm proB}=\epsilon_{\rm proS}=\epsilon_{\rm pro}/6$ (see Appendix~\ref{app:finite-key} for further details). We consider a noiseless fiber-based quantum channel model with attenuation coefficient $\alpha_{\rm dB}=0.2$~dB/km, and assume that Bob's detectors have a detection efficiency $\eta_{\rm det}=0.73$~\cite{pittaluga2021600} and a dark-count probability $p_d=10^{-6}$. The SPF parameter is fixed to $\Delta\theta=0.063$ \cite{xu2015experimental,honjo2004differential}. For our method, we use \texttt{patternsearch} and/or \texttt{surrogateopt} routines from Matlab's Global Optimization Toolbox~\cite{mathworks_global_optimization_toolbox_2025} to optimize, for each distance, the probability that Alice selects the $Z$ basis (which, for simplicity, we take to be equal for Alice and Bob), as well as the parameters $\ptrash$ and $\gamma_{\Tsind}$. For the analytical approach, we similarly optimize the probability that Alice selects the $Z$ basis (again taken equal for Bob) and the probability $p_{Z_C}$ (see~\cite{curras2023security2} for further details). In \cref{fig:SKR_BB84_N_and_eps}a, we fix $N=10^{10}$ and consider different values of $\epsleak$, while in \cref{fig:SKR_BB84_N_and_eps}b we fix $\epsleak=10^{-5}$ and consider different values of $N$. The simulations show that our numerical security proof achieves better results than the analytical approach across all the considered regimes.

\begin{figure}
    \centering
    \includegraphics[width=0.95\columnwidth]{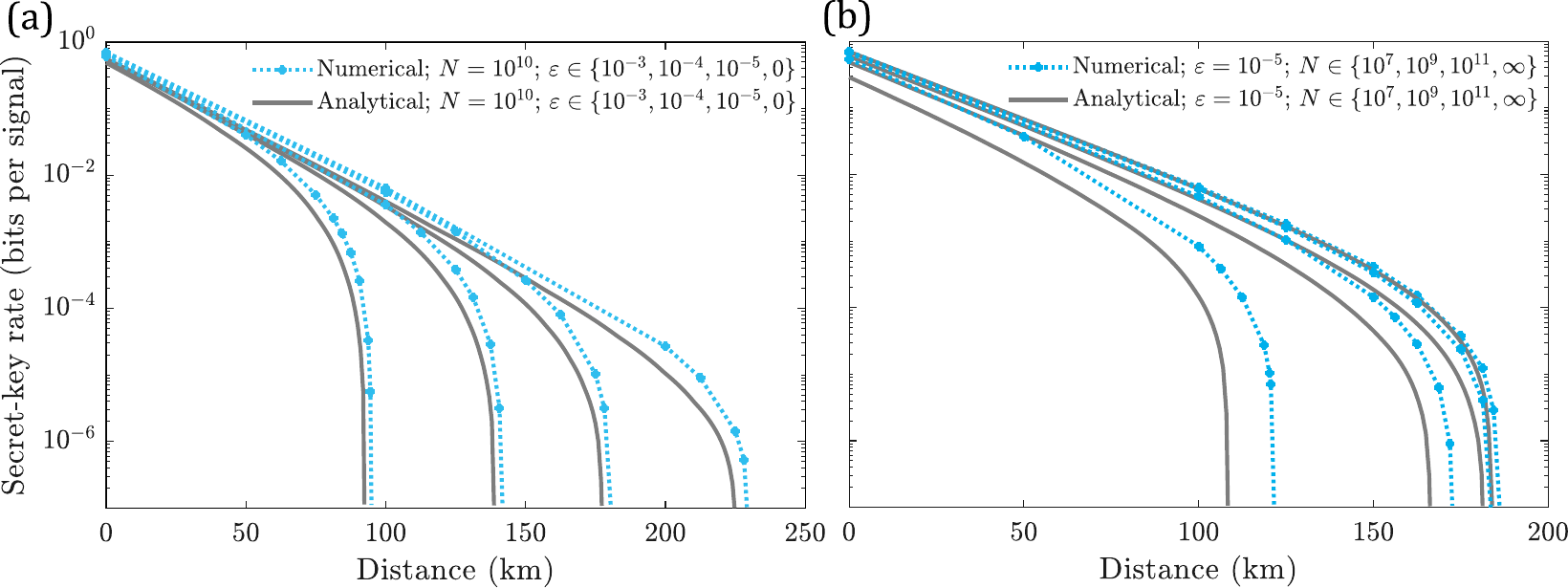}
    \caption{Optimized secret-key rate of a single-photon BB84 scheme obtained with our numerical security proof (blue dotted lines) and the analytical security proof of~\cite{curras2023security2} (gray solid lines). We consider SPFs with parameter $\Delta\theta=0.063$ \cite{xu2015experimental,honjo2004differential}, a pessimistic dark-count probability $p_d=10^{-6}$, and a detection efficiency of Bob's detectors equal to $\eta_{\rm det}=0.73$ \cite{pittaluga2021600}. In \cref{fig:SKR_BB84_N_and_eps}a, we fix $N=10^{10}$ and run the simulations for different values of $\epsleak$, while in \cref{fig:SKR_BB84_N_and_eps}b we fix $\epsleak=10^{-5}$ and run the simulations for different values of $N$.\label{fig:SKR_BB84_N_and_eps}}
\end{figure}

\subsection{MDI-QKD type protocols}\label{subseq:MDI}
\subsubsection*{General setting}
\begin{figure*}
\begin{mybox}{Actual MDI-type Protocol}{APMDI}
\begin{enumerate}
    \item In each round $u\in\set{1,\dots,N}$, 
    \begin{enumerate}
        \item With probability $\piA{a}$ ($\pB{b}$), $a,b\in\set{0,1,\dots,n}$, Alice (Bob) prepares a system $\tilde{A}$ ($\tilde{B}$) in the state $\sigma_{\tilde{A}_u}^{a^u_{u-L}}$ ($\sigma_{\tilde{B}_u}^{b^u_{u-L}}$) and sends this system to an intermediate station controlled by an untrusted party. We consider here that the settings $\set{0,1}$ belong to the key-generation basis, which we denote by the $Z$ basis.
        \item The untrusted party jointly measures the incoming signals and announces a measurement outcome $\Omega\in\set{\Omega_0,\Omega_1,\dots,\Omega_{m-1}}$ via a classical register $C$.
    \end{enumerate}
    \item 
    Alice and Bob publicly exchange some classical information and decide, for each round, if it is considered a \textit{key} or a \textit{test} round. The key rounds are used by Alice and Bob to generate their sifted bits, while the test rounds are used for parameter estimation and all the parameters $a,b$ are revealed.
    Alice and Bob record the number of key rounds ($\rv{N}_{\rm key}^{\Omega}$) associated with each $\Omega$, as well as the number of test rounds for each combination of $a,b,\Omega$ ($\rv{N}_{\rm test}^{a,b,\Omega}$).
    \item Alice and Bob perform error correction, error verification, and privacy amplification.
\end{enumerate}
\end{mybox}
\end{figure*}

A generic description of a measurement-device-independent (MDI) type QKD protocol is given in Box~\ref{box:APMDI}. We assume that~\cref{eq:assumption1thm_main} holds for both Alice's and Bob's transmitted states for a certain $\epsleak\geq \epsleak^{i_u}\quad \forall i_u$, and so we can apply~\cref{lemma:density_op_correlations} (see \cref{app:correlated_sources}) and consider that the emitted quantum states are pure. In particular, let $\ket{\Psi}_{(A\tilde{A}B\tilde{B})_1^N}$ be the joint state of all Alice's and Bob's ancilla systems $A$ and $B$, and the transmitted systems $\tilde{A}$ and $\tilde{B}$, in an entanglement-based view of the protocol. This state can be written as
\begin{equation}\label{eq:rhoAAprimaBBprima_MDI}
\begin{split}
     \ket{\Psi}_{(A\tilde{A}B\tilde{B})_1^N}
     &=
     \sum_{a_1^N}\sum_{b_1^N}\bigotimes_{u=1}^{N}\sqrt{\piA{a_u}\pB{b_u}} \ket{a_u}_{A_u}\ket{b_u}_{B_u}\ket*{\psi_{a_{u-L}^u}}_{\tilde{A}_u}\ket*{\psi_{b_{u-L}^u}}_{\tilde{B}_u}.
\end{split}
\end{equation}
In the channel, Eve performs any arbitrary joint measurement on all systems $\tilde{A}$ and $\tilde{B}$ and announces an outcome $\Omega_u$ for each round through a classical register $C_u$. As before, we shall drop the round index $u$ in the notation below whenever it is clear from the context that we refer to a single round. In each round, Alice and Bob perform a POVM $\set{\hat{S}_{ABC}^{a,b,\Omega,{\rm key}}}_{a,b,\Omega}\cup\set{\hat{S}_{ABC}^{a,b,\Omega}}_{a,b,\Omega}$ on the systems $ABC$, where
\begin{equation}
\begin{split}
    \hat{S}_{ABC}^{a,b,\Omega,{\rm key}} = 
    p_{\text{key}|a,b,\Omega} \dyad{a}_A\otimes\dyad{b}_B\otimes\dyad{\Omega}_C,
    \qquad
    \hat{S}_{ABC}^{a,b,\Omega} = 
    p_{\text{test}|a,b,\Omega} \dyad{a}_A\otimes\dyad{b}_B\otimes\dyad{\Omega}_C,
\end{split}
\end{equation}
with $p_{\text{tag}|a,b,\Omega}$ being the conditional probability that a round is assigned as $\text{tag}\in\set{\text{key},\text{test}}$ given $a,b,\Omega$, which satisfies $p_{\text{key}|a,b,\Omega}+p_{\text{test}|a,b,\Omega}=1$. Besides, since we assume that Alice and Bob extract key from the rounds in which $a,b\in\set{0,1}$, we have that $p_{\text{key}|a,b,\Omega}=0$ if $a,b\notin\set{0,1}$. On the other hand, in the VP, Alice and Bob perform in each round the alternative POVM $\set{\hat{V}_{ABC}^{a,b,\Omega}}_{a,b,\Omega}\cup\set{\hat{S}_{ABC}^{a,b,\Omega}}_{a,b,\Omega}$ on $ABC$, where
\begin{equation}
\begin{split}
    \hat{V}_{ABC}^{a,b,\Omega} &=
    p_{\text{key}|a,b,\Omega} \dyad{a_X}_A\otimes\dyad{b_X}_B\otimes\dyad{\Omega}_C.
\end{split}
\end{equation}
As in the BB84-type scenario, the phase-error operator $\hat{E}^{\rm ph}_{ABC}$ can be written as a sum of operators $\hat{V}_{ABC}^{a,b,\Omega}$, with its specific form being dependent on the QKD scheme considered. 

With the previous ingredients, we shall consider, for illustration purposes, the following modified version of the single-round primal SDP introduced in~\cref{eq:primal1}
\begin{equation}\label{eq:SDPprimalMDI}
    \begin{split}
        \max_{\rho_{ABC}} \quad &  \Tr\set{\hat{E}^{\rm ph}_{ABC}\rho_{ABC}}\\
        \text{s.t.} \quad & \rho_{ABC}\succeq 0, \\
        \quad & \Tr\set{\identity_{AB}\otimes\ketbra{\Omega}{\Omega'}_C \rho_{ABC} } = 0, \quad \Omega\neq\Omega',\\
        \quad & \Tr\set{\hat{T}^k_{AB}\otimes \identity_{C} \rho_{ABC}} = t_k,\quad \forall k,\\
        \quad & \Tr\set{\hat{Q}^l_{ABC} \rho_{ABC}} = q_l,\quad \forall l,
    \end{split}
\end{equation}
where
\begin{equation}\label{eq:Ql_and_ql_MDI}
\begin{split}
    \hat{Q}^l_{ABC}&:= \sum_{\substack{a,b,\Omega\\p(\text{test}|a,b,\Omega)>0}}c_{a,b,\Omega}^l\dyad{a}_A\otimes\dyad{b}_B\otimes\dyad{\Omega}_C,\\
    q_l&:= \sum_{\substack{a,b,\Omega\\p(\text{test}|a,b,\Omega)>0}}c_{a,b,\Omega}^l \piA{a}\pB{b}p_{\Omega|a,b},
\end{split}
\end{equation}
and where the operators $\hat{T}^k_{AB}$ now serve to characterize both systems $A$ and $B$ (see the example in the next subsection about a coherent-light-based MDI-type of protocol). 

The SDP provided in~\cref{eq:SDPprimalMDI} is essentially identical to that in~\cref{eq:primal1} except for its second constraint, which enforces a block-diagonal structure on the density operator with respect to the announcement basis $\set{\ket{\Omega}_C}$, since the announcement system $C$ is classical. In any case, since all the operators in~\cref{eq:SDPprimalMDI} have block-diagonal structure for system $C$, such second constraint can be neglected without affecting the optimal solution~\cite{zhou2022numerical}. A suitable dual SDP for this MDI scenario is simply given by
\begin{equation}\label{eq:dualSDP-MDI}
\begin{split}
    \min_{\vos{\Lambda}} \quad &  \sum_l\eta_lq^{\rm gs}_l  + \sum_k\lambda_kt^{\rm gs}_k\\
    \text{s.t.} \quad & \hat{E}^{\rm ph}_{ABC}  \preceq \sum_l\eta_l\hat{Q}^l_{ABC}  +\sum_k\lambda_k\hat{T}^k_{AB}\otimes\identity_C,
    \\
    & -\gamma_{\Tsind}\identity_{AB}\preceq \sum_k\lambda_k\hat{T}_{AB}^k  \preceq \gamma_{\Tsind}\identity_{AB},
\end{split}
\end{equation}
where $\vos{\Lambda}:=(\eta_1,\dots,\eta_{n_{\rm Q}},\lambda_1,\dots,\lambda_{n_{\rm T}})$ and, similar to the BB84 example, we have included a parameter $\gamma_{\Tsind}$ to cap the maximum eigenvalue of $\Tobs{AB}$ (defined analogously to~\cref{eq:Tk_obs_definition}). We remark that the primal SDP in~\cref{eq:SDPprimalMDI} is introduced only for illustration purposes, since constructing a suitable dual SDP is equally straightforward: one simply inserts, in the desired operator inequality, a linear combination of all available measurement operators, and takes as the objective function the expected value of this linear combination under normal operation of the protocol.

After solving~\cref{eq:dualSDP-MDI}, one can follow analogous steps to those in~\cref{sec:SecurityAnalysisNew} and Appendix~\ref{app:finite-key} to obtain a probabilistic bound on the phase-error rate. The only relevant difference is that in the VEP associated with the MDI-QKD protocol, the RVs $\chi_{{\rm Q},l}^{u}$ take the values $\frac{c_{a,b,\Omega}^l}{p_{\text{test}|c_{a,b,\Omega}}}$.

The secret-key length of the protocol can be computed similarly to that in~\cref{sec:SecurityAnalysisNew}. In particular, assuming that the promise
\begin{equation}
    \Pr[\rv{e}_{\rm ph}> e_{\rm ph}^{\rm U}(\set{\rv{N}_{\rm key}^{\Omega}}_\Omega,\set{\rv{N}_{\rm test}^{a,b,\Omega}}_{a,b,\Omega})]\leq \epsilon_{\rm pro},
\end{equation}
holds, then the protocol is $\epsilon_{\rm tot}$-secure with secret-key length
\begin{equation}
\begin{split}
    \rv{l}_{\rm key} ={}&  \rv{N}_{\rm key}\left[1-h\left(e_{\rm ph}^{\rm U}(\set{\rv{N}_{\rm key}^{\Omega}}_\Omega,\set{\rv{N}_{\rm test}^{a,b,\Omega}}_{a,b,\Omega})\right)\right]-\rv{\lambda}_{\rm EC}
    -2\log(1/2\epsilon_{\rm PA})-\log(2/\epsilon_{\rm EV}),
\end{split}
\end{equation}
where $\epsilon_{\rm tot}=\epsilon_{\rm EV}+\epsilon_{\rm sec}$, $\epsilon_{\rm sec}=2\sqrt{\epsilon_{\rm pro}}+\epsilon_{\rm PA}$ (see~\cref{sec:SecurityAnalysisNew} for the definition of the different parameters), and $\rv{N}_{\rm key}=\sum_{\Omega}\rv{N}_{\rm key}^{\Omega}$.

\subsubsection*{Example: coherent-light-based MDI-type of protocol}

To illustrate the MDI scenario, we consider the coherent-light-based MDI type of protocol proposed in~\cite{navarrete2021practical}. In this scheme, each of Alice and Bob prepares three states, which in the absence of device imperfections are given by
\begin{equation}\label{eq:ideal_states_MDI}
    \ket{\phi_0}=\ket{\sqrt{\mu}},
    \qquad
    \ket{\phi_1}=\ket{-\sqrt{\mu}},
    \qquad
    \ket{\phi_2}=\ket{\rm vac},
\end{equation}
where $\ket{\sqrt{\mu}}$ and $\ket{-\sqrt{\mu}}$ are coherent states with mean photon number $\mu$ and opposite phases, and $\ket{\rm vac}$ represents the vacuum state. In the intermediate node, the untrusted third party is supposed to interfere the incoming signals with a 50:50 beamsplitter, whose output ports are monitored with two threshold single-photon detectors, and then announce the measurement results (see~\cite{navarrete2021practical} for further details). There are three possible measurement results: $\Omega=\Omega_{\rm c}$ when only the detector associated with "constructive interference" clicks; $\Omega=\Omega_{\rm d}$ when only the detector associated with "destructive interference" clicks; and $\Omega=\Omega_{\varnothing}$ if none of the previous two events occurs. Only the $Z$-basis states $\ket{\sqrt{\mu}}$ and $\ket{-\sqrt{\mu}}$ are used for key generation---i.e., $p_{{\rm key}|a,b,\Omega}=0$ if either $a=2$ or $b=2$---while all states are used for testing the channel. In the key generation rounds where $\Omega=\Omega_{\rm d}$, Bob flips his bit value $b$.

In this protocol, the phase-error operator can be defined as $\hat{E}^{\rm ph}_{ABC}=\sum_{\Omega}\hat{E}_{\rm ph}^{\Omega}\otimes\dyad{\Omega}_C$, where~\cite{navarrete2021practical}
\begin{equation}
\hat{E}_{\rm ph}^{\Omega}=
p_{\rm key|Z}\dyad{0_X}_A\otimes\dyad{0_X}_B + p_{\rm key|Z}\dyad{1_X}_A\otimes\dyad{1_X}_B,  
\quad  \Omega\in\set{\Omega_{\rm c},\Omega_{\rm d}}.
\end{equation}
That is, here we fix $p_{\rm key|a,b,\Omega}=p_{\rm key|Z}$ for any $a,b\in\set{0,1}$ and $\Omega\in\set{\Omega_{\rm c},\Omega_{\rm d}}$.

As for the coefficients $c_{a,b,\Omega}^l$ in~\cref{eq:Ql_and_ql_MDI}, we use all the available statistics for each $\Omega\in\set{\Omega_{\rm c},\Omega_{\rm d}}$. Specifically, we consider some binary coefficients such that, for each $l$, $c^{l}_{a,b,\Omega}$ is equal to 1 for one of the different tuples $(a,b,\Omega)$, with $\Omega\in\set{\Omega_{\rm c},\Omega_{\rm d}},a,b\in\set{0,1,2}$, and zero otherwise. This means that $l\in\set{1,\dots,18}$. Given these coefficients, one can readily determine the guesses $q^{\rm gs}_l$ and the operators $\hat{Q}^l_{ABC}$ following~\cref{eq:Ql_and_ql_MDI}. 

As for the operators associated with the fictitious characterization rounds, now both users must perform these fictitious measurements in the VEP, which occurs with certain probability $\ptrash$. To define these operators, we consider the mapping $i=a+3b$, such that the joint transmitted states can be re-defined using a single index as $\ket{i}_{AB}\equiv\ket{a}_{A}\ket{b}_{B}$. With this, $\hat{T}^{i,j}_{AB}$ is defined identically as the $\hat{T}^{i,j}_A$ given in~\cref{eq:Tk_BB84}, now with $i,j\in\set{0,1,\dots,8}$.

Given~\cref{eq:ideal_states_MDI,eq:Tk_BB84} together with $\epsleak$, one can use the guesses $t_{i,j}^{\rm gs}=\Tr{\hat{T}^{i,j}_{AB}\rho_{AB}^{\epsleak'}}$, where $\rho_{AB}^{\epsleak'}$ is the state of system $AB$, which is completely analogous to~\cref{eq:rhoA_eps_BB84}. In particular, in this case we assume that the transmitted states of both users admit a description like~\cref{eq:assumption1extra_main} for certain $\tilde{\epsleak}$. These two assumptions can be then relaxed to the following single assumption \cite{curras2023security2,curras2025numerical}
\begin{equation}
\begin{split}
    \ket*{\psi_{i^u_{u-L}}^{\epsleak}}_{\tilde{A}\tilde{B}}
    \equiv
    \ket*{\psi_{a^u_{u-L}}^{\tilde{\epsleak}}}_{\tilde{A}}\ket*{\psi_{b^u_{u-L}}^{\tilde{\epsleak}}}_{\tilde{B}}
    =&(1-\tilde{\epsleak})\ket*{\phi_{a_u}}_{\tilde{A}}\ket*{\phi_{b_u}}_{\tilde{B}} 
    + \tilde{\epsleak}\ket*{\phi_{a^u_{u-L}}^{\perp}}_{\tilde{A}}\ket*{\phi_{b^u_{u-L}}^{\perp}}_{\tilde{B}}\\
    &
    +\sqrt{1-\tilde{\epsleak}}\sqrt{\tilde{\epsleak}}\ket*{\phi_{a^u_{u-L}}^{\perp}}_{\tilde{A}}\ket*{\phi_{b_u}}_{\tilde{B}} + \sqrt{1-\tilde{\epsleak}}\sqrt{\tilde{\epsleak}}\ket*{\phi_{a_u}}_{\tilde{A}}\ket*{\phi_{b^u_{u-L}}^{\perp}}_{\tilde{B}}\\
    =&\sqrt{1-\epsleak}\ket*{\phi_{i_u}}_{\tilde{A}\tilde{B}} + \sqrt{\epsleak}\ket*{\phi_{i^u_{u-L}}^{\perp}}_{\tilde{A}\tilde{B}},
\end{split}
\end{equation}
where $\epsleak=1-(1-\tilde{\epsleak})^2\approx 2\tilde{\epsleak}$, $\ket*{\phi_{i_u}}_{\tilde{A}\tilde{B}}:=\ket*{\phi_{a_u}}_{\tilde{A}}\ket*{\phi_{b_u}}_{\tilde{B}}$, and, for any $i\equiv i_u$, $\ket*{\phi_{i^u_{u-L}}^{\perp}}_{\tilde{A}\tilde{B}}$ is an uncharacterized state that is orthogonal to $\ket*{\phi_{i_u}}_{\tilde{A}\tilde{B}}$. As the joint density operator $\rho_{AB}^{\epsleak'}$ of Alice and Bob's ancilla systems $A$ and $B$ depends on the unknown states $\ket*{\phi^{\perp}_{i^u_{u-L}}}_{\tilde{A}\tilde{B}}$, we consider a pessimistic scenario in which $\braket*{\phi^{\perp}_{i^u_{u-L}}}{\phi^{\perp}_{j^u_{u-L}}}_{\tilde{A}\tilde{B}}=\braket*{\phi^{\perp}_{i^u_{u-L}}}{\phi_{j_u}}_{\tilde{A}\tilde{B}}=0$ for $i_u\neq j_u$. Again, this is not an assumption required by the security proof, but only a choice made to select the values of the unconfirmed guesses $t_k^\mathrm{gs}$, which can be set arbitrarily. Finally, each tuple $(i,j)$ can then be re-mapped to a different value of the index $k\in\set{1,\dots,81}$, such that we can rewrite $\hat{T}^{i,j}_{AB}$ as $\hat{T}^{k}_{AB}$ and $t_{i,j}$ as $t_{k}$. 

Once all the operators and guesses are specified, one can run the SDP given in~\cref{eq:dualSDP-MDI} and follow the steps described in~\cref{sec:SecurityAnalysisNew} and Appendix~\ref{app:finite-key} to obtain a bound on the phase-error rate.

\subsubsection*{Simulations}
The results are shown in~\cref{fig:SKR_MDI}. Again, we fix the security parameters to $\epsilon_{\rm sec}=\epsilon_{\rm EV}=10^{-10}$, and set for simplicity $\epsilon_{\rm PA}=\epsilon_{\rm sec}/2$, $\epsilon_{\rm pro}=(\epsilon_{\rm sec}/4)^2$, $\epsilon_{\rm proK}=\epsilon_{\rm proB}=\epsilon_{\rm proS}=\epsilon_{\rm pro}/5$. Besides, we consider a lossy but noiseless fiber-based channel model with attenuation coefficient $\alpha_{\rm dB}=0.2$ dB/km, and threshold single-photon detectors with detection efficiency $\eta_{\rm det}=0.73$ and dark-count probability $p_{\rm }=10^{-8}$ \cite{pittaluga2021600}. In the simulations, we use  \texttt{patternsearch} and \texttt{surrogateopt} routines from Matlab's Global Optimization Toolbox~\cite{mathworks_global_optimization_toolbox_2025} to optimize the values of the parameters $p_Z = \piA{0}+\piA{1} = \pB{0}+\pB{1}$, $p_{{\rm key}|a,b,\Omega}$, $\alpha$, $\mu$ and $\gamma_{\Tsind}$ for each distance. The green (purple) lines illustrated in~\cref{fig:SKR_MDI} correspond to $N\to\infty$ ($N=10^{10}$), and we consider side-channel parameters $\epsleak=\set{10^{-5},10^{-7},0}$. For comparison purposes, we include in the figure the secret-key rate attainable with both our numerical method (dotted lines) and the analytical security proof of~\cite{curras2023security2} (solid lines). Notably, our numerical security proof largely outperforms the latter analysis in all the considered scenarios. This shows the potential of our numerical security framework to provide performance improvements for non-standard scenarios beyond BB84, for which existing analytical security proofs may be suboptimal.

\begin{figure}
    \centering
    \includegraphics[width=0.55\columnwidth]{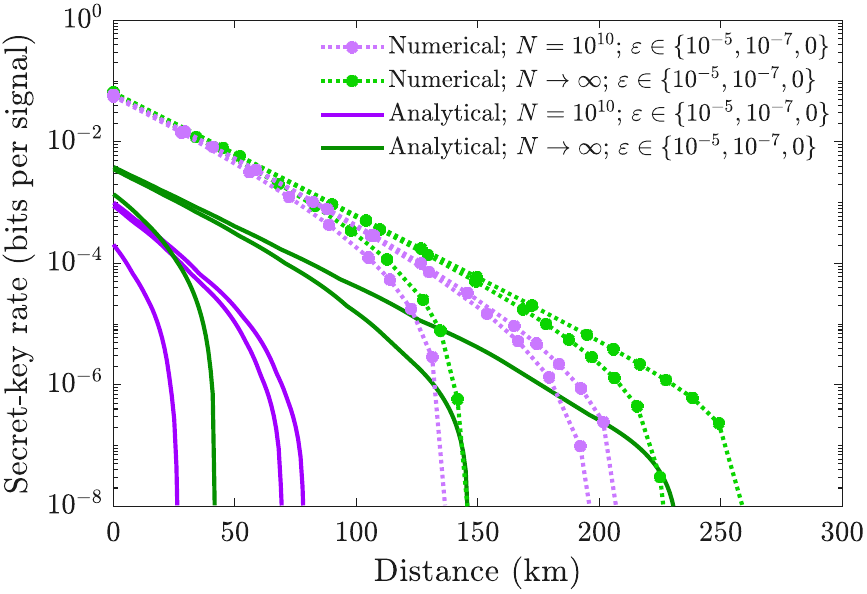}
    \caption{
    Secret-key rate of a coherent-light-based MDI QKD scheme~\cite{navarrete2021practical} in the presence of device imperfections. The green (purple) lines illustrate the scenarios $\epsleak=\set{10^{-3},10^{-5},10^{-7},0}$ and $N\to\infty$ ($N=10^{10}$). The dotted (solid) lines correspond to the results provided by our numerical security proof (the analytical security proof of~\cite{curras2023security2}). In the simulations, we consider a lossy but noiseless channel model (see e.g.~\cite{curras2023security2}), in which the detectors have a detection efficiency $\eta_{\rm det}=0.73$ and a dark-count probability $p_{\rm d}=10^{-8}$  \cite{pittaluga2021600}. The security parameters are set to $\epsilon_{\rm sec}=\epsilon_{\rm EV}=10^{-10}$.}\label{fig:SKR_MDI}
\end{figure}

\subsection{Decoy-state QKD protocols}\label{app:general_protocols_decoy_state}
\subsubsection*{General setting}
\begin{figure*}
\begin{mybox}{Actual decoy-state protocol}{APdec}
\begin{enumerate}
     \item In each round\footnote{In this box we omit the round index $u$ from the quantities whenever it is clear from the context.} $u\in\set{1,\dots,N}$, 
    \begin{enumerate}
        \item Alice chooses a pair of bit/basis ($a$) and intensity ($\mu$) settings, which for conciseness we group in a single index $i\in\set{0,1,\nstates-1}$, with probability $\piA{i}\equiv p_{a,\mu}^{\rm A}$. Then she prepares a $n$-photon state with density matrix $\sigma_C^{n,i^u_{u-L}}$---which in general may depend not only on the current setting choices $i \equiv i_u$ but also on the previous settings $i^{u-1}_{u-L}$---with probability $p_{n|i^{u}_{u-L}}$ and sends system $C$ to Bob through the quantum channel.
        \item Bob selects the measurement basis $\beta\in\set{Z,X_1,\dots,X_{n_M}}$ with probability $p_{\beta}^{\rm B}$, and measures the received system $B$ with the POVM $\set{\hat{\Gamma}^{b|\beta}_B}_{b}$, obtaining the measurement outcome $b$. We consider that the basis $Z$ has three possible outcomes $b\in\set{0,1,\perp}$.
    \end{enumerate}
    \item 
    Alice and Bob publicly exchange classical information and decide if the round is assigned as \textit{key} or \textit{test} round. The key rounds---i.e., those in which $\beta=Z$, $a,b\in\set{0,1}$, and $\mu=0$---are used to generate the users' sifted keys. In the test rounds---i.e., the remaining rounds---we assume that the users reveal all the parameters $a,\mu,\beta$, and $b$. Alice and Bob record the number of key rounds ($\rv{N}_{\rm key}$), as well as the number of test rounds for each combination of the parameters $\mu,a,\beta$, and $b$ ($\rv{N}_{\rm test}^{a,\mu,\beta,b}$).
    \item Alice and Bob perform error correction, error verification, and privacy amplification.
\end{enumerate}
\end{mybox}
\end{figure*}
In a decoy-state QKD scheme (see Box \ref{box:APdec}), Alice generates phase-randomized weak-coherent pulses (PRWCPs)---for concreteness we consider continuous phase randomization here, but we remark that the security analysis also applies to discrete-phase-randomization setups---such that the phase-averaged pulses can be described as a classical mixture of Fock states. These pulses then pass through a bit/basis encoding modulator---e.g., a polarization modulator in the case of polarization encoding---and an intensity modulator that encodes intensity of each pulse according to Alice's settings $a\in\set{0,1,\dots,n_{\rm enc}-1}$ and $\mu\in\set{0,1,\dots,n_{\rm int}-1}$, which are selected with probability $\piA{a,\mu}$. The encoding settings $a\in\set{0,1}$ correspond to Alice's $Z$ basis. For simplicity, we introduce an index $i\in\set{0,1,\dots,\nstates-1}$ to label the pairs $(a,\mu)$ selected by Alice, with $\nstates$ being the number of pairs with nonzero probability $\piA{i}\equiv\piA{a,\mu}\neq 0$---e.g. $\nstates=n_{\rm enc}n_{\rm int}$ in typical setups where the bit/basis and intensity settings are selected independently. Here, we consider that the transmitted Fock states satisfy the fidelity constraint
\begin{equation}\label{eq:assumption1thm_decoy}
    F\left(\sigma_{C_u}^{n_u,i^u_{u-L}},\ket{\phi_{n_u,i_u}}_{C_u}\right)\geq 1-\epsleak^{n_u,i_u},\quad \forall n_u,i^{u}_{u-L}
\end{equation}
for certain parameters $\epsleak^{n_u,i_u}>0$---which for simplicity we fix to the same value, i.e., $\epsleak^{n_u,i_u}=\epsleak$---where $n_u$ is the number of photons contained in the pulse, $L$ denotes the correlation length, and the states $\ket{\phi_{n_u,i_u}}_{C_u}$ are some predefined reference states that can be chosen freely. Note that the states provided in~\cref{eq:assumption1thm_decoy} may, in general, depend on previous bit/basis and intensity setting choices $i_{u-L}^{u-1}$. Moreover, we can use\footnote{Note that~\cref{lemma:density_op_correlations} can be applied to the decoy-state setting by simply mapping each pair $(n_u,i_u)$ to a single $i_u'$.} \cref{lemma:density_op_correlations} (see \cref{app:correlated_sources}) and, without loss of generality, consider for the security proof that the transmitted Fock states are pure. Specifically, the state emitted each round can be generally described as
\begin{equation}\label{eq:states_decoy}
    \ket*{\psi_{n_u,i^{u}_{u-L}}^{\epsleak}}_{C_u} = \sqrt{1-\epsleak}\ket{\phi_{n_u,i_u}}_{C_u} + \sqrt{\epsleak}\ket*{\phi_{n_u,i^{u}_{u-L}}^{\perp}}_{C_u},
\end{equation}
where $\ket*{\phi_{n_u,i^{u}_{u-L}}^{\perp}}$ can be any unknown state satisfying $\braket*{\phi_{n_u,i_u}}{\phi_{n_u,i^{u}_{u-L}}^{\perp}}=0$. 
In the entanglement-based view of the protocol, we can consider that Alice prepares
\begin{equation}\label{eq:global_correlated_state_decoy_main}
\begin{split}
\ket{\Psi^{\epsleak}}_{A_1^NR_1^NC_1^N}
&=
\sum_{n_1^N,i_1^N}\sqrt{p_{n_1^N,i_1^N}}\bigotimes_{u=1}^N \ket{n_u}_{R_u}\ket{i_u}_{A_u}\ket*{\psi_{n_u,i_{u-L}^u}^{\epsleak}}_{C_u}
\\
&=
 \sum_{n_1^N,i_1^N}\bigotimes_{u=1}^N \sqrt{\piA{i_u}p_{n_u|i^u_{u-L}}} \ket{n_u}_{R_u}\ket{i_u}_{A_u}\ket*{\psi_{n_u,i_{u-L}^u}^{\epsleak}}_{C_u},
\end{split}
\end{equation}
where $p_{n_1^N,i_1^N}$ is the joint probability of the sequences $n_1^N$ and $i_1^N$, $A_1^N$ denotes $N$ ancilla systems encoding the setting selections $\ket{i}_A\equiv\ket{a,\mu}_A$, and $R_1^N$ denotes $N$ shield systems storing the photon numbers of the transmitted signals. 
Importantly, in practice, the assumption in \cref{eq:assumption1thm_decoy} is needed only for the subset of Fock states with $n_u\leq n_{\rm cut}$, where $n_{\rm cut}$ is introduced later. Indeed, as we show below, only the reference states with $n_u\leq n_{\rm cut}$ enter the security proof. For $n_u > n_{\rm cut}$, one may simply take Alice to prepare any purification of $\sigma_{C_u}^{n_u,i^u_{u-L}}$.
Note that this remains compatible with~\cref{lemma:density_op_correlations}, as such purifications can always be paired with trivial---generally unknown---reference states identical to them, therefore satisfying~\cref{eq:assumption1thm_decoy}. Besides,
we note that the probability that Alice's transmitted signal in round $u$ contains $n_u$ photons, i.e., $p_{n_u|i^{u}_{u-L}}$, may in general depend not only on the current intensity setting $\mu_u$, but also on the current bit/basis setting $a_u$, as well as on the previous intensity and bit/basis settings $i^{u-1}_{u-L}$. Alice sends systems $C^N_1$ through the quantum channel to Bob, who receives the systems $B^N_1$. We denote $\rho_{RAB}$ the state of systems $A_u\equiv A$, $R_u\equiv R$ and $B_u\equiv B$ corresponding to the round $u$.

In this scenario, we can define an SDP for a target single-round measurement operator $\dyad{1}_{R}\otimes \hat{O}_{AB}$ for certain $\hat{O}_{AB}\in\set{-\hat{D}_{AB}^{(1)},\hat{E}_{AB}^{{\rm ph},(1)}}$, where $\hat{D}_{AB}^{(1)}$ represents the single-photon sifted-key detection operator---which is associated with a detection event in rounds in which Bob selects $Z$ and Alice selects $Z$ and $\mu=0$---and $\hat{E}_{AB}^{{\rm ph},(1)}$ denotes the single-photon phase-error operator. These operators can be written as
\begin{equation}
    \begin{split}
       \hat{D}^{(1)}_{AB} & :=  \sum_{a,b\in\set{0,1}}\hat{S}_{AB}^{a,0,Z,b} = \sum_{a,b\in\set{0,1}}\hat{V}_{AB}^{a,b},
       \\
       \hat{E}_{AB}^{{\rm ph},(1)} & := \hat{V}_{AB}^{0,1} + \hat{V}_{AB}^{1,0},
    \end{split}
\end{equation}
where, similarly to~\cref{sec:SecurityAnalysisNew}, we have that
\begin{equation}
    \begin{split}
       \hat{S}_{AB}^{a,\mu,\beta,b} & := \pB{\beta} \dyad{a,\mu}_{A} \otimes \hat{\Gamma}_{B}^{b|\beta},
       \\
       \hat{V}_{AB}^{a,b} & := \pB{Z} \dyad{a_X,0}_{A} \otimes \hat{\Gamma}_{B}^{b|V}.
    \end{split}
\end{equation}
That is, the joint POVM that describes the actual entanglement-based protocol is $\set{\dyad{n}_R\otimes \hat{S}_{AB}^{a,\mu,\beta,b}}_{n,a,\mu,\beta,b}$, while the POVM of the virtual protocol is given by $\set{\dyad{1}_R\otimes \hat{V}_{AB}^{a,b}}_{a,b\in\set{0,1}}\cup \set{\dyad{n}_R\otimes \hat{S}_{AB}^{a,\mu,\beta,b}}_{(n,a,\mu,\beta,b)\notin\mathcal{K}}$, where here $\mathcal{K}:=\set{(n,a,\mu,\beta,b):n=1,\mu=0,\beta=Z,a,b\in\set{0,1}}$. With this, we can construct the following SDP for the single-round scenario:
\begin{equation}
    \begin{split}
        \max_{\rho_{RAB}} \quad &  \Tr\set{(\dyad{1}_{R}\otimes\hat{O}_{AB})\rho_{RAB}}\\
        \text{s.t.} \quad & \rho_{RAB}\succeq 0, \\
        \quad & \Tr\set{\identity_{R}\otimes\hat{Q}_{AB}^{l} \rho_{RAB}} = q_l,\qquad (\forall l)\\
        \quad & \Tr\set{\dyad{n}_{R}\otimes\hat{T}^k_A\otimes\identity_B \rho_{RAB}} = t_{k,n},\qquad (n\in\set{0,\dots,n_{\rm cut}},\forall k)\\
        \quad & \Tr{\sum_{n=n_{\rm cut}+1}^{\infty}\dyad{n}_{R}\otimes\dyad{i}_{A}\otimes\identity_{B} \rho_{RAB}} = t_{i,\infty},\qquad (i\in\set{0,1,...,\nstates-1})\\
    \end{split}
\end{equation}
where $n_{\rm cut}$ is a parameter that caps the maximum value of $n$ that is considered individually in the SDP. Here, $\hat{T}_A^k\equiv\hat{T}^{i,j}_A$ is defined as in~\cref{eq:Tk_BB84}.
Note that, to simplify the numerics, one might disregard pairs $(i,j)$ whose indices $i$ and $j$ correspond, respectively, to tuples $(a,\mu)$ and $(a',\mu')$ that simultaneously satisfy $a\neq a'$ and $\mu\neq\mu'$, as their impact on the estimation is expected to be small. As for the operators $\hat{Q}_{AB}^{l}$, they are defined analogously to those in~\cref{eq:linear_comb_c}, i.e.,
\begin{equation}\label{eq:Q_and_q_DS}
\begin{split}
    \hat{Q}_{AB}^{l} &= \sum_{a,\mu,\beta,b}
    c_{a,\mu,\beta,b}^{l}\dyad{a,\mu}_A\otimes\hat{\Gamma}^{b|\beta}_B,
    \\
    q_{l}&:= \sum_{a,\mu,\beta,b} c_{a,\mu,\beta,b}^{l} p_{a,\mu}^{A}p_{b|a,\mu,\beta},
\end{split}
\end{equation}
where the coefficients $c_{a,\mu,\beta,b}^{l}$ may, in general, depend on the target operator $\hat{O}_{AB}$. For simplicity, we omit this dependence---along with that inherited by $\hat{Q}_{AB}^{l}$ and $q_{l}$---from the notation. 
Moreover, note that one must impose $c_{0,0,Z,0}^{l}=c_{0,0,Z,1}^{l}=c_{1,0,Z,0}^{l}=c_{1,0,Z,1}^{l}$ to ensure consistency with the observed protocol statistics, since the bit value is not revealed in the sifted-key rounds of the actual protocol.

The corresponding dual SDP for the single-round scenario given certain unconfirmed guesses---which can be obtained similarly to the BB84 and MDI scenario---is therefore
\begin{equation}\label{eq:dualSDP2_decoy}
\begin{split}
    \min_{\substack{\set{\lambda_{k,n}}_{n\in\set{0,\dots,n_{\rm cut}},k} \\ \set{\lambda^{\infty}_i}_i,\set{\eta_l}_l}} 
    \quad &  
    \sum_{l}\eta_l q_l^{\rm gs}+\sum_{k}\sum_{n=0}^{n_{\rm cut}}\lambda_{k,n} t_{k,n}^{\rm gs} + \sum_i\lambda_{i,\infty} t_{i,\infty}^{\rm gs}
    \\
    \text{s.t.} 
    \quad & \dyad{1}_{R}\otimes\hat{O}_{AB} 
    -\identity_{R}\otimes\sum_l\eta_l\hat{Q}_{AB}^{l}  
    -\sum_{k}\sum_{n=0}^{n_{\rm cut}}\lambda_{k,n}(\dyad{n}_{R}\otimes\hat{T}_A^k\otimes\identity_B)
    \\
    &
    -\Pi_R^{n>n_{\rm cut}}\otimes\sum_i\lambda_{i,\infty}\dyad{i}_A\otimes\identity_{B}
    \preceq 0,
\end{split}
\end{equation}
where we have introduced the projector $\Pi_R^{n>n_{\rm cut}}=\sum_{n_{\rm cut}+1}^{\infty}\dyad{n}_R$. The operator constraint of the previous SDP can be decomposed into independent constraints for the different photon-number subspaces, and thus we can remove the photon-number register $R$ by considering the following equivalent problem
\begin{equation}\label{eq:dualSDP2_decoy2}
\begin{split}
     \min_{\substack{\set{\lambda_{k,n}}_{n\in\set{0,\dots,n_{\rm cut}},k} \\ \set{\lambda^{\infty}_i}_i,\set{\eta_l}_l}} \quad &  \sum_{l}\eta_l q_l^{\rm gs}+\sum_{k}\sum_{n=0}^{n_{\rm cut}}\lambda_{k,n} t_{k,n}^{\rm gs} + \sum_i\lambda_{i,\infty} t_{i,\infty}^{\rm gs}
     \\
    \text{s.t.} 
    \quad & \hat{O}_{AB} \preceq \sum_l\eta_l\hat{Q}_{AB}^{l}  + \sum_{k}\lambda_{k,1}\hat{T}_A^k\otimes\identity_B,
    \\
    \quad & 0 \preceq \sum_l\eta_l\hat{Q}_{AB}^{l}  + \sum_{k}\lambda_{k,n}\hat{T}_A^k\otimes\identity_B, \quad (n\in\set{0,2,3,\dots,n_{\rm cut}})
    \\
    \quad & 0 \preceq \sum_l\eta_l\hat{Q}_{AB}^{l}  + \sum_i\lambda_{i,\infty}\dyad{i}_A\otimes\identity_{B}.
\end{split}
\end{equation}
Note that, similarly to~\cref{eq:dualSDP3}, one can always impose some extra constraints of the form $-\gamma_{{\Tsind},n}\identity_A \preceq \sum_{k}\lambda_{k,n}\hat{T}_A^k \preceq \gamma_{{\Tsind},n}\identity_A$ in~\cref{eq:dualSDP2_decoy2}, for certain $\gamma_{{\Tsind},n}$, to restrict the values of the maximum and minimum eigenvalues of the operators $\sum_{k}\lambda_{k,n}\hat{T}_A^k$. For simplicity here we consider a single parameter $\gamma_{\Tsind}$, i.e., we fix $\gamma_{{\Tsind},n}=\gamma_{\Tsind}$ for all $n$.

\begin{figure*}
\begin{mybox}[myboxcolor=green]{Virtual Estimation Protocols (VEPs) for the decoy-state protocol}{VEPdec}
\raggedright For conciseness, here we introduce a unified description with identical notation that is valid for both the $E_{AB}^{\rm ph,(1)}$-VEP and the $D_{AB}^{(1)}$-VEP. We remark, however, that these are two distinct and independent protocols\footnote{In particular, the states $\set{\ket{\omega}}_{\omega}$ in general differ between the two VEPs, as may the coefficients $c_{a,\mu,\beta,b}^l$, \textit{inter alia}.}
\begin{enumerate}
    \item Alice prepares the entangled state $\ket{\Psi^{\epsleak}}_{A_1^NR_1^NC_1^N}$---defined in~\cref{eq:global_correlated_state_decoy_main} for certain $\epsleak$---and sends all systems $C_1^N$ through the quantum channel to Bob, who receives the systems $B_1^N$. 
    \item Then, in each round $u$, Alice and Bob perform the POVM
    \begin{equation}\nonumber
    \begin{split}
        &\set{(1-\ptrash)\dyad{1}_R\otimes \hat{V}_{AB}^{a,b}}_{a,b\in\set{0,1}}\cup \set{(1-\ptrash)\dyad{n}_R\otimes \hat{S}_{AB}^{a,\mu,\beta,b}}_{(n,\mu,a,\beta,b)\notin\mathcal{K}}\cup\set{\ptrash\dyad{\omega}_{RA}\otimes\identity_{B}}_{\omega},
    \end{split}
    \end{equation}
    on the systems $RAB$, where $\ptrash$ is the probability that a round is tagged as \LC{}, $\mathcal{K}:=\set{(n,a,\mu,\beta,b):n=1,\mu=0,\beta=Z,a,b\in\set{0,1}}$, and the states $\ket{\omega}_{RA}$ are eigenvectors of the operator $\Tobs{RA}$ defined in~\cref{eq:Tobs_RAB} with eigenvalues $\omega\in\bigcup_{n=0}^{n_{\rm cut}}\set{\omega_n}_{\omega_n}\cup\set{\lambda_{i,\infty}^*}_i$. As a result, they obtain the values of the following random variables:
    \begin{itemize}\label{it:RVs_decoy}
        \item[\small $\bullet$] $\rv{\chi}_{\rm key,1}^{u}$: it takes the value 1 if a $Z$-basis single-photon detection with intensity $\mu=0$ occurs, which is associated with the POVM elements $(1-\ptrash)\dyad{1}_R\otimes \hat{V}_{AB}^{a,b}$ with $a,b\in\set{0,1}$. Otherwise, it takes the value 0.
        \item[\small $\bullet$] $\rv{\chi}_{\rm ph}^{u}$: it takes the value 1 if a phase error occurs, which is associated with the POVM elements $(1-\ptrash)\dyad{1}_R\otimes \hat{V}_{AB}^{0,1}$ and $(1-\ptrash)\dyad{1}_R\otimes \hat{V}_{AB}^{1,0}$. Otherwise, it takes the value 0.
        \item[\small $\bullet$] $\rv{\chi}_{\rm key}^{u}$: it takes the value 1 if a $Z$-basis detection with intensity $\mu=0$ occurs, which is associated with the POVM elements $(1-\ptrash)\dyad{n}_R\otimes \hat{S}_{AB}^{a,0,Z,b}$ with $a,b\in\set{0,1}$ and $n\neq 1$, and the POVM elements $(1-\ptrash)\dyad{1}_R\otimes \hat{V}_{AB}^{a,b}$ with $a,b\in\set{0,1}$. Otherwise, it takes the value 0.
        \item[\small $\bullet$] $\rv{\chi}_{{\rm Q},l}^{u}$: it takes the value $\frac{c_{a,\mu,\beta,b}^l}{\pB{\beta}}$ if Alice and Bob observe a measurement outcome associated with the POVM element $(1-\ptrash)\dyad{n}_R\otimes \hat{S}_{AB}^{a,\mu,\beta,b}$, for any $n$ satisfying $(n,\mu,a,\beta,b)\notin\mathcal{K}$. Besides, it takes the value\footnote{The coefficients $c_{a,0,Z,b}^l$ are assumed to be identical for all $a,b\in\set{0,1}$.} $\frac{c_{0,1,Z,0}^l}{\pB{Z}}$ if $\rv{\chi}_{\rm key}^{u}=1$. Otherwise, it takes the value 0.
        \item[\small $\bullet$]  $\chiT{u}$: it takes the value $\omega$ if Alice observes a measurement outcome associated with the POVM element $\ptrash\dyad{\omega}_{RA}\otimes\identity_{B}$. Otherwise, it takes the value 0.
    \end{itemize}
    \item Alice and Bob compute 
    $\rv{M}_{\rm key,1}=\sum_u\rv{\chi}_{\rm key,1}^{u}$
    $\rv{M}_{\rm ph}=\sum_u\rv{\chi}_{\rm ph}^{u}$,
    $\rv{M}_{\rm key}=\sum_u\rv{\chi}_{\rm key}^{u}$,
    $\rv{M}_{{\rm Q},l}=\sum_u\rv{\chi}_{{\rm Q},l}^{u}$ and $\rv{M}_{{\Tsind}}=\sum_u\chiT{u}$.
\end{enumerate}
\end{mybox}
\end{figure*}

After solving the dual SDP given by~\cref{eq:dualSDP2_decoy2}, we substitute the solutions $\eta_l^*$, $\lambda_{k,n}^*$ and $\lambda_{i,\infty}^*$ ($\forall l,k,i,n$) into the operator inequalities of the SDP and build a single operator inequality of the form
\begin{equation}
\begin{split}
\dyad{1}_{R}\otimes \hat{O}_{AB} \preceq & \sum_l\eta_l^*(\identity_{R}\otimes\hat{Q}_{AB}^{l})  + \Tobs{RA}\otimes\identity_B,
\end{split}
\end{equation}
where
\begin{equation}\label{eq:Tobs_RAB}
\begin{split}
    \Tobs{RA} =& 
    \sum_{n=0}^{n_{\rm cut}}\dyad{n}_{R}\otimes\Tobsn{A} +
    \Pi_R^{n>n_{\rm cut}}\otimes\sum_i\lambda_{i,\infty}^*\dyad{i}_A,
    \\
    = & \sum_{\omega} \omega\dyad{\omega}_{RA}.
\end{split}
\end{equation}
Here we have introduced the observables $\Tobsn{A}:=\sum_{k}\lambda_{k,n}^*\hat{T}_A^k=\sum_{\omega_n} \omega_n\dyad{\omega_n}$ (for $n\in\set{0,\dots,n_{\rm cut}}$), and the shortcut notation $\omega\in\bigcup_{n=0}^{n_{\rm cut}}\set{\omega_n}_{\omega_n}\cup\set{\lambda_{i,\infty}^*}_i$ and $\ket{\omega}\in\bigcup_{n=0}^{n_{\rm cut}}\set{\ket{n}_R\otimes\ket{\omega_n}_A}_{\omega_n} \cup \bigcup_{n>n_{\rm cut}}\set{\ket{n}_R\otimes\ket{i}_A}_i$.
Then, we can proceed similarly to~\cref{sec:SecurityAnalysisNew} and define an independent VEP for each target operator $\hat{O}_{AB}$ (see Box \ref{box:VEPdec}), namely the $E_{AB}^{\rm ph,(1)}$-VEP and the $D_{AB}^{(1)}$-VEP, which serve to derive the inequalities
\begin{equation}\label{eq:ineq_conditional_expect_Main_decoy}
\begin{split}
& \mathbb{E}\left[\chi_{\rm ph}^{u}|\mathcal{F}_{u-1}\right]
 \leq 
 \sum_l\eta_l^*\mathbb{E}\left[\chi_{{\rm Q},l}^{u}|\mathcal{F}_{u-1}\right] 
 + \frac{1-\ptrash}{\ptrash}\mathbb{E}\left[\chi_{\Tsind}^{u}|\mathcal{F}_{u-1}\right],
 \\
 & \mathbb{E}\left[\chi_{\rm key,1}^{u}|\mathcal{F}_{u-1}\right]
 \geq
 - \sum_l\dot{\eta_l}^*\mathbb{E}\left[\chi_{{\rm Q },l}^{u}|\mathcal{F}_{u-1}\right] 
 -  \frac{1-\ptrash}{\ptrash}\mathbb{E}\left[\chi_{\Tsind}^{u}|\mathcal{F}_{u-1}\right],
\end{split}
\end{equation}
respectively. Here and in what follows, we use the dotted notation $\dot{x}$ to indicate that the quantities are associated to the $D_{AB}^{(1)}$-VEP.

From \cref{eq:ineq_conditional_expect_Main_decoy}, one can follow an analogous procedure to that shown in Appendix \ref{app:finite-key} to demonstrate that, for $\epsproB,\epsproK>0$, the number of phase errors in the $E_{AB}^{\rm ph,(1)}$-VEP satisfies an inequality of the form
\begin{equation}\label{eq:Kfinal_decoy_ph}
\begin{split}
    \rv{M}_{\rm ph} 
    &\leq 
    \frac{N}{\sqrt{N}-2\tilde{a}_{\rm U}}\left[\frac{1}{\sqrt{N}}
    \left(
    \sum_l\eta_l^*K_{\epsilon_l,N,x_{\min}^l,x_{\max}^l}^{\text{sign}(\eta_l^*)}(\rv{M}_{{\rm Q},l})
    + 
    \frac{1-\ptrash}{\ptrash} K_{\epsproK,N,\omega_{\min},\omega_{\max}}^{1}(M_{\Tsind}^{\rm U})
    \right)
    +\tilde{b}_{\rm U}-\tilde{a}_{\rm U}\right]
    \\
    &=: M_{\rm ph}^{\rm U}\left(\rv{M}_{{\rm Q},l}\right),
\end{split}
\end{equation}
except with probability less than or equal to $3\epsproK+\epsproB$, where the quantities $\tilde{a}_{\rm U}$ and $\tilde{b}_{\rm U}$ are computed prior to the protocol execution according to~\cref{eq:optimal_ab_tilde}; $x_{\min}^l$ and $x_{\max}^l$ ($\omega_{\min}$ and $\omega_{\max}$) denote the minimum and maximum possible outcomes of the RVs $\rv{\chi}_{{\rm Q},l}^u$ ($\chiT{u}$), respectively; the quantities $\epsilon_l$ represent some arbitrary failure probabilities satisfying $\sum_{l} \epsilon_l=\epsproK$; the functions $K_{\epsilon,N,x_{\min},x_{\max}}^{\pm 1}$ are defined in~\cref{eq:Kfunctions_unnormalized}; and $\text{sign}(\cdot)$ denotes the sign function. We refer the reader to Appendix~\ref{app:finite-key} for further details.
Similarly, the number of $Z$-basis single-photon detections with $\mu=0$ in the $D_{AB}^{(1)}$-VEP satisfies an inequality of the form
\begin{equation}\label{eq:Kfinal_decoy_key}
\begin{split}
    \rv{M}_{\rm key,1} 
    &\geq 
    \frac{N}{\sqrt{N}+2\tilde{a}_{\rm L}}
    \left[\frac{-1}{\sqrt{N}}
    \left(
    \sum_l\dot{\eta}_l^*K_{\epsilon_l,N,x_{\min}^l,x_{\max}^l}^{\text{sign}(\dot{\eta}_l^*)}(\rv{M}_{{\rm Q},l})
    + 
    \frac{1-\ptrash}{\ptrash}
    K_{\epsproK,N,\omega_{\min},\omega_{\max}}^{1}(\dot{M}_{\Tsind}^{\rm U})
    \right)
    -\tilde{b}_{\rm L}+\tilde{a}_{\rm L}\right]
    \\
    &=: M_{\rm key,1}^{\rm L}\left(\rv{M}_{{\rm Q},l}\right),
\end{split}
\end{equation}
except with probability less than or equal to $3\epsproK+\epsproB$. 
Here we consider that the quantities $\rv{M}_{{\rm Q},l}$ in~\cref{eq:Kfinal_decoy_ph} are constructed from the statistics of rounds in which Bob selects the $X$ basis, while the quantities $\rv{M}_{{\rm Q},l}$ in~\cref{eq:Kfinal_decoy_key} are constructed from the detection statistics of rounds in which Alice and Bob select the $Z$ basis (see the specific example below).

To obtain the bound $M_{\Tsind}^{\rm U}$---the bound $\dot{M}_{\Tsind}^{\rm U}$ is obtained similarly---we apply~\cref{thm:correlations_decoy} (see Appendix~\ref{app:correlated_sources}), which is a generalization of~\cref{thm:correlations} to the decoy-state setting with bit/basis and intensity correlations. This theorem requires to use SDP to compute the worst-case expectation of the sums of the RVs $\chi_{\Tsind}^{u}$, and subsequently take the worst-case bound given all possible setting sequences $j^{u+L}_{u+1}j^{u-1}_{u-L}$.

Finally, unlike in Appendix~\ref{app:finite-key}, here we simplify the analysis by disregarding any key that could be generated from the \LC{} rounds. To this end, we can consider that, in the VP, the users also tag each round as \LC{} with probability $\alpha$---as they do in the two VEPs---and generate key only from the subset of untagged single-photon $Z$-basis rounds with $\mu=0$. Thus, we can use~\cref{eq:Kfinal_decoy_key,eq:Kfinal_decoy_ph} directly to establish probabilistic bounds on the number of $Z$-basis single-photon detections with $\mu=0$ and the number of single-photon phase errors of the virtual protocol. These bounds have the form 
\begin{equation}\label{eq:probabilistic_statement_decoy}
\begin{split}
    &\Pr[\rv{M}_{\rm key,1}< M_{\rm key,1}^{\rm L}(\vos{\rv{M}}_{Z})] \leq \frac{\epsilon_{\rm pro}}{2},
    \\
    &\Pr[\rv{e}_{\rm ph,1} > \frac{M_{\rm ph}^{\rm U}(\vos{\rv{M}}_{X})}{\rv{M}_{\rm key,1}}] \leq \frac{\epsilon_{\rm pro}}{2},
\end{split}
\end{equation}
where $\vos{\rv{M}}_{Z}:=\set{\rv{M}_{\rm key},\set{\rv{M}_{\rm test}^{\mu,a,Z,b}}_{\mu,a,b}}$, $\vos{\rv{M}}_{X}:=\set{\rv{M}_{\rm test}^{\mu,a,X,b}}_{\mu,a,b}$,  $\rv{e}_{\rm ph,1}$ is the single-photon phase-error rate of the VEP and $\epsilon_{\rm pro}/2=\epsproB+3\epsproK$. Importantly, similarly to Appendix~\ref{app:finite-key}, the previous bounds can be readily recasted in terms of the observed quantities $\rv{N}_{\rm key}$ and $\set{\rv{N}_{\rm test}^{\mu,a,\beta,b}}_{\mu,a,\beta,b}$ by applying standard concentration bounds for sums of independent random variables. 

Using arguments analogous to those for the single-photon BB84 scenario (see~\cref{rem:remark_detectors}), we have that Bob's POVM elements do not need to be characterized---we can simply assume that they are orthogonal projectors in the SDPs---if both the SDP associated with the $E_{AB}^{\rm ph,(1)}$-VEP and that associated with the $D_{AB}^{(1)}$-VEP rely only on statistics from a single basis of Bob---which is often the case, as shown in the example below. Importantly, this allows the security of the decoy-state scenario to be extended to setups with partially uncharacterized receivers suffering from basis detection efficiency mismatch. For this, we apply~\cite[Corollary 2]{curras2025security}, which guarantees that, given that~\cref{eq:probabilistic_statement_decoy} holds for ideal receivers, then 
\begin{equation}\label{eq:detection_eff_mismatch_DS}
\begin{split}
    &
    \Pr[\left(\rv{M}_{\rm key,1}< M_{\rm key,1}^{\rm L}(\vos{\rv{M}}_{Z}) \right)
    \bigcup 
    \left(\rv{e}_{\rm ph,1} > \frac{\frac{M_{\rm ph}^{\rm U}(\vos{\rv{M}}_{X})}{M_{\rm key,1}^{\rm L}(\vos{\rv{M}}_{Z})} +  \delta_1 + \gamma_{ \rm bin}^{\epsilon_{\rm dep-1}}(M_{\rm key,1}^{\rm L}(\vos{\rv{M}}_{Z}),\delta_1)}{1 - \delta_2 - \gamma_{ \rm bin}^{\epsilon_{\rm dep-2}}(M_{\rm key,1}^{\rm L}(\vos{\rv{M}}_{Z}),\delta_2)}\right)]
    \leq 
    \epsilon_{\rm pro}+\epsilon_{\rm dep-1}^2 + \epsilon_{\rm dep-2}^2,
    \end{split}
\end{equation}
holds under detection-efficiency mismatch parameterized by $\delta_1$ and $\delta_2$ (see \cref{app:finite-key} for further details).

Given~\cref{eq:probabilistic_statement_decoy}---a similar result can be readily derived from~\cref{eq:detection_eff_mismatch_DS} for a scenario with detection efficiency mismatch---it follows via the entropic uncertainty relation~\cite{tomamichel2017largely}---though a similar result can be obtained via the phase-error correction approach~\cite{koashi2009simple}---that the decoy-state protocol is $\epsilon_{\rm tot}$-secure with a secret-key length~\cite{tupkary2024phase}
\begin{equation}
\label{eq:key_length_decoy}
\begin{split}
    \rv{l}_{\rm key} =&  M_{\rm key,1}^{\rm L}(\vos{\rv{M}}_{Z})\left[1-h\left(\frac{M_{\rm ph}^{\rm U}(\vos{\rv{M}}_{X})}{ M_{\rm key,1}^{\rm L}(\vos{\rv{M}}_{Z})}\right)\right]
    -\rv{\lambda}_{\rm EC}-2\log(1/2\epsilon_{\rm PA})-\log(2/\epsilon_{\rm EV}),
\end{split}
\end{equation}
where $\epsilon_{\rm tot}=\epsilon_{\rm EV}+\epsilon_{\rm sec}$, and $\epsilon_{\rm sec}=2\sqrt{\epsilon_{\rm pro}}+\epsilon_{\rm PA}$ (see~\cref{sec:SecurityAnalysisNew} for the definition of the parameters).

\subsubsection*{Example: decoy-state BB84 scheme with three intensity settings}
Here, we evaluate the secret-key rate of a realistic decoy-state BB84 setup in the presence of SPFs, bit/basis and intensity leakage and correlations, and basis-dependent detection-efficiency mismatch. For this, we consider a typical polarization-based scheme with three intensity settings $\mu\in\set{0,1,2}$, with $\Iav{0}>\Iav{1}>\Iav{2}$, where $\Iav{\mu}$ denotes the average intensity given that Alice selects $\mu$ (see \cref{app:model_correlations}). In our simulations, we further impose the convention that Alice sets $a=0$ whenever she selects the weakest intensity setting, $\mu=2$. This intensity setting typically represents the vacuum decoy, for which the particular encoding choice is essentially irrelevant. This convention reduces the number of setting pairs $(a,\mu)$ to nine; we again label these pairs by a single index $i\in\set{0,1,\dots,8}$. Importantly, this simulation convention is not required by the security analysis, which remains valid for standard implementations where the bit/basis and intensity settings are selected independently. In this latter scenario, the number of setting pairs $(a,\mu)$ would be simply larger. Besides, we consider that $\Iav{2}=10^{-5}$ to model the finite extinction ratio of the intensity modulator.

Furthermore, we assume that the polarization of the transmitted states deviates according to the same model used in the BB84 scenario. That is, Alice emits phase-randomized weak coherent pulses with polarization $\kappa_a\theta_a$, where the bit/basis settings $a\in\set{0,1,2,3}$ correspond to the ideal polarization $(\theta_0,\theta_1,\theta_2,\theta_3)=(0,\frac{\pi}{2},\frac{\pi}{4},\frac{3\pi}{4})$, and the coefficients $\kappa_a=(1+\Delta\theta/\pi)$ quantify, for a given $\Delta\theta$, the characterized deviations due to SFPs~\cite{pereira2020quantum,curras2025numerical,curras2023security2}. In the absence of side channels, the states generated by Alice in this model can be described as mixtures of Fock states $\ket{\phi_{n,a}}_C$ of the form
\begin{equation}\label{eq:ideal_states_DS} 
\ket{\phi_{n,a}}_C = 
\frac{1}{\sqrt{n!}} \left( \cos(\kappa_a\theta_a)\, \hat{a}_H^\dagger + \sin(\kappa_a\theta_a)\, \hat{a}_V^\dagger \right)^n \ket{ 0}_C 
= 
\sum_{k=0}^{n} \sqrt{\binom{n}{k}} \cos(\kappa_a\theta_a)^k \sin(\kappa_a\theta_a)^{\,n-k} \ket{k}_{C_H}\ket{n-k}_{C_V},
\end{equation} 
where $\hat{a}_H^{\dagger}$ ($\hat{a}_V^{\dagger}$) is the horizontal (vertical) creation operator associated with the horizontal polarization mode $C_H$ (vertical polarization mode $C_V$) of the transmitted system $C$, and $\ket{n}$ denotes a $n$-photon Fock state. That is, the states in~\cref{eq:ideal_states_DS}, which we use as the reference states for $n\in\set{0,1,\dots,n_{\rm cut}}$, already incorporate SPFs. Note that these reference states do not depend on the intensity setting. We remark, however, that the choice of reference states does not affect security, since they can be selected freely.
In fact, in this example we shall assume that the actual emitted Fock states do depend on both the bit/basis and intensity settings and, moreover, exhibit both bit/basis and intensity correlations. That is, the $n$-photon Fock states emitted by Alice have the general form $\sigma_{C_u}^{n_u,i^u_{u-L}}$, and we require them to satisfy~\cref{eq:assumption1thm_decoy}.

As for the emission probabilities $p_{n_u|i^u_{u-L}}$, we shall consider, for this example, the model for intensity correlations introduced in Appendix~\ref{app:model_correlations}. In this model, the Fock-state statistics in a given round follow a Poisson distribution fully determined by the current intensity $I_{u}(\mu_{u-L}^{u})$, which in turn depends on the current and on the $L$ previous intensity settings. That is, the Fock-state statistics are given by $p_{n_u|I_{u}}=e^{-I_{u}}I_{u}^{n_u}/n_u!$, where we use the shorthand notation $I_u$ to denote the current intensity. This corresponds to a typical scenario in practical high-speed QKD implementations. In the simulations we disregard, for simplicity, cross correlations. That is, the fact that the bit/basis encoding settings $a^u_{u-L}$ may also influence the intensity of the transmitted signals. We remark, however, that cross correlations can be incorporated straightforwardly in the analysis. Moreover, one could consider more general scenarios in which the actual intensity $I_{u}$ is not a deterministic function of the setting choices, but is instead drawn from a given probability density function determined by those choices, generally leading to non-Poissonian statistics $p_{n_u|i^u_{u-L}}$.

Now, to compute the dual SDPs in~\cref{eq:dualSDP2_decoy2} given by for the single-photon $Z$-basis detections, we consider operators $Q_{AB}^{l}$ (see~\cref{eq:Q_and_q_DS}) with coefficients such that 
\begin{equation}\label{eq:coefs_c_Z_DS}
c^{l}_{a,\mu,\beta,b} = 
\begin{cases}
1 &\text{ if } a\in\set{0,1}, \mu=l, b\neq\perp, \beta=Z, \\
0 &\text{ otherwise, }
\end{cases}
\end{equation}
where $l\in\set{0,1,2}$. Note that, in the scenario considered in this example, $a=1$ and $\mu=2$ can never occur simultaneously, but this does not affect the above definition of the coefficients. That is, with~\cref{eq:coefs_c_Z_DS} we basically incorporate into the SDP the information of the three observed gains in the $Z$ basis, one per intensity setting. On the other hand, for the dual SDP associated with the single-photon phase errors we set
\begin{equation}
c^{l}_{a,\mu,\beta,b} = 
\begin{cases}
1 &\text{ if } a=a', \mu=\mu', b=b', \beta=X, \\
0 &\text{ otherwise, }
\end{cases}
\end{equation}
where now each $l\in\set{0,1,\dots,23}$ is associated with a different combination $(a',\mu',b')$, with $a'\in\set{0,1,2,3}$, $\mu'\in\set{0,1,2}$ and $b'\in\set{0,1}$. 

For the operators $\hat{T}^{k}_{A}\equiv\hat{T}^{i,j}_{A}$, we consider those defined in~\cref{eq:Tk_BB84}, where $i,j\in\set{0,1,\dots,8}$. As for the guesses $t_{k,n}^{\rm gs}\equiv t_{i,j,n}^{\rm gs}$, we compute $t_{i,j,n}^{\rm gs}=\Tr{\hat{T}^{i,j}_A\rho_A^{n,\epsleak'}}$ for the marginal state
\begin{equation}
    \rho_{A}^{n,\varepsilon'} = \sum_{i,j}\sqrt{p_{n,i}p_{n,j}}\left[(1-\varepsilon')\braket{\phi_{n,j}}{\phi_{n,i}} + \varepsilon'\braket*{\phi_{n,j}^{\perp}}{\phi_{n,i}^{\perp}} + \sqrt{(1-\varepsilon')\varepsilon'} (\braket*{\phi_{n,j}}{\phi_{n,i}^{\perp}} + \braket*{\phi_{n,j}^{\perp}}{\phi_{n,i}})\right] \ketbra{i}{j}_A,
\end{equation}
where $\varepsilon'=1-(1-\epsleak)^{L+1}$, $p_{n,i}=\piA{i}p_{n|i}$, and $p_{n|i}$ denotes the probability of emitting an $n$-photon Fock state given that the average intensity is $\Iav{\mu^{(i)}}$, with $\mu^{(i)}$ being the intensity setting associated with the index $i$. Similarly to the BB84 scenario, since the matrix $\rho_{A}^{n,\epsleak'}$ depends on the unknown states $\ket*{\phi^{\perp}_{n,i}}$, here we shall conservatively consider that $\braket*{\phi^{\perp}_{n,i}}{\phi^{\perp}_{n,j}}=\braket*{\phi^{\perp}_{n,i}}{\phi_{n,j}}=0$ for $i\neq j$ and for $n\in\set{0,1,\dots,n_{\rm cut}}$. Finally, we set $t_{i,\infty}^{\rm gs} = \sum_{n=n_{\rm cut}+1}^{\infty}\Tr{\dyad{i}_{A} \rho_{A}^{n,\varepsilon'}}=\piA{i}\left(1-\sum_{n=0}^{n_{\rm cut}}p_{n|i}\right)$.

With this, we can solve the corresponding SDPs and obtain~\cref{eq:Kfinal_decoy_ph,eq:Kfinal_decoy_key}. Subsequently, given the model of correlations, we compute the bounds $M_{\Tsind}^{\rm U}$ and $\dot{M}_{\Tsind}^{\rm U}$ required in these two equations via~\cref{thm:correlations_decoy}. Specifically, we solve the SDPs in~\cref{thm:correlations_decoy} for each possible intensity-setting context $\mathcal{L}_u=\mu^{u+L}_{u+1}\mu^{u-1}_{u-L}$ and take the worst-case bound, as prescribed by the theorem. Regarding the quantities $\xi_{i,j,\mathcal{L}_u}$ also required in~\cref{thm:correlations_decoy}, we note that, since the Fock-state statistics follow a Poisson distribution in the model considered, they can be computed analytically for each $\mathcal{L}_u$ as
$$\xi_{i,j,\mathcal{L}_u}=\prod_{k=1}^{L}e^{-\frac{1}{2}\left(\sqrt{I_{u+k}(j^{u+k}_{u+1} i j^{u-1}_{u-L+k})}-\sqrt{I_{u+k}(j^{u+k}_{u+1} j j^{u-1}_{u-L+k})}\right)^2}.$$

\subsubsection*{Simulations}
The simulations for this example are shown in~\cref{fig:SKR_decoy_main} (we refer the reader to~\cref{sec:Results} in the main text for further details). To obtain them, we compute the expected statistics $ p_{b|a,\mu,\beta}$ by using a typical channel model. In particular, for the channel model considered, the statistics $p_{b|a,\mu,Z}$ are given by
\begin{equation}
\begin{split}
    p_{\perp|a,\mu,Z}=&(1-p_d)^2e^{-\eta\Iav{\mu}},\\
    p_{0|a,\mu,Z}=&\left(1-(1-p_d)e^{-\eta \Iav{\mu}\cos^2(\kappa_a\theta_a)}\right)
    \left[(1-p_d)e^{-\eta\Iav{\mu}\sin^2(\kappa_a\theta_a)} + \frac{1}{2}\left(1-(1-p_d)e^{-\eta\Iav{\mu}\sin^2(\kappa_a\theta_a)}\right)\right],\\
    p_{1|a,\mu,Z}=&\left(1-(1-p_d)e^{-\eta\Iav{\mu}\sin^2(\kappa_a\theta_a)}\right)
    \left[(1-p_d)e^{-\eta\Iav{\mu}\cos^2(\kappa_a\theta_a)} + \frac{1}{2}\left(1-(1-p_d)e^{-\eta\Iav{\mu}\cos^2(\kappa_a\theta_a)}\right)\right],\\
\end{split}
\end{equation}
where $\eta$ is the overall transmittance of the system, including the channel loss and the efficiency of Bob's detectors $\eta_{\rm det}$. The statistics $p_{b|a,\mu,X}$ associated with Bob's $X$-basis measurement can be directly obtained from the previous expressions by substituting $\kappa_a\theta_a$ with $\kappa_a\theta_a-\pi/4$. 

In the simulations, we use \texttt{patternsearch} and/or \texttt{surrogateopt} routines from Matlab's Global Optimization Toolbox~\cite{mathworks_global_optimization_toolbox_2025} to optimize, for each distance, the average values of the two higher intensities, the probabilities of selecting each intensity setting, the parameter $\alpha$, the basis selection probability $p_Z$---for simplicity, we select the same probability for Alice and Bob---and the parameter $\gamma_{\Tsind}$. The weakest intensity is fixed to $10^{-5}$ to account for the finite extinction ration of practical intensity modulators.


\section{Finite-key bounds}\label{app:finite-key}
\subsection{Auxiliary results on concentration bounds}
The finite-key analysis requires the use of certain results on concentration bounds for sums of RVs that we include below for completeness:
\begin{lemma}[Bernstein’s inequality~\cite{bernstein1924modification,boucheron2013concentration}]\label{thm:bernstein}
Let \(\rv{X}_1,\dots,\rv{X}_n\) be independent, zero‑mean random variables satisfying \(\lvert \rv{X}_i\rvert \le c\) almost surely. Define $\rv{s}=\frac{1}{n}\sum_{u=1}^n \rv{X}_u$ and $v = \frac{1}{n}\sum_{u=1}^n \mathrm{Var}(\rv{X}_u)$. Then for any \(\Delta_{\rm B} > 0\),
\begin{equation}
\Pr\bigl(\rv{s} \ge \Delta_{\rm B}\bigr)
\;\le\;
\exp\!\Bigl(-\frac{n\Delta_{\rm B}^2}{2v + \tfrac{2}{3}\,c\,\Delta_{\rm B}}\Bigr). 
\end{equation}
Naturally, if the random variables have non-zero mean, but satisfy \(\lvert \rv{X}_i-\mathbb{E}[\rv{X}_i]\rvert \le c\) almost surely, then
\begin{equation}
  \Pr\bigl(\rv{s} \ge \mathbb{E}[\rv{s}]+ \Delta_{\rm B}\bigr)
  \;\le\;
  \exp\!\Bigl(-\frac{n\Delta_{\rm B}^2}{2v + \tfrac{2}{3}\,c\,\Delta_{\rm B}}\Bigr)=:\epsilon_{\rm B}.
\end{equation}
where the deviation term $\Delta_{\rm B}$ can be written as a function of the maximum failure probability $\epsilon_{\rm B}$ of the bound as
\begin{equation}\label{eq:DeltaB}
\Delta_{\rm B}(N,v,c,\epsproB) = \frac{c \ln(1/\epsilon_{\rm B})}{3n} + \sqrt{ \left(\frac{c \ln(1/\epsilon_{\rm B})}{3n}\right)^2  + \frac{2 v \ln(1/\epsilon_{\rm B})}{n} }.
\end{equation}
Moreover, we also have the converse bound
\begin{equation}
  \Pr\bigl(\rv{s} \le \mathbb{E}[\rv{s}] - \Delta_{\rm B}\bigr)
  \;\le\;\epsilon_{\rm B}.
\end{equation}
\end{lemma}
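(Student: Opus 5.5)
The plan is the standard Cramér–Chernoff (exponential moment) method. I would first prove the zero-mean statement, then obtain the remaining claims by shifting, negating, and inverting the bound.

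\textbf{Moment generating function bound.} Fix $t\in[0,3/c)$. For a single zero-mean variable $\rv{X}_u$ with $|\rv{X}_u|\le c$, expand
\begin{equation*}
\mathbb{E}[e^{t\rv{X}_u}] = 1+\sum_{k\ge2}\frac{t^k\mathbb{E}[\rv{X}_u^k]}{k!}.
\end{equation*}
Bound each moment by $|\mathbb{E}[\rv{X}_u^k]|\le c^{k-2}\mathrm{Var}(\rv{X}_u)$, and use the elementary inequality $k!\ge 2\cdot 3^{k-2}$ for $k\ge2$. This gives
\begin{equation*}
\mathbb{E}[e^{t\rv{X}_u}]\le 1+\frac{t^2\mathrm{Var}(\rv{X}_u)}{2}\sum_{k\ge2}\Big(\frac{tc}{3}\Big)^{k-2} = 1+\frac{t^2\mathrm{Var}(\rv{X}_u)}{2(1-tc/3)} \le \exp\!\Big(\frac{t^2\mathrm{Var}(\rv{X}_u)}{2(1-tc/3)}\Big).
\end{equation*}
This step, with the geometric-series control of the higher moments, is the only genuinely delicate point. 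The rest is algebra.

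\textbf{Chernoff step and optimization.} By Markov's inequality applied to $e^{tn\rv{s}}$, and by independence (so the moment generating function of the sum factorizes), we get
\begin{equation*}
\Pr(\rv{s}\ge\Delta_{\rm B})\le \exp\!\Big(-tn\Delta_{\rm B}+\frac{n v t^2}{2(1-tc/3)}\Big).
\end{equation*}
Rather than minimizing exactly, I would choose $t=\Delta_{\rm B}/(v+c\Delta_{\rm B}/3)$. This choice satisfies $tc/3<1$, and substituting gives $1-tc/3=v/(v+c\Delta_{\rm B}/3)$. The exponent then collapses to $-n\Delta_{\rm B}^2/(2v+\tfrac{2}{3}c\Delta_{\rm B})$, which is the claimed bound.

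\textbf{Remaining claims.}
\begin{itemize}
\item Non-zero mean: apply the zero-mean result to the centred variables $\rv{X}_u-\mathbb{E}[\rv{X}_u]$. They are still independent, bounded by $c$ by hypothesis, and have the same variances, so the bound on $\Pr(\rv{s}\ge\mathbb{E}[\rv{s}]+\Delta_{\rm B})$ follows.
\item Converse bound: apply the same result to $-(\rv{X}_u-\mathbb{E}[\rv{X}_u])$, which satisfies identical hypotheses.
\item Expression for $\Delta_{\rm B}$ in \cref{eq:DeltaB}: set $\epsilon_{\rm B}$ equal to the exponential bound and take logarithms. This gives the quadratic
\begin{equation*}
n\Delta_{\rm B}^2-\tfrac{2}{3}c\ln(1/\epsilon_{\rm B})\Delta_{\rm B}-2v\ln(1/\epsilon_{\rm B})=0.
\end{equation*}
Its unique positive root is exactly \cref{eq:DeltaB}.
\end{itemize}
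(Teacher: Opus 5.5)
Your proof is correct and is the standard Cram\'er--Chernoff argument (moment-generating-function bound via $k!\ge 2\cdot 3^{k-2}$, then the choice $t=\Delta_{\rm B}/(v+c\Delta_{\rm B}/3)$) that appears in the Boucheron--Lugosi--Massart reference; the paper itself gives no proof of this lemma and only cites that reference. The one detail to add is the degenerate case $v=0$: there your choice of $t$ makes $1-tc/3$ vanish, but the case is trivial, since every $\rv{X}_u=0$ almost surely and the left-hand side is zero.
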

\begin{lemma}[Serfling's inequality~\cite{serfling1974probability}]\label{thm:serfling}
Let $\rv{X}_1, \ldots, \rv{X}_{r+s}$ be bit-valued random variables. Let $\mathcal{S}_m$ be a uniformly random subset of $s$ positions out of the $r+s$ positions. Then, for any $\epsilon>0$,
\begin{equation}
\begin{split}
    \Pr\left[ 
        \frac{1}{r}\sum\limits_{u \notin \mathcal{S}_m}\rv{X}_u 
        \geq 
        \frac{1}{s}\sum\limits_{u \in \mathcal{S}_m}\rv{X}_u 
        + \Delta_{\rm S}(r,s,\epsilon) 
    \right] 
    &\leq \epsilon,
    \\
    \Pr\left[ 
        \frac{1}{r}\sum\limits_{u \notin \mathcal{S}_m}\rv{X}_u 
        \leq 
        \frac{1}{s}\sum\limits_{u \in \mathcal{S}_m}\rv{X}_u 
        - \Delta_{\rm S}(r,s,\epsilon) 
    \right] 
    &\leq \epsilon,
\end{split}
\end{equation}
where
\begin{equation}\label{eq:Delta_S}
   \Delta_{\rm S}(r,s,\epsilon) := \sqrt{\frac{(r+s)(s+1)\ln(1/\epsilon^2)}{r s^2}}.
\end{equation}
\end{lemma}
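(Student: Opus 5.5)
The plan is to reduce the statement to the classical Serfling bound for sampling without replacement, applied to the complementary (unsampled) set, and then to check that the resulting deviation is no larger than the $\Delta_{\rm S}$ given in \cref{eq:Delta_S}. The bound is deterministic in the bit values: it holds for any fixed string $x_1,\dots,x_{r+s}\in\set{0,1}$, with randomness only from the uniform choice of $\mathcal{S}_m$. If the $\rv{X}_u$ are random but independent of $\mathcal{S}_m$, I will condition on their values and average at the end. So I fix a string, set $N:=r+s$, and write $\mu:=\frac{1}{N}\sum_u x_u$, $\rv{A}:=\frac{1}{r}\sum_{u\notin\mathcal{S}_m}x_u$ and $\rv{B}:=\frac{1}{s}\sum_{u\in\mathcal{S}_m}x_u$. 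The identity $N\mu=r\rv{A}+s\rv{B}$ then gives
\begin{equation}
\rv{A}-\mu=\frac{s}{N}\,(\rv{A}-\rv{B}),
\end{equation}
so the event $\rv{A}\ge\rv{B}+\Delta$ coincides with $\rv{A}-\mu\ge s\Delta/N$.

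Since $\mathcal{S}_m$ is a uniformly random $s$-subset, its complement is a uniformly random $r$-subset. Hence $\rv{A}$ is the mean of $r$ draws without replacement from a population of size $N$ with values in $[0,1]$ and mean $\mu$. I will invoke Serfling's original result~\cite{serfling1974probability} (equivalently, Hoeffding's bound strengthened by the finite-population correction):
\begin{equation}
\Pr[\rv{A}-\mu\ge t]\le\exp\!\Bigl(-\frac{2rt^2}{1-(r-1)/N}\Bigr)=\exp\!\Bigl(-\frac{2rNt^2}{s+1}\Bigr),
\end{equation}
using $N-r+1=s+1$. Setting $t=s\Delta/N$ gives the tail $\exp\bigl(-2rs^2\Delta^2/(N(s+1))\bigr)$. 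This is at most $\epsilon$ as soon as $\Delta^2\ge N(s+1)\ln(1/\epsilon)/(2rs^2)$. The stated $\Delta_{\rm S}^2=2N(s+1)\ln(1/\epsilon)/(rs^2)$ is four times this threshold, so the first inequality follows a fortiori.

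For the second inequality I will apply the same argument to the complemented bits $1-x_u$, which exchanges upper and lower deviations, or equivalently use the lower-tail version of Serfling's bound. Finally, averaging over the distribution of the $\rv{X}_u$ preserves the bound $\epsilon$.

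The main obstacle is the finite-population step itself, i.e.\ establishing the tail bound for $\rv{A}-\mu$ with the factor $(s+1)/N$. If I cannot simply cite it, I would reprove it by the standard route. First, the partial sums of draws without replacement, after the appropriate normalization, form a (reverse) martingale. Then a Hoeffding-type bound on the moment generating function along this martingale yields the tail. Failing that, Hoeffding's reduction (sampling without replacement is dominated in convex order by sampling with replacement) gives $\exp(-2rt^2)$. This loses the correction factor, but since $N/(s+1)\le 4N/(s+1)$ it is still sufficient whenever $r\le 4(s+1)$ or so. Only the fully general case needs the proper Serfling correction.
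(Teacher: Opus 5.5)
Your argument is correct and follows essentially the paper's route: the paper gives no derivation and simply cites Serfling (1974), and your identity $\rv{A}-\mu=\frac{s}{r+s}(\rv{A}-\rv{B})$, combined with Serfling's finite-population factor $(s+1)/(r+s)$, is exactly what turns that citation into the stated two-sample form (the given $\Delta_{\rm S}$ in fact yields failure probability $\epsilon^4\le\epsilon$ for $\epsilon\le 1$). One correction to your fallback: plain Hoeffding on the unsampled mean needs exactly $r\le 3s+4$, not $r\le 4(s+1)$; applying Hoeffding instead to the sampled mean $\rv{B}$, via $\mu-\rv{B}=\frac{r}{r+s}(\rv{A}-\rv{B})$, works whenever $r(3s+4)\ge s^2$, so the two regimes together cover all $r,s$ and Serfling's correction is not actually needed for this loose $\Delta_{\rm S}$.
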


\begin{lemma}[Kato's inequality~\cite{kato}]\label{lemma:kato}
Let \(\{\rv{X}_u\}_{u=1}^N\) be a sequence of random variables, and let \(\{\mathcal F_u\}_{u=0}^N\) be a filtration such that \(\rv{X}_1,\dots,\rv{X}_u\) are \(\mathcal F_u\)-measurable for each \(u\); that is, \(\mathcal F_u\subseteq \mathcal F_{u+1}\) and \(\mathbb{E}[\rv{X}_{u'}\mid \mathcal F_u]=\rv{X}_{u'}\) for all \(u\ge u'\).
Define
\begin{equation}
\rv{S}_k:=\sum_{u=1}^k \rv{X}_u,
\qquad
\rv{E}_k^{\mathcal F}:=\sum_{u=1}^k \mathbb{E}[\rv{X}_u\mid \mathcal F_{u-1}].\nonumber
\end{equation}
Suppose that \(0\le \rv{X}_u\le 1\) for all \(u\). Then, for any \(N\in\mathbb N\), \(a\in\mathbb R\), and \(b\in\mathbb R_{\ge 0}\),
\begin{equation}\label{eq:Kato_original_U}
\Pr\!\left[
\rv{E}_{N}^{\mathcal{F}} \geq \rv{S}_N
+ \left(b+a\left(\frac{2\rv{S}_N}{N}-1\right)\right)\sqrt{N}
\right]
\leq
e^{-\frac{2(b^2-a^2)}{\left(1+\frac{4a}{3\sqrt{N}}\right)^2}},
\end{equation}
and
\begin{equation}\label{eq:Kato_original_L}
\Pr\!\left[
\rv{E}_{N}^{\mathcal{F}} \leq \rv{S}_N
- \left(b+a\left(\frac{2\rv{S}_N}{N}-1\right)\right)\sqrt{N}
\right]
\leq
e^{-\frac{2(b^2-a^2)}{\left(1-\frac{4a}{3\sqrt{N}}\right)^2}}.
\end{equation}
\end{lemma}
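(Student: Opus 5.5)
The statement is Kato's inequality from~\cite{kato}. The plan is an exponential supermartingale (Chernoff-type) argument in which the data-dependent deviation term is absorbed into a reweighting of the $\rv{X}_u$.

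\textbf{Rewriting the event.} Set $c:=2a/\sqrt{N}$. The event in~\cref{eq:Kato_original_U} is equivalent to
\begin{equation}
\sum_{u=1}^N \rv{Y}_u \geq (b-a)\sqrt{N}, \qquad \rv{Y}_u:=\mathbb{E}[\rv{X}_u\mid\mathcal F_{u-1}]-(1+c)\rv{X}_u .
\end{equation}
Write $\rv{p}_u:=\mathbb{E}[\rv{X}_u\mid\mathcal F_{u-1}]$, which is $\mathcal F_{u-1}$-measurable. For any $t>0$, I would show that
\begin{equation}
\rv{Z}_k:=\exp\Big(t\sum_{u\le k}\rv{Y}_u - k\,\kappa(t)\Big)
\end{equation}
is a supermartingale, where $\kappa(t)$ is a deterministic function. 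This reduces to the one-step bound
\begin{equation}
\mathbb{E}\big[e^{t\rv{Y}_u}\mid\mathcal F_{u-1}\big]\le e^{\kappa(t)} .
\end{equation}
Given this, $\mathbb{E}[\rv{Z}_N]\le 1$, and Markov's inequality gives failure probability at most $\exp\big(-t(b-a)\sqrt{N}+N\kappa(t)\big)$. Optimizing over $t$ should then produce the stated exponent.

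\textbf{One-step bound.} Since $0\le\rv{X}_u\le1$, convexity of $x\mapsto e^{-sx}$ gives $e^{-sx}\le 1-x+xe^{-s}$. Hence
\begin{equation}
\log\mathbb{E}[e^{t\rv{Y}_u}\mid\mathcal F_{u-1}]\le f(\rv{p}_u),\qquad f(p):=tp+\log\big(1-p+pe^{-t(1+c)}\big).
\end{equation}
It then suffices to bound $\sup_{p\in[0,1]}f(p)$. Plain Hoeffding's lemma gives $f(p)\le -tcp+t^2(1+c)^2/8$. That only yields an exponent $2(b-a)^2/(1+c)^2$, which is weaker than $2(b^2-a^2)$ in the regime of interest, so it is not enough.

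\textbf{Main obstacle: the refined cumulant bound.} To recover the $b^2-a^2$ structure I would use a sharper, variance-sensitive expansion of $\log(1-p+pe^{-s})$ with $s=t(1+c)$. Its second-order term is $s^2p(1-p)/2$, and I would control the third-order remainder by a factor of the form $(1+\tfrac{2}{3}s)$. The goal is a bound of the form
\begin{equation}
f(p)\le -tcp+\tfrac{t^2}{2}p(1-p)\big(1+\tfrac{4a}{3\sqrt N}\big)^2+\dots,
\end{equation}
which I would then maximise jointly over $p$. The linear term $-tcp$ shifts the maximiser away from $p=1/2$, and completing the square should convert $(b-a)(b+a)$ into $b^2-a^2$. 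The factor $(1+4a/(3\sqrt N))^2$ should come from the cubic remainder evaluated at the optimal $t\approx 4b/\big(\sqrt N(1+4a/(3\sqrt N))^2\big)$. I expect this calculus step, namely finding a clean uniform-in-$p$ bound that matches Kato's constants exactly, to be the hardest part. If matching the exact constants proves awkward, I would first verify the structure numerically and then follow Kato's original Taylor-remainder estimate.

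\textbf{Lower tail.} The bound~\cref{eq:Kato_original_L} follows from the upper-tail bound applied to $\rv{X}'_u:=1-\rv{X}_u$, which again satisfies $0\le\rv{X}'_u\le1$. Under this map $\rv{S}'_N=N-\rv{S}_N$ and $\rv{E}'^{\mathcal F}_N=N-\rv{E}_N^{\mathcal F}$, and a direct substitution turns the lower-tail event for $\rv{X}$ into the upper-tail event for $\rv{X}'$ with $a\to-a$. This reproduces the denominator $\big(1-\tfrac{4a}{3\sqrt N}\big)^2$.
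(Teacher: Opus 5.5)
Your reduction is sound. Rewriting the event as $\sum_u[\rv{p}_u-(1+c)\rv{X}_u]\ge(b-a)\sqrt N$ is correct, as are the exponential supermartingale with the convexity (Bernoulli) bound, Markov's inequality, and the lower tail via $\rv{X}_u\mapsto 1-\rv{X}_u$, $a\mapsto -a$. The paper gives no proof of its own; it simply imports the lemma from Kato. However, everything specific to Kato's inequality lies in the step you leave open: an exact, non-asymptotic bound on $\kappa(t)=\sup_{p\in[0,1]}f(p)$. Deferring that step to numerics or to Kato's original estimate is not a proof.

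Moreover, the route you sketch for that step cannot work. Write $K:=1+\frac{4a}{3\sqrt N}$ and $s=t(1+c)$. Your target $f(p)\le -tcp+\frac{t^2}{2}K^2p(1-p)$ is false near $p=1$ for every $a\ge 0$ and $t>0$. At $p=1-\delta$ one has $f(p)+tcp=\delta(e^{s}-1-s)+O(\delta^2)$, while $e^{s}-1-s>\frac{s^2}{2}=\frac{t^2(1+c)^2}{2}\ge\frac{t^2K^2}{2}$. More generally, any $p$-uniform bound $f(p)+tcp\le\frac{t^2}{2}\Phi\,p(1-p)$ forces $\Phi\ge(1+c)^2\frac{2(e^{s}-1-s)}{s^2}>(1+c)^2(1+\frac{s}{3})$ for $s>0$. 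The Chernoff optimisation then gives at best the exponent $-2(b^2-a^2)/\Phi$ with, to leading order in $1/\sqrt N$, $\Phi\approx 1+\frac{4a}{\sqrt N}+\frac{4b}{3\sqrt N}$. That is strictly worse than $K^2\approx 1+\frac{8a}{3\sqrt N}$ whenever $b>|a|$.

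The $a$-dependent constant has a specific origin. The cubic cumulant term $-\frac{s^3}{6}p(1-p)(1-2p)$ is odd about $p=\frac12$. At the worst-case $p^\ast\approx\frac12-\frac{a}{2b}$ it multiplies the variance term by $1-\frac{s}{3}(1-2p^\ast)\approx 1-\frac{4a}{3\sqrt N}$. Combined with $(1+c)^2\approx 1+\frac{4a}{\sqrt N}$, this yields $K^2$ to first order. A uniform remainder factor such as $(1+\frac23 s)$ discards exactly this information and produces a $b$-dependent correction instead.

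What is needed is a direct bound on the supremum at a specific $t$. Use $c=\frac32(K-1)$ and choose $t=\frac{4b}{\sqrt N K^2}$ (for $K\neq 0$). A short computation then shows that $\exp\big(-t(b-a)\sqrt N+N\kappa(t)\big)\le\exp\big(-2(b^2-a^2)/K^2\big)$ is equivalent to the $N$-independent two-parameter inequality
\[
\sup_{p\in[0,1]}\Big[tp+\log\big(1-p+p\,e^{-t(3K-1)/2}\big)\Big]\le\frac{\big(tK^2-3(K-1)\big)^2}{8K^2}.
\]
This inequality reduces to Hoeffding's lemma at $K=1$. It holds with equality when $b=|a|$ (at $p=0$ for $a>0$, at $p=1$ for $a<0$), so there is no slack to spare. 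Proving it, together with the degenerate case $K=0$, is the actual content of the lemma, and your proposal does not supply it.
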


Since the parameters $a$ and $b$ introduced in~\cref{lemma:kato} can be freely chosen, the bounds provided in~\cref{eq:Kato_original_U,eq:Kato_original_L} can be fine-tuned prior to the protocol execution based on a prediction of the random variable $\rv{S}_N$. In particular, to get tight upper and lower bounds on the sum of conditional probabilities, one can calculate the values of $a$ and $b$ that minimize the deviation term in~\cref{eq:Kato_original_U,eq:Kato_original_L}, respectively, for a prediction $S_N^{\rm gs}$ of $\rv{S}_N$ and for a fixed failure probability $\epsilon$. That is, for the upper bound we minimize
\begin{equation}\label{eq:opt_problem}
\begin{split}
    \min_{a,b} & \quad  \left[ b + a\left( \frac{2S_N^{\rm gs}}{N} - 1 \right) \right] \sqrt{N}
    \\
    \text{s.t.} & \quad
    \exp\left( \frac{-2(b^2 - a^2)}{\left(1 + \frac{4a}{3\sqrt{N}} \right)^2} \right) = \epsilon, 
\quad b \geq |a|,
\end{split}
\end{equation}
and similarly for the lower bound. The specific analytical solutions $b_{\rm U}$ and $a_{\rm U}$ for the upper bound, and $b_{\rm L}$ and $a_{\rm L}$ for the lower bound---which implicitly depend on $N$, $S_N^{\rm gs}$ and $\epsilon$---are given by~\cite{curras2021finite} 
\begin{equation}\label{eq:solutions_ab}
\begin{split}
a_{\rm U} &= 
\frac{
3\bigl[
72 \sqrt{N}\, S^{\rm gs}_{N}(N - S^{\rm gs}_{N}) \ln \epsilon 
- 16 N^{3/2} (\ln \epsilon)^2 
+ 9\sqrt{2}N 
\sqrt{-(N - 2 S^{\rm gs}_{N})^2\ln \epsilon \left( 9 S^{\rm gs}_{N}(N - S^{\rm gs}_{N}) - 2N \ln \epsilon \right)}
\bigr]
}{
4(9N - 8\ln \epsilon)\left[9 S^{\rm gs}_{N}(N - S^{\rm gs}_{N}) - 2N \ln \epsilon\right]
},
\\
b_{\rm U} &=
\frac{
\sqrt{
18 a^2_{\rm U} N 
- (16 a^2_{\rm U} + 24 a_{\rm U} \sqrt{N} + 9N)\ln \epsilon
}
}{
3\sqrt{2N}
},
\\
a_{\rm L} &=
\frac{
3\Bigl[
-72\sqrt{N}\, S^{\rm gs}_{N}(N - S^{\rm gs}_{N}) \ln \epsilon
+ 16 N^{3/2} (\ln \epsilon)^2
+ 9\sqrt{2}N
\sqrt{-(N - 2 S^{\rm gs}_{N})^2\ln \epsilon \left( 9 S^{\rm gs}_{N}(N - S^{\rm gs}_{N}) - 2N \ln \epsilon \right)}
\Bigr]
}{
4(9N - 8\ln \epsilon)\left[9 S^{\rm gs}_{N}(N - S^{\rm gs}_{N}) - 2N \ln \epsilon\right]
},
\\
b_{\rm L} &=
\frac{
\sqrt{
18 a^2_{\rm L} N
- (16 a^2_{\rm L} - 24 a_{\rm L} \sqrt{N} + 9N)\ln \epsilon
}
}{
3\sqrt{2N}
}.
\end{split}
\end{equation}
Given~\cref{eq:solutions_ab}, we can write the deviation terms as
\begin{equation}\label{eq:deltasK}
\begin{split}
    \Delta_{\mathrm{K}}^{\mathrm{L}} (N,\rv{S}_N,S_N^{\rm gs},\epsilon)
    &= \left[ b_{\rm L} + a_{\rm L}\left( \frac{2\rv{S}_N}{N} - 1 \right) \right] \sqrt{N},\\
    \Delta_{\mathrm{K}}^{\mathrm{U}} (N,\rv{S}_N,S_N^{\rm gs},\epsilon)
    &= \left[ b_{\rm U} + a_{\rm U}\left( \frac{2\rv{S}_N}{N} - 1 \right) \right] \sqrt{N},
\end{split}
\end{equation}
such that each of the bounds
\begin{equation}\label{eq:Kfunctions}
\begin{split}
    \sum_{u=1}^{N}\mathbb{E}[\rv{X}_{u}|\mathcal{F}_{u-1}] &\geq \rv{S}_N - \Delta_{\rm K}^{\rm L}(N,\rv{S}_N,S_N^{\rm gs},\epsilon),\\
    \sum_{u=1}^{N}\mathbb{E}[\rv{X}_{u}|\mathcal{F}_{u-1}] &\leq \rv{S}_N + \Delta_{\rm K}^{\rm U}(N,\rv{S}_N,S_N^{\rm gs},\epsilon),
\end{split}
\end{equation}
holds except with probability $\epsilon$.

Besides, from~\cref{lemma:kato} one can readily derive tight upper and lower bounds on $\rv{S}_N$ \cite{curras2021finite,navarrete2022improved}. In particular, for $\tilde{a}_{\rm U} < \sqrt{N}/2$, the following inequality holds
\begin{equation}\label{eq:K3_first}
\begin{split}
    \rv{S}_{N} 
    &\leq 
    \frac{N}{\sqrt{N}-2\tilde{a}_{\rm U}}\left(\frac{1}{\sqrt{N}} \rv{E}_{N}^{\mathcal{F}}+\tilde{b}_{\rm U}-\tilde{a}_{\rm U}\right),
\end{split}
\end{equation}
except with failure probability less than or equal to $\exp\left[\frac{-2(b^2-a^2)}{(1-\frac{4a}{3\sqrt{N}})^2}\right]$. In~\cref{eq:K3_first}, the best $\tilde{a}_{\rm U}$ and $\tilde{b}_{\rm U}$ can be found again by fixing the failure probability to $\epsilon$ and considering a particular guess $E_{N}^{\mathcal{F},{g\rm s}}$ of $\rv{E}_{N}^{\mathcal{F}}$. Specifically, the values of $\tilde{a}_{\rm U}$ and $\tilde{b}_{\rm U}$ that minimize the RHS of~\cref{eq:K3_first} are given by
\begin{equation}\label{eq:optimal_ab_tilde}
\begin{split}
\tilde{a}_{\rm U} &=
\frac{3\sqrt{N}}{4}
\frac{
9  \sqrt{(N - 2 E_{N}^{\mathcal{F},{g\rm s}})^2N \ln\epsilon\,\bigl[N \ln\epsilon + 18 E_{N}^{\mathcal{F},{g\rm s}}(E_{N}^{\mathcal{F},{g\rm s}} - N)\bigr]}
+ 4N (\ln\epsilon)^{2}
+ 9 \ln\epsilon\, [3N^{2} - 8N E_{N}^{\mathcal{F},{g\rm s}} + 8 (E_{N}^{\mathcal{F},{g\rm s}})^{2}]
}{
36 \ln\epsilon\, [N^{2} - 2N E_{N}^{\mathcal{F},{g\rm s}} + 2 (E_{N}^{\mathcal{F},{g\rm s}})^{2}]
+ 4N (\ln\epsilon)^{2}
+ 81N E_{N}^{\mathcal{F},{g\rm s}}(N - E_{N}^{\mathcal{F},{g\rm s}})
},
\\
\tilde{b}_{\rm U} &=
\frac{
\sqrt{
18 \tilde{a}^{2}_{\rm U} N
- (16\tilde{a}^{2}_{\rm U} - 24\tilde{a}_{\rm U}\sqrt{N} + 9N)\ln\epsilon
}
}{
3\sqrt{2N}
}.
\end{split}
\end{equation}
As for the lower bound, we have that, for $\tilde{a}_{\rm L} > - \sqrt{N}/2$, the following inequality holds
\begin{equation}\label{eq:K4}
\begin{split}
    \rv{S}_{N} 
    &\geq 
    \frac{N}{\sqrt{N}+2\tilde{a}_{\rm L}}\left(\frac{1}{\sqrt{N}} \rv{E}_{N}^{\mathcal{F}}-\tilde{b}_{\rm L}+\tilde{a}_{\rm L}\right),
\end{split}
\end{equation}
except with failure probability less than or equal to $\exp\left[\frac{-2(b^2-a^2)}{(1+\frac{4a}{3\sqrt{N}})^2}\right]$. Again, by fixing the failure probability to $\epsilon$ and now maximizing the RHS of~\cref{eq:K4}, we obtain
\begin{equation}
\begin{split}
\tilde{a}_{\rm L}
& =
\frac{3\sqrt{N}}{4}
\frac{
9 
\sqrt{(N-2 E_{N}^{\mathcal{F},{g\rm s}})^2N \ln \epsilon [N \ln \epsilon + 18 E_{N}^{\mathcal{F},{g\rm s}} (E_{N}^{\mathcal{F},{g\rm s}}-N)]}
-4 N (\ln\epsilon)^2 -9 \ln\epsilon [3 N^2-8 N E_{N}^{\mathcal{F},{g\rm s}}+8 (E_{N}^{\mathcal{F},{g\rm s}})^2]
}{
36 \ln (\epsilon) [N^2-2 N E_{N}^{\mathcal{F},{g\rm s}}+2 (E_{N}^{\mathcal{F},{g\rm s}})^2]+4 N (\ln\epsilon)^2+81 N E_{N}^{\mathcal{F},{g\rm s}} (N-E_{N}^{\mathcal{F},{g\rm s}})},
\\
\tilde{b}_{\rm L}
& = \frac{\sqrt{18 \tilde{a}_{\rm L}^2 N -(16 \tilde{a}_{\rm L}^2  + 24 \tilde{a}_{\rm L} \sqrt{N} +9 N) \ln\epsilon}
}{
3 \sqrt{2N}}.
\end{split}
\end{equation}

\cref{lemma:kato} assumes normalized random variables satisfying $0\le \rv{X}_u\le1$. To deal with general unnormalized random variables $\rv{X}_u$ satisfying $x_{\rm min}\leq \rv{X}_u\leq x_{\rm max}$, we define associated normalized random variables as
\begin{equation}\label{eq:normalizationRVs}
    \tilde{\rm n}[\rv{X}_u] = \frac{\rv{X}_u - x_{\rm min}}{x_{\rm max}-x_{\rm min}}
\end{equation}
that do satisfy $0\leq \tilde{\rm n}[\rv{X}_u]\leq 1$. Thus, from the linearity of the expectation, it immediately follows from~\cref{eq:Kfunctions} that the unnormalized random variables satisfy the probabilistic bounds
\begin{equation}\label{eq:Kfunctions_unnormalized}
\begin{split}
    \sum_{u=1}^{N}\mathbb{E}[\rv{X}_{u}|\mathcal{F}_{u-1}] 
    &\geq  \rv{S}_N - (x_{\rm max}-x_{\rm min})\Delta_{\rm K}^{\rm L}(N,{\rm n}[\rv{S}_N],{\rm n}[S_N^{\rm gs}],\epsilon)\\
    & =: K_{\epsilon,N,x_{\min},x_{\max}}^{-1}(\rv{S}_N),
    \\
    \sum_{u=1}^{N}\mathbb{E}[\rv{X}_{u}|\mathcal{F}_{u-1}] 
    &\leq  \rv{S}_N + (x_{\rm max}-x_{\rm min})\Delta_{\rm K}^{\rm U}(N,{\rm n}[\rv{S}_N],{\rm n}[S_N^{\rm gs}],\epsilon)\\
    & =: K_{\epsilon,N,x_{\min},x_{\max}}^{1}(\rv{S}_N),
\end{split}
\end{equation}
except with probability $\epsilon$ each, where
\begin{equation}
\begin{split}
    {\rm n}[\rv{S}_N] :=  \sum_{u=1}^N \tilde{\rm n}\left[\rv{X}_u\right] = \frac{\rv{S}_N - N x_{\rm min}}{x_{\rm max}-x_{\rm min}} = N\tilde{\rm n}[\rv{S}_N],
\end{split}
\end{equation}
and ${\rm n}[S_N^{\rm gs}]$ is an arbitrary prediction on ${\rm n}[\rv{S}_N]$. Note that ${\rm n}[\rv{S}_N]$ is not the normalized random variable associated with $\rv{S}_N$; that role is played by $\tilde{\rm n}[\rv{S}_N]$.

\subsection{Upper bound on the number of phase errors of the VEP}
Now we show how to derive a probabilistic upper bound on the number of phase errors of the VEP, namely $\rv{M}_{\rm ph}$, by using Kato's inequality. We start from~\cref{eq:ineq_conditional_expect_Main}, which comprises three summations related with different types of RVs: those associated with the phase-errors ($\rv{\chi}_{\rm ph}^{u}$); those associated with the observable statistics ($\rv{\chi}_{{\rm Q},l}^{u}$); and those associated with the fictitious local measurements ($\chiT{u}$). Our first goal is to obtain an upper bound on the RHS of~\cref{eq:ineq_conditional_expect_Main} that solely depends on the actual observable statistics. 

For this, we note that the first term in the RHS of~\cref{eq:ineq_conditional_expect_Main} can be bounded directly by employing~\cref{eq:Kfunctions_unnormalized}. In doing so, we obtain the probabilistic bound
\begin{equation}\label{eq:K1}
\begin{split}
    &\sum_l\eta_l^*\sum_{u=1}^{N} \mathbb{E}\left[\rv{\chi}_{{\rm Q},l}^{u}|\mathcal{F}_{u-1}\right] 
    \leq \sum_l\eta_l^*K_{\epsilon_l,N,x_{\min}^l,x_{\max}^l}^{\text{sign}(\eta_l^*)}(\rv{M}_{{\rm Q},l}),
\end{split}
\end{equation}
which holds except with probability $\epsproK=\sum_l \epsilon_l$, where the function $\text{sign}(x)$ returns 1 (-1) when $x\geq 0$ ($x<0$). The quantities $x_{\min}^l$ and $x_{\max}^l$ represent the minimum and maximum values that the random variables $\rv{\chi}_{Q,l}^{u}$ can take, respectively, and $\epsilon_l$---we fix $\epsilon_l=\epsproK/n_{\rm Q}$ for simplicity---represents the failure probability of the $l$-th Kato's bound for the summation in the RHS of~\cref{eq:K1}.  

Similarly, to upper bound the second term in the RHS of~\cref{eq:ineq_conditional_expect_Main}, we can use~\cref{eq:Kfunctions_unnormalized,eq:deltasK} to obtain
\begin{equation}\label{eq:K2}
\begin{split}
    \sum_{u=1}^N \mathbb{E}\left[\chiT{u}|\mathcal{F}_{u-1}\right] 
    &\leq 
    K_{\epsproK,N,\omega_{\min},\omega_{\max}}^{1}(\rv{M}_{\Tsind})
    \\
    &\leq
    K_{\epsproK,N,\omega_{\min},\omega_{\max}}^{1}(M_{\Tsind}^{\rm U}),
\end{split}
\end{equation}
which holds except with failure probability at most $\epsproK+\epsproB$. Here, $M_{\Tsind}^{\rm U}$ is an upper bound on $\rv{M}_{\Tsind}:=\sum_{u=1}^N\chiT{u}$ that holds except with probability $\epsproB$---which is provided in~\cref{thm:correlations}---and the quantities $\omega_{\min}$ and $\omega_{\max}$ represent the minimum and maximum values that the random variables $\chiT{u}$ can take, respectively. The second inequality of~\cref{eq:K2} holds if  $K_{\epsproK,N,\omega_{\min},\omega_{\max}}^{1}$ is nondecreasing, which can be enforced by substituting $a_{\rm U}$ with $\max\set{-\sqrt{N}/2,a_{\rm U}}$ in the definition of $b_{\rm U}$ in \cref{eq:solutions_ab}, and then again substituting $a_{\rm U}$ with $\max\set{-\sqrt{N}/2,a_{\rm U}}$ and the new definition of $b_{\rm U}$ in \cref{eq:deltasK}.

Finally, we can use~\cref{eq:K3_first} to obtain the inequality
\begin{equation}\label{eq:K3}
\begin{split}
    \rv{M}_{\rm ph} 
    &\leq 
    \frac{N}{\sqrt{N}-2\tilde{a}_{\rm U}}\left(\frac{1}{\sqrt{N}}\sum_{u=1}^N \mathbb{E}\left[\rv{\chi}_{\rm ph}^{u}|\mathcal{F}_{u-1}\right]+\tilde{b}_{\rm U}-\tilde{a}_{\rm U}\right),
\end{split}
\end{equation}
which holds except with failure probability at most $\epsproK$. The parameters $\tilde{a}_{\rm U}$ and $\tilde{b}_{\rm U}$ are given in~\cref{eq:optimal_ab_tilde} and are determined by the failure probability $\epsproK$ together with an unconfirmed guess of the sum of conditional expectations $\sum_{u=1}^N\mathbb{E}\left[\rv{\chi}_{\rm ph}^{u}|\mathcal{F}_{u-1}\right]$---or a guess of its upper bound. Such guess can be obtained from a model of the channel or from previous executions of the protocol. By combining~\cref{eq:ineq_conditional_expect_Main} in the main text with~\cref{eq:K1,eq:K2,eq:K3}, we find that the inequality
\begin{equation}\label{eq:Kfinal}
\begin{split}
    \rv{M}_{\rm ph} 
    &\leq 
    \frac{N}{\sqrt{N}-2\tilde{a}_{\rm U}}\left[\frac{1}{\sqrt{N}}
    \left(
    \sum_l\eta_l^*K_{\epsilon_l,N,x_{\min}^l,x_{\max}^l}^{\text{sign}(\eta_l^*)}(\rv{M}_{{\rm Q},l})
    + 
    \frac{1-\ptrash}{\ptrash} K_{\epsproK,N,\omega_{\min},\omega_{\max}}^{1}(M_{\Tsind}^{\rm U}) 
    \right)
    +\tilde{b}_{\rm U}-\tilde{a}_{\rm U}\right]
    \\
    &:= \rv{M}_{\rm ph}^{\rm U},
\end{split}
\end{equation}
holds except with probability at most $3\epsproK+\epsproB$.

\subsection{Imperfect detectors}
Next, we extend the upper bound on the number of phase errors of the VEP, $\rv{M}_{\rm ph}^{\rm U}$, to the case in which there is a detection-efficiency mismatch at Bob's receiver.
In particular, we consider that the detection efficiencies and dark counts of Bob's single-photon detectors are not known precisely, but are only characterized within some known tolerances $\Delta_\eta$ and $\Delta_{\rm dc}$, respectively. That is
\begin{equation}\label{eq:etadcmodel}
\begin{aligned}
\eta_{\beta_b} &\in [\eta_{\rm det}(1- \Delta_\eta),\eta_{\rm det}(1+ \Delta_\eta)], \\
d_{\beta_b} &\in [d_{\rm det}(1- \Delta_{\rm dc}),d_{\rm det}(1+ \Delta_{\rm dc})],
\end{aligned}
\end{equation} 
where $\eta_{\beta_b}$ ($d_{\beta_b}$) denotes the detection efficiency (dark-count rate) of Bob's detector associated with the basis $\beta$ and the bit value $b$.

Now, note that $\rv{M}_{\rm ph}^{\rm U}$ in \cref{eq:Kfinal} is independent of $\rv{M}_{\rm key}$ and is therefore non-decreasing with respect to $\rv{M}_{\rm key}$. Moreover, the ratio $\rv{M}_{\rm ph}^{\rm U}/\rv{M}_{\rm key}$ is non-increasing with respect to $\rv{M}_{\rm key}$. Consequently, as shown in \cite{curras2025security}, we can define the extended upper bound $\rv{M}_{\rm ph,\delta}^{\rm U}$ as
\begin{equation}
    \rv{M}_{\rm ph,\delta}^{\rm U} = \rv{M}_{\rm key}\left[\frac{\frac{\rv{M}_{\rm ph}^{\rm U}}{\rv{M}_{\rm key}} + \delta_1 + \gamma_{ \rm bin}^{\epsilon_{\rm dep-1}}(\rv{M}_{\rm key},\delta_1)}{1 - \delta_2 - \gamma_{ \rm bin}^{\epsilon_{\rm dep-2}}(\rv{M}_{\rm key},\delta_2) }\right],
    \label{eq:M_ph_delta}
\end{equation}
where $\delta_1$ and $\delta_2$ quantify the detection-efficiency mismatch and $\gamma_{\rm bin}$ is a finite-size deviation term defined as  
\begin{equation}
    \gamma_{\rm bin}^{\epsilon} (M, \delta) := \min \left\{ x \geq 0: \sum^{M}_{i = \lfloor M (\delta + x)\rfloor} \binom{M}{i} \delta^i (1-\delta)^{M-i} \leq \epsilon^2  \right\},
\end{equation} 
for any failure probabilities $\epsilon_{\rm dep-1},\epsilon_{\rm dep-2} > 0$. Then, in the presence of a detection-efficiency mismatch, it holds that
\begin{equation}
\Pr (\rv{M}_{\rm ph} > \rv{M}_{\rm ph,\delta}^{\rm U}) \leq 3\epsproK + \epsproB + \epsilon_{\rm dep-1}^2 + \epsilon_{\rm dep-2}^2.
\label{eq:M_ph_prob}
\end{equation}

To compute $\delta_1$ and $\delta_2$, we consider the canonical model for Bob's detectors, as in \cite{tupkary2024phase}. In this scenario, we have that $\delta_1$ and $\delta_2$ are bounded by
\begin{equation}
\begin{aligned}
&\delta_1 \leq \max \Bigg\{\Bigg(1- \frac{1-(1-d_{\rm min})^2}{1-(1-d_{\rm max})^2}\Bigg) \frac{d_{\rm max}(2-d_{\rm min})}{1-(1-d_{\rm min})^2},4 \Bigg|1 - \sqrt{1-(1-d_{\rm min})^2(1-r_{\eta})}\Bigg|\Bigg\}, \\ 
&\delta_2 \leq \max \Bigg\{ 1 - \frac{1-(1-d_{\rm min})^2}{1-(1-d_{\rm max})^2}, (1-d_{\rm min})^2(1-r_\eta) \Bigg\},
\end{aligned}
\end{equation}
where
\begin{equation}
\begin{aligned}
&d_{\rm max} = \max_{\beta\in \{X,Z\}} \{d_{\beta_0},d_{\beta_1}\} \leq d_{\rm det} (1+\Delta_{\rm dc}), \\
&d_{\rm min} = \min_{\beta\in \{X,Z\}} \{d_{\beta_0},d_{\beta_1}\} \geq d_{\rm det} (1-\Delta_{\rm dc}),\\
& r_\eta = (1-\Delta_\eta)/(1+\Delta_\eta).
\end{aligned}
\end{equation}
We remark that our analysis is not restricted to the canonical model; it can be applied to any detector model, as long as a bound on $\delta_1$ and $\delta_2$ can be obtained.

\subsection{Upper bound on the phase-error rate of the VP}
We now derive an upper bound on the number of phase errors $\rv{N}_{\rm ph}$ of the VP—a protocol in which no \LC{} rounds are defined—starting from the bound on $\rv{M}_{\rm ph}$ in the VEP given by~\cref{eq:M_ph_prob}.

To do so, we rely on two observations. First, the upper bound in~\cref{eq:M_ph_prob} depends solely on the RVs $\rv{M}_{{\rm Q},l}$ and $\rv{M}_{\rm key}$, i.e., it is independent of the outcomes of the \LC{} rounds. Second, the same bound applies in the VP. To see this, note that the VEP and the VP are indistinguishable from Eve's perspective during the quantum communication phase, since they differ only in the subsequent classical post-processing, where the VP requires public announcements that include information from rounds that are tagged as \LC{} in the VEP. Since the values of $\rv{M}_{{\rm Q},l}$, $\rv{M}_{\rm key}$, and $\rv{M}_{\rm ph}$ are already determined by the end of the quantum phase, and Eve cannot influence the protocol statistics after that point, the joint distribution of these RVs is identical in the two protocols. Consequently, any probabilistic statement about them that holds in the VEP, and in particular~\cref{eq:M_ph_prob}, also holds in the VP.

Note that this argument implicitly assumes that the RVs $\rv{M}_{{\rm Q},l}$, $\rv{M}_{\rm key}$, and $\rv{M}_{\rm ph}$ can be defined in the VP in the first place. They can, since we can view them as a random subsample—with sampling probability $(1-\ptrash)$—of the observable quantities $\rv{N}_{{\rm Q},l}$, $\rv{N}_{\rm key}$, and $\rv{N}_{\rm ph}$ measured in the VP.  More concretely, we can consider that Alice generates a binary random sequence $\vos{\rv{r}}_{\rm \LC}:=\rv{r}_{\rm \LC}^{1},\dots,\rv{r}_{\rm \LC}^{N}$, such that $\rv{r}_{\rm \LC}^{u}$ indicates if the round $u$ is tagged ($\rv{r}^u_{\rm \LC}=1$) or not ($\rv{r}^u_{\rm \LC}=0$) as a \LC{} round in the VEP, and use it to compute $\rv{M}_{{\rm Q},l}$ ($\rv{M}_{\rm key}$) from the quantities $\rv{N}_{\rm test}^{i,\beta,b}$ ($\rv{N}_{\rm key}$)---in practice, though, it suffices to generate $\alpha$-biased bits only for the key rounds, and those test rounds that have been assigned to some $\rv{\chi}_{{\rm Q},l}^u$ in the VEP. 

Similarly, the set of phase-error events in the VEP can also be regarded as a random sample from the set of phase-error events in the VP. Thus, from Serfling's inequality (see~\cref{thm:serfling})---we note that tighter bounds can be found in~\cite{mannalath2024sharp}, though the improvement is expected to be negligible for the parameter regimes considered here---the number of phase errors in the subset of key rounds tagged as \LC{}, namely $\rv{S}_{\rm ph}:=\sum_{u=1}^N\rv{\chi}_{\rm ph}^u r_{\rm \LC}^u=\rv{N}_{\rm ph}-\rv{M}_{\rm ph}$, can be estimated via a probabilistic bound of the form
\begin{equation}\label{eq:application_of_serfling}
\begin{split}
    &\Pr[\rv{S}_{\rm ph} 
    \geq 
    \frac{\rv{S}_{\rm key}}{\rv{M}_{\rm key}}\rv{M}_{\rm ph} 
    + \rv{S}_{\rm key}\Delta_{\rm S}(\rv{S}_{\rm key},\rv{M}_{\rm key},\epsproS)]
    \\
    &=\sum_{s,m}\Pr[\rv{S}_{\rm key}=s,\rv{M}_{\rm key}=m]\Pr[\rv{S}_{\rm ph} 
    \geq 
    \frac{s}{m}\rv{M}_{\rm ph} 
    + s\Delta_{\rm S}(s,m,\epsproS)\mid\rv{S}_{\rm key}=s,\rv{M}_{\rm key}=m]
    \leq \epsproS
\end{split}
\end{equation}
where the last inequality follows from~\cref{thm:serfling}, $\rv{S}_{\rm key}:=\sum_{u=1}^N\rv{\chi}_{\rm key}^u r_{\rm \LC}^u$ denotes the number of sifted-key rounds tagged as \LC, and $\Delta_{\rm S}(r,s,\epsilon)$ is given in~\cref{eq:Delta_S}.

Therefore, from~\cref{eq:M_ph_prob,eq:application_of_serfling} it follows that
\begin{equation}\label{eq:final_bound_Nph}
\begin{split}
    \rv{N}_{\rm ph} 
    & = \rv{M}_{\rm ph} + \rv{S}_{\rm ph}
    \\
    &\leq 
    \rv{M}_{\rm ph}\left(1 + \frac{\rv{S}_{\rm key}}{\rv{M}_{\rm key}}\right) + \rv{S}_{\rm key}\Delta_{\rm S}(\rv{S}_{\rm key},\rv{M}_{\rm key},\epsproS)
    \\
    &\leq\rv{M}_{\rm ph,\delta}^{\rm U} \left(1 + \frac{\rv{S}_{\rm key}}{\rv{M}_{\rm key}}\right) + \rv{S}_{\rm key}\Delta_{\rm S}(\rv{S}_{\rm key},\rv{M}_{\rm key},\epsproS)
    \\
    & =:  N_{\rm ph}^{\rm U}\left(\rv{M}_{{\rm Q},l},\rv{M}_{\rm key},\rv{S}_{\rm key}\right)
\end{split}
\end{equation}
holds except with probability at most $\epsilon_{\rm pro}=3\epsproK+\epsproB+\epsproS+ \epsilon_{\rm dep-1}^2 + \epsilon_{\rm dep-2}^2$. Note that if there is no detection-efficiency mismatch, the quantity $\rv{M}_{\rm ph,\delta}^{\rm U}$ in~\cref{eq:final_bound_Nph} can be substituted by $\rv{M}_{\rm ph}^{\rm U}$, being now the failure probability $\epsilon_{\rm pro}=3\epsproK+\epsproB+\epsproS$. 
In doing so, we obtain the following bound

\begin{equation}
    \Pr[\rv{e}_{\rm ph}> e_{\rm ph}^{\rm U}(\rv{M}_{\rm key},\rv{M}_{\rm test}^{i,\beta,b},\rv{N}_{\rm key})]\leq \epsilon_{\rm pro},
\end{equation}
where
\begin{equation}
    e_{\rm ph}^{\rm U}(\rv{M}_{\rm key},\rv{M}_{\rm test}^{i,\beta,b},\rv{N}_{\rm key})
    = \frac{N_{\rm ph}^{\rm U}\left(\rv{M}_{{\rm Q},l},\rv{M}_{\rm key},\rv{N}_{\rm key}-\rv{M}_{\rm key}\right)}{\rv{N}_{\rm key}}.
\end{equation}

We remark that, alternatively, one could straightforwardly obtain bounds on $\rv{M}_{\rm key}$ and $\{\rv{M}_{\rm test}^{i,\beta,b}\}$ from the overall observed counts $\rv{N}_{\rm key}$ and $\{\rv{N}_{\rm test}^{i,\beta,b}\}$ by applying standard concentration bounds for sums of independent random variables.  By doing so, one obtains a simpler bound of the form
\begin{equation}
    \Pr[\rv{e}_{\rm ph}> \tilde e_{\rm ph}^{\rm U}(\rv{N}_{\rm key},\rv{N}_{\rm test}^{i,\beta,b})]\leq \epsilon_{\rm pro},
\end{equation}
that is independent of the binary random sequence $\vos{\rv{r}}_{\rm \LC}$. This avoids having to actually generate these random bits in the actual protocol. However, it comes at the cost of a small reduction in the finite-key rate, since one needs to apply some additional concentration bounds.
\section{Technical results}\label{app:correlated_sources}
Here we prove \cref{thm:correlations,thm:correlations_decoy}, which are used in~\cref{sec:SecurityAnalysisNew} of the main text and \cref{app:general_protocols_decoy_state}, respectively. For this, we start by introducing~\cref{lemma:density_op_correlations}, which is an extension of the results derived in Appendix A of~\cite{pereira2025optimal} to correlated sources.

\begin{lemma}\label{lemma:density_op_correlations}
    Consider a QKD protocol with a (possibly) correlated source. For a sequence $i_1^N = i_1 ... i_N$ of setting choices, with $i_u\in\mathcal{I}$ for some alphabet $\mathcal{I}$, the state emitted by Alice over $N$ rounds can be written as
	\begin{equation} \label{eq:global_state_corr}
	\sigma_{C_1^N}^{i_1^N} = \bigotimes_{u=1}^N \sigma_{C_u}^{i_{u-L}^u}.
	\end{equation}
    where $\sigma_{C_u}^{i_{u-L}^u}$ denotes the state emitted in the $u$-th round, which depends on the settings selected in that round, as well as in the previous $L$ rounds. Suppose that
    \begin{equation} \label{eqapp:partial_characterization_corr}
        \ev{\sigma_{C_u}^{i_{u-L}^u}}{\phi_{i_u}}_{C_u} \geq 1 - \epsleak^{i_u}, \quad\forall i_{u-L}^u,
    \end{equation}
    for a fixed family of pure states $\{\ket{\phi_i}\}_{i\in\mathcal{I}}$. Besides, define an extended system $D_u=C_uC_u'$, and let $\ket{0}_{C_u'}$ be an arbitrary fixed state of system $C_u'$. Then, there exists a CPTP map $\Phi$ and pure states $\{\ket*{\psi_{i_{u-L}^u}}_{D_u}\}$ of the form
    \begin{equation}\label{eqapp:partial_characterization_pure_corr}
        \ket*{\psi_{i_{u-L}^u}}_{D_u} = \sqrt{1-\epsleak^{i_u}} \ket{\phi_{i_u}}_{D_u} + \sqrt{\epsleak^{i_u}} \ket*{\phi_{i_{u-L}^u}^{\perp}}_{D_u},
    \end{equation}
    where $\ket{\phi_{i_u}}_{D_u}=\ket{\phi_{i_u}}_{C_u}\ket{0}_{C_u'}$ and $\braket*{\phi_{i_{u-L}^u}^{\perp}}{\phi_{i_u}}_{D_u} = 0$, such that
    \begin{equation}
    \label{eq:map_corr}
        \sigma^{i_1^N}_{C_1^N}  = \Phi\left(\tau^{i_1^N}_{D_1^N} \right), \quad \forall i_1^N,
    \end{equation}
	with
	\begin{equation}
    \label{eq:sigma_corr}
        \tau^{i_1^N}_{D_1^N} = \bigotimes_{u=1}^N \ketbra*{\psi_{i_{u-L}^u}}_{D_u}.
    \end{equation}
    As a consequence, if security can be established for \emph{all} global pure state families $\{\tau^{i_1^N}_{D_1^N}\}_{i_1^N}$ of the form in \cref{eq:sigma_corr} with each factor satisfying \cref{eqapp:partial_characterization_pure_corr}, then security is guaranteed for \emph{all} global (possibly mixed) state families $\{\sigma^{i_1^N}_{C_1^N}\}_{i_1^N}$ of the form in \cref{eq:global_state_corr} with each factor satisfying \cref{eqapp:partial_characterization_corr}.

  \end{lemma}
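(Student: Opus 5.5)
The plan is to purify each round's mixed state and then show that the auxiliary pure states can be chosen so that the extra system is recovered locally by a single CPTP map acting on the whole output. The ingredient that makes this work is that Eq.~(\ref{eqapp:partial_characterization_corr}) depends only on $i_u$ and holds uniformly over the context $i_{u-L}^{u-1}$.

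\textbf{Step 1 (single-round pure-state representation).} Fix a round $u$ and a context $i_{u-L}^u$. Write the spectral decomposition $\sigma_{C_u}^{i_{u-L}^u}=\sum_k r_k\dyad{e_k}$, and take a purification
\[
\ket{\Sigma}=\sum_k\sqrt{r_k}\ket{e_k}_{C_u}\ket{k}_{C_u'}.
\]
The fidelity with $\ket{\phi_{i_u}}\ket{0}_{C_u'}$ can be raised to $\sqrt{\ev{\sigma}{\phi_{i_u}}}$ by applying a unitary $U$ on $C_u'$ alone. Uhlmann's theorem guarantees such a $U$ exists, since the purifications of $\sigma$ are related by unitaries on $C_u'$; we take $C_u'$ large enough, say dimension equal to that of $C_u$.

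Set $\ket*{\psi_{i_{u-L}^u}}_{D_u}:=(\identity\otimes U)\ket{\Sigma}$, with an overall phase chosen so that $\braket*{\phi_{i_u}}{\psi}\ge 0$. Then $\braket*{\phi_{i_u}}{\psi}^2=\ev{\sigma}{\phi_{i_u}}\ge 1-\epsleak^{i_u}$.

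Tracing out $C_u'$ returns $\sigma_{C_u}^{i_{u-L}^u}$ exactly. So the required map is simply $\Phi=\bigotimes_u\Tr_{C_u'}$, which is CPTP and independent of $i_1^N$. Because the global state in Eq.~(\ref{eq:global_state_corr}) is a tensor product, Eq.~(\ref{eq:map_corr}) follows immediately.

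\textbf{Step 2 (matching the exact form with equality).} Decompose $\ket{\psi}=\sqrt{1-f}\ket{\phi_{i_u}}+\sqrt{f}\ket{\phi^\perp}$ with $f\le\epsleak^{i_u}$. The lemma, however, demands coefficient exactly $\sqrt{1-\epsleak^{i_u}}$. This is the main subtlety.

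To handle it, I would enlarge $C_u'$ by one additional orthogonal vector $\ket{1}_{C_u'}$, or instead use extra dimensions of the purifying space. The goal is to mix in a component orthogonal to $\ket{\phi_{i_u}}\ket{0}$ that is later removed by $\Phi$, while keeping the $C_u$ marginal unchanged.

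One concrete route is to use Uhlmann only with an inequality target. The set of purifications of $\sigma$ on $D_u$ is connected under unitaries on $C_u'$, and the overlap $\abs{\braket*{\phi_{i_u}0}{\psi}}^2$ depends continuously on $U$. It ranges from its maximum, which is at least $1-\epsleak^{i_u}$, down to $0$: choose $U$ mapping the support into states orthogonal to $\ket{0}$, which is possible when $\dim C_u'>\mathrm{rank}\,\sigma$. By the intermediate value theorem, some $U$ attains exactly $1-\epsleak^{i_u}$.

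The marginal on $C_u$ is still $\sigma$, so $\Phi$ is unchanged. Defining $\ket*{\phi^\perp_{i_{u-L}^u}}$ as the normalized component orthogonal to $\ket{\phi_{i_u}}_{D_u}$ then gives exactly Eq.~(\ref{eqapp:partial_characterization_pure_corr}).

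\textbf{Step 3 (security reduction).} Any attack on the states $\sigma_{C_1^N}^{i_1^N}$ can be simulated by an eavesdropper who receives $\tau_{D_1^N}^{i_1^N}$, first applies $\Phi$, and then runs the original attack. Alice's classical settings and all announcements are unaffected by this.

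Hence every input–output statistic, including the joint state with Eve, is reproduced. Security for all pure families of the form in Eq.~(\ref{eq:sigma_corr}) therefore implies security for the mixed family.

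The step I expect to need the most care is Step 2: getting exact equality $\epsleak^{i_u}$ rather than an inequality. This requires enlarging $C_u'$ and making the continuity argument, and it has to be done for every context $i_{u-L}^u$ separately, which is allowed because the $\phi^\perp$ states are permitted to depend on the context.
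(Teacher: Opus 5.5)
Your proof is correct. Its skeleton matches the paper's. You use a per-round Uhlmann purification on $C_uC_u'$ whose overlap with $\ket{\phi_{i_u}}\ket{0}$ equals $\ev{\sigma_{C_u}^{i_{u-L}^u}}{\phi_{i_u}}$, remove the global phase, and take $\Phi$ as the partial trace over all auxiliary systems. That $\Phi$ is independent of $i_1^N$, so it can be absorbed into Eve's attack.

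The difference lies in the step you rightly flag as the main subtlety: upgrading $\tilde{\epsleak}\le\epsleak^{i_u}$ to exact equality.

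\emph{Your route.} You argue existentially. After enlarging $C_u'$ beyond $\operatorname{rank}\sigma$, the squared overlap is a continuous function on the path-connected unitary group of $C_u'$. It attains both a value $\ge 1-\epsleak^{i_u}$ and the value $0$, so the intermediate value theorem gives a purification with overlap exactly $1-\epsleak^{i_u}$.

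\emph{The paper's route.} It writes the optimal purification as $\sqrt{1-\tilde{\epsleak}}\ket{\phi_{i_u}}+\sqrt{\tilde{\epsleak}}\ket*{\phi^{\perp\prime}}$. It then tensors on a fictitious qubit $F_u$ prepared in $\sqrt{(1-\epsleak^{i_u})/(1-\tilde{\epsleak})}\ket{0}+\sqrt{1-(1-\epsleak^{i_u})/(1-\tilde{\epsleak})}\ket{1}$. This rescales the $\ket{\phi_{i_u}}\ket{0}\ket{0}$ amplitude to exactly $\sqrt{1-\epsleak^{i_u}}$ and gives $\ket*{\phi^{\perp}_{i_{u-L}^u}}$ in closed form.

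\emph{What each buys.} The paper's construction is explicit and needs no dimension counting or topology, so it carries over verbatim to infinite-dimensional optical systems. Yours keeps a single purifying register and makes clear that exactness is merely a choice among purifications, at the cost of an existence argument. For that argument, $\dim C_u'$ must be fixed uniformly over all contexts (e.g.\ $\dim C_u+1$, as you effectively do), so that $\ket{0}_{C_u'}$ and $\Phi$ do not depend on $i_{u-L}^u$.

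\emph{Why they coincide.} The two are really the same mechanism. Apply to $C_u'$ a rotation by angle $\theta$ in the plane spanned by $\ket{0}_{C_u'}$ and a direction unused by the optimal purification; this multiplies the overlap amplitude by $\cos\theta$. The point your intermediate-value argument finds is then $\cos^2\theta=(1-\epsleak^{i_u})/(1-\tilde{\epsleak})$, which is precisely the paper's $F_u$ amplitude.
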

\begin{proof}
Fix a sequence $i_1^N$. Since the global state factorizes as $\sigma_{C_1^N}^{i_1^N} = \bigotimes_{u=1}^N \sigma_{C_u}^{i_{u-L}^u}$, we can apply a purification construction to each factor independently.

Consider a fixed round $u$ and a fixed history $i_{u-L}^u$. By assumption, the state $\sigma_{C_u}^{i_{u-L}^u}$ satisfies \cref{eqapp:partial_characterization_corr}. By Uhlmann's theorem~\cite{uhlmann1976transition}, there exists a purification $\ket*{\psi_{i_{u-L}^u}''}_{C_uS_u}$ of $\sigma_{C_u}^{i_{u-L}^u}$ such that
\begin{equation}
\label{eqapp:purification_assumption_corr}
    \abs{\braket*{\phi_{i_u}}{\psi_{i_{u-L}^u}''}_{C_uS_u}}^2 = \ev{\sigma_{C_u}^{i_{u-L}^u}}{\phi_{i_u}}_{C_u} \geq 1 - \epsleak^{i_u},
\end{equation}
where we have defined $\ket{\phi_{i_u}}_{C_uS_u} \coloneqq \ket{\phi_{i_u}}_{C_u}\ket{0}_{S_u}$ and $S_u$ denotes an ancillary system for round $u$. Without loss of generality, \cref{eqapp:purification_assumption_corr} implies that $\ket*{\psi_{i_{u-L}^u}''}_{C_uS_u}$ can be expressed as
\begin{equation}
    \ket*{\psi_{i_{u-L}^u}''}_{C_uS_u} = e^{i \varphi_{i_{u-L}^u}}\left(\sqrt{1-\tilde{\epsleak}^{i_{u-L}^u}} \ket{\phi_{i_u}}_{C_uS_u} + \sqrt{\tilde{\epsleak}^{i_{u-L}^u}} \ket*{\phi_{i_{u-L}^u}^{\perp\prime}}_{C_uS_u}\right),
\end{equation}
where $0\leq \tilde{\epsleak}^{i_{u-L}^u} \leq \epsleak^{i_u}$, $\varphi_{i_{u-L}^u} \in [0,2\pi)$, and $\braket*{\phi_{i_{u-L}^u}^{\perp\prime}}{\phi_{i_u}}_{C_uS_u} = 0$. Removing the global phase, we define
\begin{equation}
    \ket*{\psi_{i_{u-L}^u}'}_{C_uS_u}  
    = e^{-i \varphi_{i_{u-L}^u}} \ket*{\psi_{i_{u-L}^u}''}_{C_uS_u} 
    = \sqrt{1-\tilde{\epsleak}^{i_{u-L}^u}} \ket{\phi_{i_u}}_{C_uS_u} + \sqrt{\tilde{\epsleak}^{i_{u-L}^u}} \ket*{\phi_{i_{u-L}^u}^{\perp\prime}}_{C_uS_u},
\end{equation}
which is also a purification of $\sigma_{C_u}^{i_{u-L}^u}$.

Next, we introduce a fictitious system $F_u$ for round $u$ to replace $\tilde{\epsleak}^{i_{u-L}^u}$ with $\epsleak^{i_u}$. Define
\begin{equation}\label{eq:lemma_corr_1}
    \ket*{\psi_{i_{u-L}^u}}_{C_uS_uF_u} = \ket*{\psi_{i_{u-L}^u}'}_{C_uS_u} \otimes \left[\sqrt{\frac{1-\epsleak^{i_u}}{1-\tilde{\epsleak}^{i_{u-L}^u}}} \ket{0}_{F_u} + \sqrt{1-\frac{1-\epsleak^{i_u}}{1-\tilde{\epsleak}^{i_{u-L}^u}}} \ket{1}_{F_u} \right],
\end{equation}
where $\{\ket{0}_{F_u},\ket{1}_{F_u}\}$ is an orthonormal basis for system $F_u$. \cref{eq:lemma_corr_1} can be rewritten as
\begin{equation}
    \ket*{\psi_{i_{u-L}^u}}_{C_uS_uF_u} = \sqrt{1-\epsleak^{i_u}} \ket{\phi_{i_u}}_{C_uS_uF_u} + \sqrt{\epsleak^{i_u}} \ket*{\phi_{i_{u-L}^u}^{\perp}}_{C_uS_uF_u},
\end{equation}
where $\ket{\phi_{i_u}}_{C_uS_uF_u} \coloneqq \ket{\phi_{i_u}}_{C_uS_u} \ket{0}_{F_u} = \ket{\phi_{i_u}}_{C_u}  \ket{0}_{S_u} \ket{0}_{F_u} $ and 
\begin{equation}
    \begin{split}
        \ket*{\phi_{i_{u-L}^u}^{\perp}}_{C_uS_uF_u} &= \sqrt{\frac{\epsleak^{i_u} - \tilde{\epsleak}^{i_{u-L}^u}}{\epsleak^{i_u}}} \ket{\phi_{i_u}}_{C_uS_u}\ket{1}_{F_u} 
        \\
        &+ \sqrt{\frac{\tilde{\epsleak}^{i_{u-L}^u}}{\epsleak^{i_u}}} \ket{\phi_{i_{u-L}^u}^{\perp\prime}}_{C_uS_u} \otimes \left[\sqrt{\frac{1-\epsleak^{i_u}}{1-\tilde{\epsleak}^{i_{u-L}^u}}} \ket{0}_{F_u} + \sqrt{\frac{\epsleak^{i_u} - \tilde{\epsleak}^{i_{u-L}^u}}{1-\tilde{\epsleak}^{i_{u-L}^u}}} \ket{1}_{F_u}\right]
    \end{split}
\end{equation}
is a normalized state orthogonal to $\ket{\phi_{i_u}}_{C_uS_uF_u}$. 

The previous construction holds for every round $u$ and every possible history $i_{u-L}^u$. Define the global pure state
\begin{equation}
    \tau^{i_1^N}_{C_1^NS_1^NF_1^N} = \bigotimes_{u=1}^N \ketbra*{\psi_{i_{u-L}^u}}_{C_uS_uF_u}.
\end{equation}
Now define the CPTP map $\Phi(\cdot) = \Tr_{S_1^N F_1^N}(\cdot)$, which traces out all ancillary and fictitious systems. By construction, for each factor we have $\Tr_{S_uF_u}\left(\ketbra*{\psi_{i_{u-L}^u}}_{C_uS_uF_u}\right) = \sigma_{C_u}^{i_{u-L}^u}$. Therefore,
\begin{equation}
  \sigma^{i_1^N}_{C_1^N} = \bigotimes_{u=1}^N \sigma_{C_u}^{i_{u-L}^u} = \Phi(\tau^{i_1^N}_{C_1^NS_1^NF_1^N}),
\end{equation}
which is identical to \cref{eq:map_corr} after the renaming $C_uS_uF_u\equiv C_uC_u'\equiv D_u$. Since this argument holds for every sequence $i_1^N$, the lemma is proven.

\textbf{Security implication.} Consider a fixed pair of families $\{\sigma^{i_1^N}_{C_1^N}\}_{i_1^N}$ and $\{\tau^{i_1^N}_{D_1^N}\}_{i_1^N}$ related by \cref{eq:map_corr}. One can obtain the former family by applying the global map $\Phi$ to the latter family. Crucially, $\Phi$ is independent of the sequence $i_1^N$, since it is defined as a partial trace over fixed systems. Therefore, $\Phi$ can be absorbed into Eve's attack and, consequently, any protocol secure against general coherent attacks when Alice prepares $\{\tau^{i_1^N}_{D_1^N}\}_{i_1^N}$ remains secure when she prepares $\{\sigma^{i_1^N}_{C_1^N}\}_{i_1^N}$ (see, e.g., \cite[Lemma 8]{nahar2024postselection} for a formal proof of this statement). Since we have shown that every family $\{\sigma^{i_1^N}_{C_1^N}\}_{i_1^N}$ satisfying \cref{eq:global_state_corr,eqapp:partial_characterization_corr} can be obtained via \cref{eq:map_corr} from some family $\{\tau^{i_1^N}_{D_1^N}\}_{i_1^N}$ of the form in \cref{eq:sigma_corr,eqapp:partial_characterization_pure_corr}, it follows that proving security for all such pure state families $\{\tau^{i_1^N}_{D_1^N}\}_{i_1^N}$ guarantees security for all such mixed state families $\{\sigma^{i_1^N}_{C_1^N}\}_{i_1^N}$.

\end{proof}

Next, we prove \cref{thm:correlations}, which is a crucial piece of the overall security analysis. Note that Theorem 1 applies to both correlated sources ($L>0$) and uncorrelated sources ($L=0$), and thus the overall analysis is valid in both situations. Alternatively, one could prove security for uncorrelated sources ($L=0$), and then extend the overall proof to correlated sources using the modular lifting theorems introduced in Ref.~\cite{curras-lorenzoRigorousPhaseerrorestimation2026}. However, the latter approach requires dividing the protocol rounds into $(L+1)$ groups and applying the overall phase-error estimation procedure separately for each group, which reduces the finite-key efficiency. On the other hand, here, we are able to directly obtain a bound on the number of phase errors as a function of the overall counts observed in the protocol, without any explicit round division in the post-processing step (see \cref{eq:Kfinal_main}). The round division is only introduced as a fictitious step below, in \cref{thm:correlations}, to obtain an upper bound on the fictitious quantity $\rv{M}_{\Tsind}$. The way that correlated sources are treated in \cref{thm:correlations} reuses key ideas from \cite{curras-lorenzoRigorousPhaseerrorestimation2026,curras2026securitydecoy}.

\begin{theorem}[Probabilistic upper bound on $\rv{M}_{\Tsind}$]
\label{thm:correlations}
Consider the virtual estimation protocol (VEP) introduced in Box~\ref{box:VEP}, and let $\ket{\Psi^{\epsleak}}_{A_1^NC_1^N}$, $N$, $\chiT{u}$, $\rv{M}_{\Tsind}$, $\alpha$, and $\Tobs{A}$ be as defined there, where $\ket{\Psi^{\epsleak}}_{A_1^NC_1^N}$ is given in~\cref{eq:global_correlated_state_main} and depends on certain set of reference states $\set{\ket{\phi_i}}_{i\in\set{0,1,\dots,\nstates}}$, setting probabilities $\set{\piA{i}}_{i\in\set{0,1,\dots,\nstates}}$, correlation length $L$, and on the parameter $\epsleak\geq0$.
Let the quantities $E_{\chiT{}}^{\rm L}$, $E_{\chiT{}}^{\rm U}$ and $E_{\chiT{2}}^{\rm U}$ satisfy
\begin{equation}
\begin{split}
E_{\chiT{}}^{\rm L}
&\leq \ptrash\min_{G\in\mathcal{S}_{\rm const}}\left[\Tr\set{\Tobs{A} \rho_{A}^{\varepsilon'}}\right],
\\
E_{\chiT{}}^{\rm U}
&\geq\ptrash\max_{G\in\mathcal{S}_{\rm const}}\left[\Tr\set{\Tobs{A} \rho_{A}^{\varepsilon'}}\right],
\\
E_{\chiT{2}}^{\rm U}
&\geq\ptrash\max_{G\in\mathcal{S}_{\rm const}}\left[\Tr\set{(\Tobs{A})^2 \rho_{A}^{\varepsilon'}}\right],
\end{split}
\end{equation}
where
\begin{equation}\label{eq:G_matrix_Thm1}
G:=
\begin{bmatrix}
G_{\phi} & G_{\rm c}     \\
G_{\rm c}^{\dagger}   & G_{\perp}   \\
\end{bmatrix}
\end{equation}
is a $2\nstates \times 2\nstates$ Hermitian block matrix built from the $\nstates\times\nstates$ matrices $G_{\phi}$, $G_{\rm c}$, and $G_{\perp}$; $\rho_{A}^{\varepsilon'}$ is a density matrix whose entries depend linearly on those of $G$, with $[\rho_{A}^{\varepsilon'}]_{ij} := \sqrt{\piA{i}\piA{j}}\big[(1-\varepsilon')[G_{\phi}]_{ji} +\varepsilon'[G_{\perp}]_{ji} + \sqrt{(1-\varepsilon')\varepsilon'} ([G_{\rm c}]_{ij}^{*} + [G_{\rm c}]_{ji})$ and $\epsleak':=1-(1-\epsleak)^{L+1}$; and $\mathcal{S}_{\rm const}$ is a convex set defined by the constraints
\begin{equation}\label{eq:SDP_gram}
    \begin{split}
        \quad & G \succeq 0,\\
        \quad & [G_{\phi}]_{ij}=\braket{\phi_{i}}{\phi_{j}}, \quad i,j\in\set{0,1,\dots,n_{A}},\\
        \quad & [G_{\rm c}]_{ii}=0, \quad i\in\set{0,1,\dots,n_{A}},\\
        \quad & [G_{\perp}]_{ii}=1,\quad i\in\set{0,1,\dots,n_{A}}.
    \end{split}
\end{equation}
Then, for any $\epsproB>0$, we have that
\begin{align}
\Pr[\rv{M}_{\Tsind}\geq M_{\Tsind}^{\rm U}] &\leq \epsproB,
\end{align}
holds for
\begin{align}
M_{\Tsind}^{\rm U}:=& (L+1)\bar{N}_L\left[E_{\chiT{}}^{\rm U}+\tilde{\Delta}_{\rm B}\right],
\end{align}
where

\begin{align}
\tilde{\Delta}_{\rm B} := &\frac{c}{3\bar{N}_L} \ln\frac{L+1}{\epsproB}
+ \sqrt{ \left(\frac{c}{3\bar{N}_L} \ln\frac{L+1}{\epsproB}\right)^2  + 2 \frac{V_{\chiT{}}^{\rm U}}{\bar{N}_L} \ln\frac{L+1}{\epsproB}},
\\
c := & \max\set{\omega_{\rm max}-E_{\chiT{}}^{\rm L},E_{\chiT{}}^{\rm U}-\omega_{\rm min}},
\\
V_{\chi_{\Tsind}}^{\rm U} := 
& 
E_{\chiT{2}}^{\rm U} - \left( \max(0, E_{\chiT{}}^{\rm L}) + \min(0, E_{\chiT{}}^{\rm U}) \right)^2
\end{align}
$\bar{N}_{L}:=\lceil N/(L+1) \rceil$, and $\omega_{\min}$ ($\omega_{\max}$) denotes the minimum (maximum) possible value of the RVs $\chiT{u}$.
\end{theorem}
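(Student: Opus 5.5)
The plan is to (i) reduce the statistics of the variables $\chiT{u}$ to a fixed, attack-independent measurement on Alice's ancillas; (ii) split the rounds into $L+1$ interleaved groups; (iii) show that, conditioned on the settings outside a group, the group's ancillas are in a product state whose single-round factors are feasible points of $\mathcal{S}_{\rm const}$; and (iv) apply Bernstein's inequality (\cref{thm:bernstein}) within each group and take a union bound over groups.

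\emph{Step (i).} First I will show that the joint distribution of $\{\chiT{u}\}_u$ does not depend on Eve's attack. In the POVM of Box~\ref{box:VEP}, summing over all non-\LC{} outcomes in a round gives $(1-\ptrash)\identity_{AB}$. Therefore the marginal distribution of the $\chiT{u}$ is that of the product measurement
\[
\{\ptrash\dyad{\omega}_{A_u}\}_\omega\cup\{(1-\ptrash)\identity_{A_u}\}
\]
applied to every $A_u$ of the reduced state $\rho_{A_1^N}=\Tr_{C_1^N}\dyad{\Psi^{\epsleak}}$, and Eve's map acts only on $C_1^N$. Next I partition the rounds into the $L+1$ classes $\mathcal{G}_g=\{u: u\equiv g \bmod (L+1)\}$, each of size at most $\bar N_L$. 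For a fixed class $\mathcal{G}_g$ I add a fictitious computational-basis measurement of all ancillas $A_v$ with $v\notin\mathcal{G}_g$. This measurement commutes with the \LC{} measurements on $\mathcal{G}_g$, so the marginal statistics of $\{\chiT{u}\}_{u\in\mathcal{G}_g}$ are unchanged. It suffices to prove the required bound conditioned on every outcome $j$ of this measurement, i.e.\ on every assignment of the settings outside the class.

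\emph{Step (iii), the key structural step.} Rounds in $\mathcal{G}_g$ are separated by $L+1$, so each pulse $C_v$ depends on at most one setting $i_u$ with $u\in\mathcal{G}_g$. Hence, conditioned on $j$, the reduced state of $\{A_u\}_{u\in\mathcal{G}_g}$ factorizes over $u\in\mathcal{G}_g$. Its single-round factor is
\[
[\rho_{A_u}]_{ik}=\sqrt{\piA{i}\piA{k}}\,\braket{\Phi_k}{\Phi_i},
\qquad
\ket{\Phi_i}:=\bigotimes_{m=0}^{L}\ket*{\psi^{\epsleak}_{(\cdot)}}_{C_{u+m}},
\]
where the $(L+1)$ pulses in the tensor product are those whose states depend on $i_u=i$. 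Expanding each factor via \cref{eq:assumption1extra_main}, I can write
\[
\ket{\Phi_i}=\sqrt{1-\epsleak'}\,\ket{\phi_i}\otimes\ket{\chi_j}+\sqrt{\epsleak'}\,\ket{\Phi_i^\perp}.
\]
Here $\ket{\chi_j}=\bigotimes_{m\ge1}\ket{\phi_{j_{u+m}}}$ is independent of $i$, since the reference states of later rounds depend only on their own settings. The state $\ket{\Phi_i^\perp}$ is normalized and orthogonal to $\ket{\phi_i}\otimes\ket{\chi_j}$, and $\epsleak'=1-(1-\epsleak)^{L+1}$. The Gram matrix of $\{\ket{\phi_i}\otimes\ket{\chi_j}\}_i\cup\{\ket{\Phi_i^\perp}\}_i$ is positive semidefinite, has $G_\phi$ block $\braket{\phi_i}{\phi_k}$, zero diagonal in $G_{\rm c}$, and unit diagonal in $G_\perp$. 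It therefore lies in $\mathcal{S}_{\rm const}$, and $\rho_{A_u}=\rho_A^{\epsleak'}$ for this $G$. Consequently, conditioned on $j$, the $\chiT{u}$ with $u\in\mathcal{G}_g$ are independent, and each satisfies
\[
\mathbb{E}[\chiT{u}]\in[E^{\rm L}_{\chiT{}},E^{\rm U}_{\chiT{}}],
\qquad
\mathbb{E}[(\chiT{u})^2]\le E^{\rm U}_{\chiT{2}}.
\]

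\emph{Step (iv).} From these moment bounds,
\[
\mathrm{Var}(\chiT{u})\le E^{\rm U}_{\chiT{2}}-\Big(\min_{x\in[E^{\rm L},E^{\rm U}]}|x|\Big)^2=V^{\rm U}_{\chiT{}},
\qquad
|\chiT{u}-\mathbb{E}[\chiT{u}]|\le c.
\]
I then apply \cref{thm:bernstein} to each class with failure probability $\epsproB/(L+1)$. Normalizing by $\bar N_L$ (padding short classes with zero-mean dummy variables, which keeps the variance and range bounds valid) gives
\[
\sum_{u\in\mathcal{G}_g}\chiT{u}\le \bar N_L\big(E^{\rm U}_{\chiT{}}+\tilde\Delta_{\rm B}\big).
\]
This holds conditioned on every $j$, hence unconditionally. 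A union bound over the $L+1$ classes yields the claim.

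\emph{Expected obstacles.} The main difficulty is the conditioning argument in Step (iii): justifying that measuring the out-of-class ancillas leaves the class marginals unchanged, and that the resulting conditional state really is a product of factors of the form $\rho_A^{\epsleak'}$ with a Gram matrix in $\mathcal{S}_{\rm const}$. In particular, one must check that the reference part $\ket{\chi_j}$ is $i$-independent, which is what leaves $G_\phi$ unchanged. A secondary, more routine point is the padding and sign bookkeeping needed to turn the per-class bounds $\bar N_L(E^{\rm U}+\tilde\Delta_{\rm B})$ into the stated $(L+1)\bar N_L[\cdots]$ when class sizes differ.
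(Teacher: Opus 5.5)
Your proposal is correct and follows the paper's proof essentially step for step. It uses the same interleaved classes $u \bmod (L+1)$ and the same fictitious computational-basis measurement of the out-of-class ancillas; your Step (i) just makes explicit the attack independence the paper uses tacitly. It then uses the same product structure, where the $i_u$-independent reference tail gives a Gram matrix in $\mathcal{S}_{\rm const}$ with $\varepsilon'=1-(1-\varepsilon)^{L+1}$, the same range and variance bounds, and Bernstein's inequality with failure probability $\epsproB/(L+1)$ per class followed by a union bound.

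The one point to tighten is the one you flagged yourself. Padding a short class ($N_0<\bar{N}_L$) with deterministic zeros leaves a mean term $N_0E^{\rm U}_{\chiT{}}$. This is at most $\bar{N}_L E^{\rm U}_{\chiT{}}$ only when $E^{\rm U}_{\chiT{}}\geq 0$, so the case $E^{\rm U}_{\chiT{}}<0$ with $(L+1)\nmid N$ needs a separate remark. The paper's proof makes the same replacement without comment.
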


\begin{proof}
Let us define the sets of rounds 
\begin{equation}\label{eq:set_Sm_Thm1}
 \mathcal{S}_m:=\set{u\in\mathbb{N}:u\leq N, u \bmod (L+1)=m},   
\end{equation}
with $m\in\set{0,\dots,L}$, such that $\rv{M}_{\Tsind}=\sum_{u=1}^{N}\chiT{u}=\sum_{m}\sum_{u\in\mathcal{S}_m}\chiT{u}$. We shall bound $\rv{M}_{\Tsind}$ by bounding each of the sums $\rv{S}_m:=\sum_{u\in\mathcal{S}_m}\chiT{u}$ individually, using a similar approach as in \cite{curras2026securitydecoy}. 

In particular, let us focus on the sum $\rv{S}_0$.
We aim to obtain a probabilistic bound of the form 
\begin{equation}\label{eq:PS00}
    \Pr[\rv{S}_0\geq\delta^{\rm U}_0]\leq \epsilon,
\end{equation}
for certain $\delta^{\rm U}_0\in\mathbb{R}$ and $\epsilon>0$. For this, we note that the probability distribution of $S_0$ does not change if we consider a variant of the VEP in which, in the rounds not belonging to $\mathcal{S}_0$ (including the \LC{} rounds), Alice measures her ancilla systems in the computational basis $\set{\ket{i}}_i$. This is possible because measurements in $\mathcal{S}_0$ and $\mathcal{S}_k$ commute if $k\neq 0$. Thus, we can consider this alternative scenario to obtain a probabilistic bound that must also hold for the VEP. 
In particular, let $\vec{j}_{\bar{\mathcal{S}_0}}=\bigotimes_{u\notin \mathcal{S}_0}\ket{j_u}_{A_u}$ denote a particular outcome of Alice's measurements in the rounds not belonging to $\mathcal{S}_0$, and let us assume that we know certain $\epsilon$ such that $\Pr[\rv{S}_0\geq\delta^{\rm U}_0 | \vec{j}_{\bar{\mathcal{S}_0}}]\leq \epsilon$ holds for any possible outcome $\vec{j}_{\bar{\mathcal{S}_0}}$. Then, since $\delta^{\rm U}_0$ is independent of $\vec{j}_{\bar{\mathcal{S}_0}}$, it is clear that
\begin{equation}\label{eq:PS01}
\begin{split}
\Pr[\rv{S}_0\geq\delta^{\rm U}_0]&=\sum_{\vec{j}_{\bar{\mathcal{S}_0}}}\Pr[\vec{j}_{\bar{\mathcal{S}_0}}]\Pr[\rv{S}_0\geq\delta^{\rm U}_0 | \vec{j}_{\bar{\mathcal{S}_0}}]\leq \epsilon. 
\end{split}
\end{equation}

The \textit{a priori} state of all systems $A_1^N$ and $C_1^N$ considered in the VEP is given in~\cref{eq:global_correlated_state_main}. To obtain a bound on $\Pr[\rv{S}_0>\delta^{\rm U}_0 | \vec{j}_{\bar{\mathcal{S}_0}}]$, we shall consider the entangled state of the systems conditioned on the outcome $\vec{j}_{\bar{\mathcal{S}_0}}$. This state is given by 
\begin{equation}\label{eq:cond_state}
\begin{split}
\ket*{\Psi_{|\vec{j}_{\bar{\mathcal{S}_0}}}}
&=\bigotimes_{u\in S_0} \sum_{i_u} \sqrt{\piA{i_u}}\ket{i_u}_{A_u}
\bigotimes_{k=0}^{L}\ket{\psi_{j_{u+1}^{u+k}i_u j_{u-L+k}^{u-1}}^{\epsleak}}_{C_{u+k}}
\\
&=\bigotimes_{u\in S_0} \ket{\Psi_{|j_{u+1}^{u+L}j^{u-1}_{u-L}}^u}_{A_uC_u^{u+L}},
\end{split}
\end{equation}
where $\ket*{\psi_{i_{u-1}^{u}}^{\epsleak}}_{C_{u}}$ is defined in~\cref{eq:assumption1extra_main}, and
\begin{equation}\label{eq:eq:cond_state_u}
\begin{split}
\ket{\Psi_{|j_{u+1}^{u+L}j^{u-1}_{u-L}}^u}_{A_uC_u^{u+L}}
:=
&\sum_{i_u} \sqrt{\piA{i_u}}\ket{i_u}_{A_u}
\bigotimes_{k=0}^{L}\ket*{\psi_{j_{u+1}^{u+k}i_uj^{u-1}_{u-L+k}}^{\epsleak}}_{C_{u+k}}.
\end{split}
\end{equation}
From~\cref{eq:cond_state} it is clear that 
$\Pr[\rv{\chi}^{u}_{\Tsind}=\omega,\rv{\chi}^{v}_{\Tsind}=\omega'|\vec{j}_{\bar{\mathcal{S}_0}}]=\Pr[\rv{\chi}^{u}_{\Tsind}=\omega|\vec{j}_{\bar{\mathcal{S}_0}}]\Pr[\rv{\chi}^{v}_{\Tsind}=\omega'|\vec{j}_{\bar{\mathcal{S}_0}}]$ for any $u,v\in\mathcal{S}_0$ and for any outcomes $\omega$ and $\omega'$. Therefore, all the random variables in the set $\set{\rv{\chi}^{u}_{\Tsind}}_{u\in\mathcal{S}_0}$ are independent conditioned on the outcome $\vec{j}_{\bar{\mathcal{S}_0}}$, and one can use concentration bounds for independent random variables to obtain a bound on $\Pr[\rv{S}_0\geq\delta^{\rm U}_{0,\vec{j}_{\bar{\mathcal{S}_0}}} | \vec{j}_{\bar{\mathcal{S}_0}}]$. 
In particular, \cref{thm:bernstein} (see Appendix~\ref{app:finite-key}) guarantees that 
\begin{equation}\label{eq:bound_dependent_j}
    \Pr[\rv{S}_0\geq \mathbb{E}[\rv{S}_0| \vec{j}_{\bar{\mathcal{S}_0}}] + N_0\Delta_{\rm B}(N_0,v_0,c,\epsilon) | \vec{j}_{\bar{\mathcal{S}_0}}]\leq \epsilon,
\end{equation}
where $\Delta_{\rm B}(N_0,v_0,c,\epsilon)$ is given by~\cref{eq:DeltaB}, with
$N_0:=\abs{\mathcal{S}_0}$, $\epsilon>0$, $v_0:=\frac{1}{N_0}\sum_{u\in\mathcal{S}_0}{\rm Var}\left[\rv{\chi}^{u}_{\Tsind}| \vec{j}_{\bar{\mathcal{S}_0}}\right]$, and $c\geq \max\abs{\rv{\chi}^{u}_{\Tsind}-\mathbb{E}[\rv{\chi}^{u}_{\Tsind}| \vec{j}_{\bar{\mathcal{S}_0}}]}$ for all $u\in\mathcal{S}_0$. 

Now, given $\delta^{\rm U}_{0,\vec{j}_{\bar{\mathcal{S}_0}}}:=\mathbb{E}[\rv{S}_0| \vec{j}_{\bar{\mathcal{S}_0}}] + N_0\Delta_{\rm B}(N_0,v_0,c,\epsilon)$, we want to make this upper bound independent of $\vec{j}_{\bar{\mathcal{S}_0}}$ by considering a worst-case scenario. That is, we aim to find a bound $\delta^{\rm U}_0$ such that $\delta^{\rm U}_{0,\vec{j}_{\bar{\mathcal{S}_0}}}\leq \delta^{\rm U}_0$ for any $\vec{j}_{\bar{\mathcal{S}_0}}$. This can be achieved by employing SDP-based techniques~\cite{curras2023security2} to compute worst-case bounds satisfying $c\geq \max\abs{\rv{\chi}^{u}_{\Tsind}-\mathbb{E}[\rv{\chi}^{u}_{\Tsind}| \vec{j}_{\bar{\mathcal{S}_0}}]}$,  $E_{\chiT{}}^{\rm U}\geq\mathbb{E}[\rv{\chi}^{u}_{\Tsind}|\vec{j}_{\bar{\mathcal{S}_0}}]$ and $V_{\chiT{}}^{\rm U}\geq V[\rv{\chi}^{u}_{\Tsind}|\vec{j}_{\bar{\mathcal{S}_0}}]$, that are valid for any $\vec{j}_{\bar{\mathcal{S}_0}}$ and $u\in\mathcal{S}_0$, where
\begin{equation}
\begin{split}
\mathbb{E}\left[\rv{\chi}^{u}_{\Tsind}|\vec{j}_{\bar{\mathcal{S}_0}}\right]
&=
\ptrash\bra*{\Psi_{|j_{u+1}^{u+L}j^{u-1}_{u-L}}^u}\Tobs{A}\ket*{\Psi_{|j_{u+1}^{u+L}j^{u-1}_{u-L}}^u},
\\
{\rm Var}\left[\rv{\chi}^{u}_{\Tsind}|\vec{j}_{\bar{\mathcal{S}_0}}\right]
&=
\ptrash\bra*{\Psi_{|j_{u+1}^{u+L}j^{u-1}_{u-L}}^u}\left(\Tobs{A}\right)^2\ket*{\Psi_{|j_{u+1}^{u+L}j^{u-1}_{u-L}}^u} - \left(\ptrash\bra*{\Psi_{|j_{u+1}^{u+L}j^{u-1}_{u-L}}^u}\Tobs{A}\ket*{\Psi_{|j_{u+1}^{u+L}j^{u-1}_{u-L}}^u}\right)^2.
\end{split}
\end{equation}

For this, first note that, due to~\cref{eq:assumption1extra_main}, the states $\ket*{\Vec{\psi}_{i_u}}_{C_u^{u+L}}
:=\bigotimes_{k=0}^{L}\ket*{\psi_{j_{u+1}^{u+k}i_uj^{u-1}_{u-L+k}}}_{C_{u+k}}$ that appear in~\cref{eq:eq:cond_state_u} admit the decomposition \cite{pereira2020quantum}
\begin{equation}\label{eq:decomposition_corr}
\begin{split}
\ket*{\Vec{\psi}_{i_u}}_{C_u^{u+L}}
= &
\sqrt{1-\epsleak'}\ket*{\Vec{\phi}_{i_u}}_{C_u^{u+L}}
+\sqrt{\epsleak'}\ket*{\Vec{\phi}_{i_u}^{\perp}}_{C_u^{u+L}},
\end{split}
\end{equation}
where $\epsleak'=1-(1-\epsleak)^{L+1}\approx (L+1)\epsleak$; the states $\ket*{\Vec{\phi}_{i_u}}_{C_u^{u+L}}=\ket*{\phi_{i_{u}}}_{C_{u}}\bigotimes_{k=1}^{L}\ket*{\phi_{j_{u+k}}}_{C_{u+k}}$ satisfy $\braket*{\Vec{\phi}_{i_u}}{\Vec{\phi}_{i_u'}}=\braket*{\phi_{i_u}}{\phi_{i_u'}}$ for any pair $(i_u,i_u')$; and the state $\ket*{\Vec{\phi}_{i_u}^{\perp}}_{C_u^{u+L}}$ is orthogonal to $\ket*{\Vec{\phi}_{i_u}}_{C_u^{u+L}}$ for any $i_u$. Here, for simplicity of notation, we drop the explicit conditioning on the fixed setting sequence $j_{u+1}^{u+L}j^{u-1}_{u-L}$.

Then, we can construct a $2n_A\times2n_A$ Gram matrix $G$ containing the inner products of the states $\ket*{\Vec{\phi}_{0}},\dots,\ket*{\Vec{\phi}_{n_A-1}},\ket*{\Vec{\phi}_{0}^{\perp}},\dots,\ket*{\Vec{\phi}_{n_A-1}^{\perp}}$ for a given $j_{u+1}^{u+L}j^{u-1}_{u-L}$. Precisely, we define $G$ as given in~\cref{eq:G_matrix_Thm1}, where each of the $n_A\times n_A$ sub-blocks is defined by its entries as $[G_{\phi}]_{ij}=\braket{\phi_{i}}{\phi_{j}}$, $[G_{\rm c}]_{ij}=\braket*{\Vec{\phi}_{i}}{\Vec{\phi}_{j}^{\perp}}$, and $[G_{\perp}]_{ij}=\braket*{\Vec{\phi}_{i}^{\perp}}{\Vec{\phi}_{j}^{\perp}}$. 
Note that, due to~\cref{eq:decomposition_corr}, the marginal state $\rho_{A}=\Tr_{C_u^{u+L}}\set{\dyad*{\Psi_{|j_{u+1}^{u+L}j^{u-1}_{u-L}}^u}_{A_uC_u^{u+L}}}$ can be written as a function of the elements of $G$ as
\begin{equation}
\begin{split}
[\rho_{A}]_{ij} =& \sqrt{\piA{i}\piA{j}}\braket*{\Vec{\psi}_{j}}{\Vec{\psi}_{i}}
\\
=& \sqrt{\piA{i}\piA{j}}
\big[
(1-\epsleak')[G_{\phi}]_{ji} + \epsleak'[G_{\perp}]_{ji}
+ \sqrt{(1-\epsleak')\epsleak'} ([G_{\rm c}]_{ij}^{*} + [G_{\rm c}]_{ji})
\big].
\end{split}
\end{equation}
Thus, $E_{\chiT{}}^{\rm U}$ can be taken to be any quantity satisfying
\begin{equation}
\begin{split}
E_{\chiT{}}^{\rm U}
&\geq\ptrash\max_{G\in\mathcal{S}_{\rm const}}\left[\Tr\set{\Tobs{A} \rho_A}\right]\geq\mathbb{E}[\rv{\chi}^{u}_{\Tsind}|\vec{j}_{\bar{\mathcal{S}_0}}],
\end{split}
\end{equation}
where $\mathcal{S}_{\rm const}$ is the set of constraints given in~\cref{eq:SDP_gram}.
An upper bound on the variance $V_{\chi_{\Tsind}}^{\rm U}\geq {\rm Var}[\chiT{u}|\vec{j}_{\bar{\mathcal{S}_0}}]$ is given by
\begin{align}
V_{\chi_{\Tsind}}^{\rm U} &:= 
\begin{cases}
E_{\chiT{2}}^{\rm U}, & \text{if } 0\in [E_{\chiT{}}^{\rm L},E_{\chiT{}}^{\rm U}],
\\[2mm]
E_{\chiT{2}}^{\rm U} -\min\set{(E_{\chiT{}}^{\rm L})^2,(E_{\chiT{}}^{\rm U})^2}, & \text{if } 0\notin [E_{\chiT{}}^{\rm L},E_{\chiT{}}^{\rm U}],
\end{cases}
\\
&=
E_{\chiT{2}}^{\rm U} - \left[ \max(0, E_{\chiT{}}^{\rm L}) + \min(0, E_{\chiT{}}^{\rm U}) \right]^2
\end{align}
where
\begin{equation}
\begin{split}
E_{\chiT{}}^{\rm L}
&\leq\ptrash\min_{G\in\mathcal{S}_{\rm const}}\left[\Tr\set{\Tobs{A} \rho_A}\right],
\\
E_{\chiT{2}}^{\rm U}
&\geq\ptrash\max_{G\in\mathcal{S}_{\rm const}}\left[\Tr\set{(\Tobs{A})^2 \rho_A}\right].
\end{split}
\end{equation}
As for $c$, we can fix it to 
$c^{\rm U} = \max\set{\omega_{\rm max}-E_{\chiT{}}^{\rm L},E_{\chiT{}}^{\rm U}-\omega_{\rm min}} \geq \max\abs{\rv{\chi}^{u}_{\Tsind}-\mathbb{E}[\rv{\chi}^{u}_{\Tsind}]}$.

Given $E_{\chiT{}}^{\rm U}$ and $V_{\chi_{\Tsind}}^{\rm U}$, it follows that
\begin{equation}
\mathbb{E}[\rv{S}_0| \vec{j}_{\bar{\mathcal{S}_0}}] + N_0\Delta_B(N_0,v_0,c,\epsilon) \leq \bar{N}_L E_{\chiT{}}^{\rm U} + \bar{N}_L\Delta_{B}(\bar{N}_L,V_{\chi_{\Tsind}}^{\rm U},c^{\rm U},\epsilon)=:\delta^{\rm U}_0,
\end{equation}
where $\bar{N}_L=\lceil N/(L+1) \rceil$. Therefore,
\begin{equation}\label{eq:PS02}
    \Pr[\rv{S}_0\geq \delta^{\rm U}_0 | \vec{j}_{\bar{\mathcal{S}_0}}]\leq \Pr[\rv{S}_0\geq \delta^{\rm U}_{0,\vec{j}_{\bar{\mathcal{S}_0}}} | \vec{j}_{\bar{\mathcal{S}_0}}] \leq \epsilon,
\end{equation}
and then~\cref{eq:PS00} can be obtained via~\cref{eq:PS01,eq:PS02}.

Finally, one can obtain analogous probabilistic bounds $\Pr[\rv{S}_m\geq\delta^{\rm U}_m]\leq \epsilon$ for all the other sets $\mathcal{S}_{m}$, and in virtue of the union bound, we have that
\begin{equation}
 \Pr[\rv{M}_{\Tsind}\geq \sum_{m=0}^L\delta^{\rm U}_m=:M_{\Tsind}^{\rm U}] \leq \epsproB,
\end{equation}
where $\epsproB :=(L+1)\epsilon$ and $M_{\Tsind}^{\rm U}:= (L+1)\bar{N}_L\left(E_{\chiT{}}^{\rm U}+\Delta_{\rm B}(\bar{N}_L,V_{\chi_{\Tsind}}^{\rm U},c^{\rm U},\frac{\epsproB}{L+1})\right)$.
This ends the proof.
\end{proof}

Finally, below we include~\cref{thm:correlations_decoy}, which is a generalization of~\cref{thm:correlations} applicable to the decoy-state setting with bit/basis and intensity correlations:

\begin{theorem}
\label{thm:correlations_decoy}
Consider the virtual estimation protocol (VEP) introduced in Box~\ref{box:VEPdec}, and let $\ket{\Psi^{\epsleak}}_{A_1^NR_1^NC_1^N}$, $N$, $\chiT{u}$, $\rv{M}_{\Tsind}$, $\alpha$, and $\Tobsn{A_u}$ be as defined there, where $\ket{\Psi^{\epsleak}}_{A_1^NR_1^NC_1^N}$ is given in~\cref{eq:global_correlated_state_decoy_main} and depends on certain $\epsleak\geq0$, a set of reference states $\set{\ket{\phi_{n,i}}}_{n,i}$ (with $n\in\set{0,1,\dots,n_{\rm cut}}$ and $i\in\set{0,1,\dots,\nstates}$), setting probabilities $\set{\piA{i}}_{i}$, photon-number probabilities $\set{p_{n|i,i^{u-1}_{u-L}}}_{n,i,i^{u-1}_{u-L}}$, and the correlation length $L$.
Let $E_{\chi_{\Tsind}\rvert \mathcal{L}_{u}}^{\rm L}$, $E_{\chi_{\Tsind}\rvert \mathcal{L}_{u}}^{\rm U}$ and $E_{\chiT{2}\rvert \mathcal{L}_{u}}^{\rm U}$ satisfy
\begin{equation}\label{eq:worst_case_bounds_Theorem2}
\begin{split}
E_{\chi_{\Tsind}\rvert \mathcal{L}_{u}}^{\rm L}
&\leq\ptrash\sum_{n=0}^{n_{\rm cut}}
p_{n|\mathcal{L}_{u}^{-}}\min_{G^n\in\mathcal{S}_{\rm const}^{n}}\left[\Tr\set{\Tobsn{A_u} \rho_{A_u}^{n,\varepsilon',\mathcal{L}_u}}\right]
+\ptrash\sum_i\lambda_{i,\infty}^{*}t_{i,\infty}^{\mathcal{L}_{u}^{-}},
\\
E_{\chi_{\Tsind}\rvert \mathcal{L}_{u}}^{\rm U}
&\geq\ptrash\sum_{n=0}^{n_{\rm cut}}
p_{n|\mathcal{L}_{u}^{-}}\max_{G^n\in\mathcal{S}_{\rm const}^{n}}\left[\Tr\set{\Tobsn{A_u} \rho_{A_u}^{n,\varepsilon',\mathcal{L}_u}}\right]
+\ptrash\sum_i\lambda_{i,\infty}^{*}t_{i,\infty}^{\mathcal{L}_{u}^{-}},
\\
E_{\chiT{2}\rvert \mathcal{L}_{u}}^{\rm U}
&\geq\ptrash\sum_{n=0}^{n_{\rm cut}}
p_{n|\mathcal{L}_{u}^{-}}\max_{G^n\in\mathcal{S}_{\rm const}^{n}}\left[\Tr\set{(\Tobsn{A_u})^2 \rho_{A_u}^{n,\varepsilon',\mathcal{L}_u}}\right]
+\ptrash\sum_i(\lambda_{i,\infty}^{*})^2t_{i,\infty}^{\mathcal{L}_{u}^{-}},
\end{split}
\end{equation}
where $\mathcal{L}_u=j^{u+L}_{u+1}j^{u-1}_{u-L}$ and $\mathcal{L}_u^{-}=j^{u-1}_{u-L}$ represent particular sequences of setting choices; $t_{i,\infty}^{\mathcal{L}_{u}^{-}}:=p_{i}^A\left(1-\sum_{n=0}^{n_{\rm cut}}p_{n|i,\mathcal{L}_{u}^{-}}\right)$, 
with $p_{i|n,\mathcal{L}_{u}^{-}}=p^A_i p_{n|i,\mathcal{L}_u^{-}}/p_{n|\mathcal{L}_u^-}$;
$p_{n|\mathcal{L}_{u}^{-}}=\sum_i p_i^Ap_{n|i,\mathcal{L}_{u}^{-}}$ is the probability that Alice emits a $n$-photon state given the sequence of previous settings $\mathcal{L}_u^{-}$; the matrices
\begin{equation}\label{eq:Gmatrix_decoy}
G^n:=
\begin{bmatrix}
G_{\phi}^n & G_{\rm c}^n     \\
(G_{\rm c}^{n})^{\dagger}   & G_{\perp}^n   \\
\end{bmatrix},
\end{equation}
are $2\nstates \times 2\nstates$ Hermitian block matrices built from the $\nstates\times\nstates$ matrices $G_{\phi}^n$, $G_{\rm c}^n$, and $G_{\perp}^n$; 
each $\rho_{A}^{n,\varepsilon',\mathcal{L}_u}$ is a density matrix whose entries depend linearly on those of $G^n$ as $[\rho_{A}^{n,\varepsilon',\mathcal{L}_u}]_{ij} := \sqrt{p_{i|n,\mathcal{L}_{u}^{-}}p_{j|n,\mathcal{L}_{u}^{-}}}\set{(1-\varepsilon')[G_{\phi}^n]_{ji} +\varepsilon'[G_{\perp}^n]_{ji} + \sqrt{(1-\varepsilon')\varepsilon'} ([G_{\rm c}^n]_{ij}^{*} + [G_{\rm c}^n]_{ji})}$ with $\epsleak':=1-(1-\epsleak)^{L+1}$; and  
$\mathcal{S}_{\rm const}^{n}$ is a convex set defined by the constraints
\begin{equation}\label{eq:SDP_gram_DS2}
    \begin{split}
        \quad & G^n \succeq 0, \qquad (n\in\set{0,1,\dots,n_{\rm cut}})
        \\
        \quad & [G_{\phi}^n]_{ij}=\braket{\phi_{n,i}}{\phi_{n,j}}\xi_{i,j,\mathcal{L}_u} \qquad (n\in\set{0,1,\dots,n_{\rm cut}}, \forall i,j),
        \\
        \quad & [G_{\rm c}^n]_{ii}=0 \qquad (n\in\set{0,1,\dots,n_{\rm cut}}, \forall i),
        \\
        \quad & [G_{\perp}^n]_{ii}=1\qquad (n\in\set{0,1,\dots,n_{\rm cut}}, \forall i),
    \end{split}
\end{equation}
where $\xi_{i,j,\mathcal{L}_u}:=\prod_{k=1}^L\sum_{n_{u+k}}\sqrt{p_{n_{u+k}|j^{u+k}_{u+1} i j^{u-1}_{u-L+k}}p_{n_{u+k}|j^{u+k}_{u+1} j j^{u-1}_{u-L+k}}}$.

Then, for any $\epsproB>0$,
\begin{align} \label{eq:MT_bounds_theorem2}
\Pr[\rv{M}_{\Tsind}\geq M_{\Tsind}^{\rm U}] &\leq \epsproB,
\end{align}
where
\begin{align}
M_{\Tsind}^{\rm U}
:=& 
(L+1)\bar{N}_L\max_{\mathcal{L}_u}\left[E_{\chi_{\Tsind}\rvert\mathcal{L}_u}^{\rm U}+\Delta_{\rm B}(\bar{N}_L,V_{\chi_{\Tsind}\rvert\mathcal{L}_u}^{\rm U},c^{\rm U}_{\rvert\mathcal{L}_u},\frac{\epsproB}{L+1})\right],
\\
c^{\rm U}_{\rvert\mathcal{L}_u}  \label{eq:Thm2c}
:= &
\max\set{\omega_{\rm max}-E_{\chi_{\Tsind}\rvert\mathcal{L}_u}^{\rm L},E_{\chi_{\Tsind}\rvert\mathcal{L}_u}^{\rm U}-\omega_{\rm min}},
\\
V_{\chi_{\Tsind}\rvert\mathcal{L}_u}^{\rm U} := \label{eq:Thm2Variance}
& 
\begin{cases}
E_{\chiT{2}\rvert\mathcal{L}_u}^{\rm U}, &  \text{if } 0\in [E_{\chi_{\Tsind}}^{\rm L},E_{\chi_{\Tsind}}^{\rm U}],
\\[2mm]
E_{\chiT{2}\rvert\mathcal{L}_u}^{\rm U} -\min\set{(E_{\chi_{\Tsind}\rvert\mathcal{L}_u}^{\rm L})^2,(E_{\chi_{\Tsind}\rvert\mathcal{L}_u}^{\rm U})^2}, &  \text{if } 0\notin [E_{\chiT{}\rvert\mathcal{L}_u}^{\rm L},E_{\chiT{}\rvert\mathcal{L}_u}^{\rm U}],
\end{cases}
\end{align}
and $\Delta_{\rm B}(n,v,c,\epsilon)$ is given by~\cref{eq:DeltaB}, $\omega_{\min}$ ($\omega_{\max}$) denotes the minimum (maximum) possible value of the RVs $\chiT{u}$, and $\bar{N}_{L}:=\lceil N/(L+1) \rceil$.
\end{theorem}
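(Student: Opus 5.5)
The plan is to mirror the proof of \cref{thm:correlations}, adding two ingredients: the photon-number register $R$ and setting-dependent photon-number statistics. First, I split the rounds into the $L+1$ interleaved sets $\mathcal{S}_m$ of \cref{eq:set_Sm_Thm1} and focus on $\rv{S}_0=\sum_{u\in\mathcal{S}_0}\chiT{u}$. In the rounds outside $\mathcal{S}_0$, Alice measures both $A_u$ in the computational basis $\set{\ket{i}}$ and $R_u$ in the Fock basis. These measurements commute with the \LC{} measurements in $\mathcal{S}_0$, so conditioning on their outcomes does not change the law of $\rv{S}_0$, and \cref{eq:PS01} lets me work with a worst-case bound over conditioning outcomes.

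Second, I write the conditional state. Fix the outside outcomes $\vec{j}_{\bar{\mathcal{S}_0}}$, i.e.\ a context $\mathcal{L}_u=j^{u+L}_{u+1}j^{u-1}_{u-L}$ for each $u\in\mathcal{S}_0$. The state then factorizes over $u\in\mathcal{S}_0$ into
\[
\sum_{n,i}\sqrt{\piA{i}p_{n|i,\mathcal{L}_u^-}}\ket{n}_{R_u}\ket{i}_{A_u}\ket*{\psi^{\epsleak}_{n,i,\mathcal{L}_u^-}}_{C_u}\otimes\ket*{\Theta_{i,\mathcal{L}_u}}_{R_{u+1}^{u+L}C_{u+1}^{u+L}}.
\]
The second factor is the tail generated in the subsequent $L$ rounds; it depends on $i$ through both the photon-number amplitudes $\sqrt{p_{n_{u+k}|\cdots i\cdots}}$ and the states $\ket*{\psi^{\epsleak}_{n_{u+k},\cdots}}$. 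I could also condition on the $n_{u+k}$ outcomes, but that would bias the weights $p$, so I keep them coherent instead. With this factorization the $\chiT{u}$, $u\in\mathcal{S}_0$, are conditionally independent, and I can apply \cref{thm:bernstein}.

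Third, I bound the per-round mean and second moment. Since $\Tobs{RA}$ is block diagonal in $n$ (\cref{eq:Tobs_RAB}), its expectation on the reduced state of $R_uA_u$ equals $\sum_{n\le n_{\rm cut}}p_{n|\mathcal{L}_u^-}\Tr\set{\Tobsn{A}\rho_A^{n}}+\sum_i\lambda^*_{i,\infty}t_{i,\infty}^{\mathcal{L}_u^-}$. Here $\rho_A^n$ is the normalized $n$-block, with weights $p_{i|n,\mathcal{L}_u^-}$. Its entries are $\sqrt{p_{i|n}p_{j|n}}$ times the overlap of the vector $\ket*{\psi_{n,i}}_{C_u}\ket*{\Theta_i}$ with its $j$ counterpart.

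Now I decompose the tail as in \cref{eq:decomposition_corr}. Combining the current-round $\epsleak$ with the $L$ future-round $\epsleak$'s gives $\sqrt{1-\epsleak'}\ket*{\vec{\phi}_{n,i}}+\sqrt{\epsleak'}\ket*{\vec{\phi}^{\perp}_{n,i}}$. The reference part is
\[
\ket*{\vec{\phi}_{n,i}}=\ket{\phi_{n,i}}\bigotimes_k\Big(\sum_{n_{u+k}}\sqrt{p_{n_{u+k}|\cdots i\cdots}}\ket{n_{u+k}}\ket{\phi_{n_{u+k},j_{u+k}}}\Big).
\]
Its inner products are $\braket{\phi_{n,i}}{\phi_{n,j}}\,\xi_{i,j,\mathcal{L}_u}$, since the future reference states depend only on $j_{u+k}$ and drop out. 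This is exactly the constraint on $G^n_\phi$ in \cref{eq:SDP_gram_DS2}. Orthogonality of $\vec\phi^\perp_{n,i}$ to $\vec\phi_{n,i}$ and its normalization give the remaining constraints, so $G^n\in\mathcal{S}^n_{\rm const}$. Optimizing each $n$-block separately therefore gives the bounds in \cref{eq:worst_case_bounds_Theorem2}. The $n>n_{\rm cut}$ part is exactly known because $\Tobs{RA}$ is diagonal there. The variance bound \cref{eq:Thm2Variance} and the range $c$ follow as in \cref{thm:correlations}.

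Fourth, I take the worst case over $\mathcal{L}_u$ to obtain a deterministic $\delta^{\rm U}_0$. Bernstein's term is monotone in the mean, variance and $c$, so maximizing over contexts is valid. I use $|\mathcal{S}_m|\le\bar N_L$, then take a union bound over the $L+1$ sets with failure probability $\epsproB/(L+1)$ each.

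The main obstacle is the third step: checking that the $\epsleak$-decomposition of the product of future states stays compatible with the coherent photon-number superposition. Concretely, the perturbations $\ket*{\phi^{\perp}_{n_{u+k},\cdots}}$ are weighted by amplitudes $\sqrt{p_{n_{u+k}|\cdots}}$ that depend on $i$, and I need the combined $\perp$ component to remain a single normalized vector, orthogonal to $\vec\phi_{n,i}$, with total weight at most $\epsleak'$. I would verify this by applying the one-round decomposition term by term in $n_{u+k}$, using that each Fock term already has the form in \cref{eq:states_decoy} with a common $\epsleak$. For $n_{u+k}>n_{\rm cut}$ I would rely on the trivial reference choice, which has zero $\perp$ part. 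After this, a padding argument as in \cref{lemma:density_op_correlations} brings the total $\perp$ weight to exactly $\epsleak'$.
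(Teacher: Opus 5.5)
Your route is the paper's: interleaved sets $\mathcal{S}_m$, conditioning on Alice's settings outside $\mathcal{S}_0$ as in \cref{eq:cond_state_dec2}, a per-photon-number Gram-matrix SDP with the $\xi$-weighted reference block, Bernstein, a worst case over contexts, and a union bound. Three points need repair, and the first two are places where your own text contradicts what the rest of your argument needs.

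\textbf{Measuring $R_v$.} In step one you also let Alice measure $R_v$ in the Fock basis for $v\notin\mathcal{S}_0$. The rounds $u+1,\dots,u+L$ following each $u\in\mathcal{S}_0$ are among these, so this is exactly the conditioning on $n_{u+k}$ that you later say you avoid. It leaves the law of $\rv{S}_0$ unchanged. However, the $(n,i)$ branch of round $u$ then carries weight proportional to $\piA{i}p_{n|i,\mathcal{L}_u^-}\prod_k p_{n_{u+k}|\cdots i\cdots}$, and the future overlaps lose the $\xi$ factor. So the SDP quantities in the statement need not bound the conditional mean. As in the paper, only $A_v$ should be measured, keeping $R_{u+1}^{u+L}$ coherent inside the states $\ket*{\varphi_{\cdots}}$.

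\textbf{Photon numbers above $n_{\rm cut}$.} Your treatment of $n_{u+k}>n_{\rm cut}$ undoes your own step three. Trivial reference states coincide with the (purified) emitted states, which depend on the whole history and hence on $i$. The $k$-th factor of the reference overlap then becomes $\sum_{n'\le n_{\rm cut}}\sqrt{p_{n'|\cdots i\cdots}p_{n'|\cdots j\cdots}}+\sum_{n'>n_{\rm cut}}\sqrt{p_{n'|\cdots i\cdots}p_{n'|\cdots j\cdots}}\,\braket*{\psi_{n',\cdots i\cdots}}{\psi_{n',\cdots j\cdots}}$ instead of the corresponding factor of $\xi_{i,j,\mathcal{L}_u}$, and $G^n_\phi$ is no longer fixed.

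The paper's decomposition \cref{eq:decomposition_corr-decoy} uses $\ket{\phi_{n_{u+k},j_{u+k}}}$ for every $n_{u+k}$ in the future factors. That is, it needs \cref{eq:states_decoy} with setting-only reference states for all photon numbers emitted in rounds $u+1,\dots,u+L$. Photon numbers above $n_{\rm cut}$ are irrelevant only in the current round, where $\Tobs{RA}$ acts diagonally on that block.

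Given that assumption with a common $\epsleak$, no padding is needed. Each future factor equals $\sqrt{1-\epsleak}\ket{a}+\sqrt{\epsleak}\ket{b}$, with $\ket{a}=\sum_{n'}\sqrt{p_{n'|\cdots}}\ket{n'}\ket{\phi_{n',j_{u+k}}}$ and $\ket{b}=\sum_{n'}\sqrt{p_{n'|\cdots}}\ket{n'}\ket*{\phi^{\perp}_{n',\cdots}}$. Both are normalized and mutually orthogonal: different $n'$ are separated by $R_{u+k}$, and within each sector $\ket*{\phi^{\perp}_{n',\cdots}}\perp\ket{\phi_{n',j_{u+k}}}$. The product of the $L+1$ factors then has an orthogonal remainder of weight exactly $1-(1-\epsleak)^{L+1}=\epsleak'$.

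\textbf{Maximizing over contexts.} Step four proves a weaker bound than the one stated. Given $\vec{j}_{\bar{\mathcal{S}_0}}$, different rounds of $\mathcal{S}_0$ carry different contexts simultaneously. \Cref{thm:bernstein} involves $\sum_u\mathbb{E}[\chiT{u}|\mathcal{L}_u]$, the \emph{average} of the variances, and a single $c$ valid for all rounds.

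Monotonicity lets you replace each of these three ingredients by its own maximum over contexts. That gives $\bar N_L\big[\max_{\mathcal{L}}E^{\rm U}_{\chi_{\Tsind}|\mathcal{L}}+\Delta_{\rm B}(\bar N_L,\max_{\mathcal{L}}V^{\rm U}_{\chi_{\Tsind}|\mathcal{L}},\max_{\mathcal{L}}c^{\rm U}_{|\mathcal{L}},\frac{\epsproB}{L+1})\big]$ per set. It does not give the single-context maximum of $E^{\rm U}+\Delta_{\rm B}$ that defines $M^{\rm U}_{\Tsind}$.

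To see this, note that $\Delta_{\rm B}$ is strictly concave in $v$. Take two contexts $A,B$ with equal $c$, $V_A\neq V_B$, and $E_A+\Delta_{\rm B}(V_A)=E_B+\Delta_{\rm B}(V_B)$ (other arguments suppressed), each occurring in half of the rounds of $\mathcal{S}_0$. The certifiable per-round Bernstein threshold $\frac{1}{2}(E_A+E_B)+\Delta_{\rm B}\big(\frac{1}{2}(V_A+V_B)\big)$ then strictly exceeds that single-context maximum.

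The paper's proof makes the same step in \cref{eq:thm_decoy_boundSDP1} without further justification. What this argument actually delivers is the version with three separate maxima, or a maximum over convex mixtures of contexts.
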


\begin{proof}

The proof is similar to that of~\cref{thm:correlations}. 
Consider the sets defined in~\cref{eq:set_Sm_Thm1}, such that $\rv{M}_{\Tsind}=\sum_{u=1}^{N}\chiT{u}=\sum_{m}\sum_{u\in\mathcal{S}_m}\chiT{u}$. We shall bound $\rv{M}_{\Tsind}$ by bounding each of the sums $\rv{S}_m:=\sum_{u\in\mathcal{S}_m}\chiT{u}$ individually.

For this, let us focus first in the sum $\rv{S}_0$.
By using arguments analogous to those in the proof of~\cref{thm:correlations}, we consider a variant of the VEP in which, in the rounds not belonging to $\mathcal{S}_0$---including those tagged as \LC{}---Alice measures her systems $A_1^N$ in the basis $\set{\ket{i}}_i$, observing the outcome $\vec{j}_{\bar{\mathcal{S}_0}}:=\bigotimes_{u\notin \mathcal{S}_0}\ket{j_u}_{A_u}$. In this scenario, the entangled state of the remaining systems conditioned on the outcome $\vec{j}_{\bar{\mathcal{S}_0}}$ can be written as
\begin{equation}\label{eq:cond_state_dec2}
\begin{split}
\ket*{\Psi_{|\vec{j}_{\bar{\mathcal{S}_0}}}}
&=\bigotimes_{u\in \mathcal S_0} 
\left(
\sum_{n_u,i_u} 
\sqrt{\piA{i_u}p_{n_u|i_u,\mathcal{L}_{u}^{-}}}
\ket{n_u}_{R_u}\ket{i_u}_{A_u}
\ket*{\psi_{n_{u},i_u j_{u-L}^{u-1}}^{\epsleak}}_{C_{u}}
\bigotimes_{k=1}^{L}\ket*{\varphi_{j_{u+1}^{u+k}i_u j_{u-L+k}^{u-1}}}_{R_{u+k}C_{u+k}}
\right)
\\
&=\bigotimes_{u\in \mathcal S_0} \ket*{\Psi_{|\mathcal{L}_{u}}^u}_{A_uR_{u}^{u+L}C_u^{u+L}},
\end{split}
\end{equation}
where the states $\ket*{\psi_{n_{u},i_{u-L}^{u}}^{\epsleak}}_{C_{u}}$ are defined in~\cref{eq:states_decoy}, and
\begin{equation}
\ket*{\varphi_{i_{u-L}^{u}}}_{R_uC_{u}} = \sum_{n_u}\sqrt{p_{n_u|i_{u-L}^{u}}}
\ket{n_u}_{R_u}
\ket*{\psi_{n_{u},i_{u-L}^{u}}^{\epsleak}}_{C_{u}},
\end{equation}
$\mathcal{L}_{u}:=j^{u+L}_{u+1}j^{u-1}_{u-L}$ and $\mathcal{L}_{u}^{-}:=j^{u-1}_{u-L}$. From~\cref{eq:cond_state_dec2} it is clear that 
$\Pr[\rv{\chi}^{u}_{\Tsind}=\omega,\rv{\chi}^{v}_{\Tsind}=\omega'|\vec{j}_{\bar{\mathcal{S}_0}}] =\Pr[\rv{\chi}^{u}_{\Tsind}=\omega|\vec{j}_{\bar{\mathcal{S}_0}}]\Pr[\rv{\chi}^{v}_{\Tsind}=\omega'|\vec{j}_{\bar{\mathcal{S}_0}}]$ for any $u,v\in\mathcal{S}_0$ and for any outcomes $\omega$ and $\omega'$. That is, all the random variables in the set $\set{\rv{\chi}^{u}_{\Tsind}}_{u\in\mathcal{S}_0}$ are independent conditioned on the outcomes $\vec{j}_{\bar{\mathcal{S}_0}}$, and one can follow similar arguments as those used in the proof of~\cref{thm:correlations} to obtain a probabilistic bound on $S_0$.

Specifically, we aim to prove that
\begin{equation}\label{eq:PS01_dec}
\begin{split}
\Pr[\rv{S}_0\geq\delta^{\rm U}_0]&=\sum_{\vec{j}_{\bar{\mathcal{S}_0}}}\Pr[\vec{j}_{\bar{\mathcal{S}_0}}]\Pr[\rv{S}_0\geq\delta^{\rm U}_0 | \vec{j}_{\bar{\mathcal{S}_0}}]
\leq 
\epsilon
\end{split}
\end{equation}
holds for certain $\delta_0^{\rm U}$ and $\epsilon$. 
As the random variables $\set{\rv{\chi}^{u}_{\Tsind}}_{u\in\mathcal{S}_0}$ are independent, one can invoke \cref{thm:bernstein} to find a bound on $\Pr[\rv{S}_0\geq\delta^{\rm U}_0 | \vec{j}_{\bar{\mathcal{S}_0}}]$ (see Appendix~\ref{app:finite-key}). In particular, we have that 
\begin{equation}\label{eq:bound_dependent_j_dec}
    \Pr[\rv{S}_0\geq \mathbb{E}[\rv{S}_0| \vec{j}_{\bar{\mathcal{S}_0}}] + N_0\Delta_{\rm B}(N_0,v_0,c,\epsilon) \rvert \vec{j}_{\bar{\mathcal{S}_0}}]\leq \epsilon,
\end{equation}
where the function $\Delta_{\rm B}(n,v,c,\epsilon)$ is given in~\cref{eq:DeltaB}, 
$N_0:=\abs{\mathcal{S}_0}$, $\epsilon>0$, $v_0:=\frac{1}{N_0}\sum_{u\in\mathcal{S}_0}{\rm Var}\left[\rv{\chi}^{u}_{\Tsind} | \vec{j}_{\bar{\mathcal{S}_0}}\right]$, and $c\geq \max\abs{\rv{\chi}^{u}_{\Tsind}-\mathbb{E}[\rv{\chi}^{u}_{\Tsind} | \vec{j}_{\bar{\mathcal{S}_0}}]}$ for all $u\in\mathcal{S}_0$. 
Given $\delta^{\rm U}_{0}\big\rvert_{\vec{j}_{\bar{\mathcal{S}_0}}}:=\mathbb{E}[\rv{S}_0 | \vec{j}_{\bar{\mathcal{S}_0}}] + N_0\Delta_{\rm B}(N_0,v_0,c,\epsilon_{\rm B})$, the next step is to obtain an upper bound that is independent of $\vec{j}_{\bar{\mathcal{S}_0}}$ by considering a worst-case scenario---i.e., we aim to find a $\delta^{\rm U}_0$ such that $\delta^{\rm U}_{0}\big\rvert_{\vec{j}_{\bar{\mathcal{S}_0}}}\leq \delta^{\rm U}_0$ for any $\vec{j}_{\bar{\mathcal{S}_0}}$. 
For this, first we employ SDP techniques~\cite{curras2023security2} to compute worst-case bounds satisfying 
$c^{\rm U}_{\rvert\mathcal{L}_u}\geq \max\abs{\rv{\chi}^{u}_{\Tsind}-\mathbb{E}[\rv{\chi}^{u}_{\Tsind}| \mathcal{L}_u]}$,  
$E_{\chiT{}\rvert\mathcal{L}_u}^{\rm U}\geq\mathbb{E}[\rv{\chi}^{u}_{\Tsind}|\mathcal{L}_u]$ 
and 
$V_{\chiT{}\rvert\mathcal{L}_u}^{\rm U}\geq V[\rv{\chi}^{u}_{\Tsind}|\mathcal{L}_u]$ that are valid for any $\mathcal{L}_u$ and $u\in\mathcal{S}_0$, where
\begin{equation}
\begin{split}
\mathbb{E}\left[\rv{\chi}^{u}_{\Tsind}|\mathcal{L}_u\right]
&=
\ptrash
\bra{\Psi_{|\mathcal{L}_u}^u}
\Tobs{R_uA_u}
\ket{\Psi_{|\mathcal{L}_u}^u}_{A_uR_u^{u+L}C_u^{u+L}},
\\
{\rm Var}\left[\rv{\chi}^{u}_{\Tsind}|\mathcal{L}_u\right]
&=
\ptrash
\bra{\Psi_{|\mathcal{L}_u}^u}
\left(\Tobs{R_uA_u}\right)^2
\ket{\Psi_{|\mathcal{L}_u}^u}_{A_uR_u^{u+L}C_u^{u+L}}
-
\left(\mathbb{E}\left[\rv{\chi}^{u}_{\Tsind}|\mathcal{L}_u\right]\right)^2,
\end{split}
\end{equation}
and $\ket*{\Psi_{|\mathcal{L}_u}^u}$ is given in~\cref{eq:cond_state_dec2}. Note that, due to~\cref{eq:states_decoy}, the states
\begin{equation}
\ket*{\Vec{\psi}_{n_u,i_u,\mathcal{L}_u}}_{R^{u+L}_{u+1}C_u^{u+L}}
:=
\ket*{\psi_{n_{u},i_u j_{u-L}^{u-1}}^{\epsleak}}_{C_{u}}
\bigotimes_{k=1}^{L}\ket*{\varphi_{j_{u+1}^{u+k}i_u j_{u-L+k}^{u-1}}}_{R_{u+k}C_{u+k}}
\end{equation}
that appear in~\cref{eq:cond_state_dec2} admit the decomposition
\begin{equation}\label{eq:decomposition_corr-decoy}
\begin{split}
\ket*{\Vec{\psi}_{n_u,i_u,\mathcal{L}_u}}_{R^{u+L}_{u+1}C_u^{u+L}}
= &
\sqrt{1-\epsleak'}\ket*{\Vec{\phi}_{n_u,i_u,\mathcal{L}_u}}_{R^{u+L}_{u+1}C_u^{u+L}}
+\sqrt{\epsleak'}\ket*{\Vec{\phi}_{n_u,i_u,\mathcal{L}_u}^{\perp}}_{R^{u+L}_{u+1}C_u^{u+L}},
\end{split}
\end{equation}
where $\epsleak'=1-(1-\epsleak)^{L+1}\approx (L+1)\epsleak$; the states 
$$\ket*{\Vec{\phi}_{n_u,i_u,\mathcal{L}_u}}_{R^{u+L}_{u+1}C_u^{u+L}}=
\ket*{\phi_{n_u,i_{u}}}_{C_{u}}\bigotimes_{k=1}^{L}\sum_{n_{u+k}}
\sqrt{p_{n_{u+k}|j^{u+k}_{u+1} i_u j^{u-1}_{u-L+k}}}
\ket*{n_{u+k}}_{R_{u+k}}
\ket*{\phi_{n_{u+k},j_{u+k}}}_{C_{u+k}}$$
satisfy $\braket*{\Vec{\phi}_{n_u,i_u,\mathcal{L}_u}}{\Vec{\phi}_{n_u',i_u',\mathcal{L}_u}}=\braket*{\phi_{n_u,i_u}}{\phi_{n_u',i_u'}}\xi_{i_u,i_u',\mathcal{L}_u}$ for any $n_u$, $i_u$, $n_u'$, and $i_u'$, with $\xi_{i_u,i_u',\mathcal{L}_u}:=\prod_{k=1}^L\sum_{n_{u+k}}\sqrt{p_{n_{u+k}|j^{u+k}_{u+1} i_u j^{u-1}_{u-L+k}}p_{n_{u+k}|j^{u+k}_{u+1} i_u' j^{u-1}_{u-L+k}}}$; and the state $\ket*{\Vec{\phi}_{n_u,i_u,\mathcal{L}_u}^{\perp}}_{R^{u+L}_{u+1}C_u^{u+L}}$ is orthogonal to $\ket*{\Vec{\phi}_{n_u,i_u}}_{R^{u+L}_{u+1}C_u^{u+L}}$ for any $n_u$ and $i_u$.

Then, for each value $n_u\in\set{0,1,\dots,n_{\rm cut}}$ we can construct a $2n_A\times2n_A$ Gram matrix $G^n$ filled with the inner products of the states $\ket*{\Vec{\phi}_{n_u,0,\mathcal{L}_u}},\dots,\ket*{\Vec{\phi}_{n_u,n_A-1,\mathcal{L}_u}},\ket*{\Vec{\phi}_{n_u,0,\mathcal{L}_u}^{\perp}},\dots,\ket*{\Vec{\phi}_{n_u,n_A-1,\mathcal{L}_u}^{\perp}}$ for a given $\mathcal{L}_u$. Precisely, we define $G^n$ as in~\cref{eq:Gmatrix_decoy}, where each of the $n_A\times n_A$ sub-blocks is defined by its entries as $[G_{\phi}^n]_{ij}=\braket*{\phi_{n,i,\mathcal{L}_u}}{\phi_{n,j,\mathcal{L}_u}}$, 
$[G_{\rm c}^n]_{ij}=\braket*{\Vec{\phi}_{n,i,\mathcal{L}_u}}{\Vec{\phi}_{n,j,\mathcal{L}_u}^{\perp}}$, and 
$[G_{\perp}^n]_{ij}=\braket*{\Vec{\phi}_{n,i,\mathcal{L}_u}^{\perp}}{\Vec{\phi}_{n,j,\mathcal{L}_u}^{\perp}}$. 
Thus, we can define some bounds $E_{\chiT{}\rvert \mathcal{L}_{u}}^{\rm L}$, $E_{\chiT{}\rvert \mathcal{L}_{u}}^{\rm U}$ and $E_{\chiT{2}\rvert \mathcal{L}_{u}}^{\rm U}$ satisfying~\cref{eq:worst_case_bounds_Theorem2},
where $t_{i,\infty}^{\mathcal{L}_{u}^{-}}=p_{i}\left(1-\sum_{n=0}^{n_{\rm cut}}p_{n|i,\mathcal{L}_{u}^{-}}\right)$. Due to~\cref{eq:decomposition_corr-decoy}, the marginal state $\rho_{A_u}^{n,\epsleak',\mathcal{L}_u}=\frac{1}{p_{n|\mathcal{L}_{u}^{-}}}\Tr_{R_uC_u^{u+L}}\set{\Pi^n_{R_u}\dyad*{\Psi_{|\mathcal{L}_u}^u}_{R_uA_uC_u^{u+L}}}$ of Alice's ancilla system $A_u$ associated with the $n$-photon transmitted state, with $\Pi^n_{R_u}=\dyad{n}_{R_u}$, can be written as a function of the elements of $G^n$ as
\begin{equation}
\begin{split}
[\rho_{A_u}^{n,\epsleak',\mathcal{L}_u}]_{ij} =& \sqrt{p_{i|n,\mathcal{L}_{u}^{-}}p_{j|n,\mathcal{L}_{u}^{-}}}\braket*{\Vec{\psi}_{n,j,\mathcal{L}_u}}{\Vec{\psi}_{n,i,\mathcal{L}_u}}
\\
=& 
\sqrt{p_{i|n,\mathcal{L}_{u}^{-}}p_{j|n,\mathcal{L}_{u}^{-}}}\big[
(1-\epsleak')[G_{\phi}^n]_{ji} + \epsleak'[G_{\perp}^n]_{ji}
+ \sqrt{(1-\epsleak')\epsleak'} ([G_{\rm c}^n]_{ij}^{*} + [G_{\rm c}^n]_{ji})
\big],
\end{split}
\end{equation}
and $\mathcal{S}_{\rm const}^{n}$ denotes set of constraints provided in~\cref{eq:SDP_gram_DS2}.
An upper bound on the variance $V_{\chi_{\Tsind}\rvert \mathcal{L}_{u}}^{\rm U}\geq {\rm Var}[\chiT{u}|\mathcal{L}_{u}]$ is then given by~\cref{eq:Thm2Variance}. As for the quantity
$c^{\rm U}_{\rvert \mathcal{L}_{u}} \geq \max\abs{\rv{\chi}^{u}_{\Tsind}-\mathbb{E}[\rv{\chi}^{u}_{\Tsind}|\mathcal{L}_{u}]}$, we can fix it to~\cref{eq:Thm2c}.

Note that the bounds $E_{\chiT{}\rvert \mathcal{L}_{u}}^{\rm U}$, $V_{\chi_{\Tsind}\rvert \mathcal{L}_{u}}^{\rm U}$ and $c^{\rm U}_{\rvert \mathcal{L}_{u}}$ still depend on the local context $\mathcal{L}_{u}$. To remove this dependence, we take again the worst-case bound by maximizing over all possible $\mathcal{L}_{u}$, such that
\begin{equation}\label{eq:thm_decoy_boundSDP1}
\mathbb{E}[\rv{S}_0| \vec{j}_{\bar{\mathcal{S}_0}}] + N_0\Delta_B(N_0,v_0,c,\epsilon) 
\leq 
\max_{\mathcal{L}_{u}}\left[\bar{N}_L E_{\chiT{}\rvert \mathcal{L}_{u}}^{\rm U} + \bar{N}_L\Delta_{B}(\bar{N}_L,V_{\chi_{\Tsind}\rvert \mathcal{L}_{u}}^{\rm U},c^{\rm U}_{\rvert \mathcal{L}_{u}},\epsilon)\right]=:\delta^{\rm U}_0,
\end{equation}
holds for  any $\vec{j}_{\bar{\mathcal{S}_0}}$, where $\bar{N}_L=\lceil N/(L+1) \rceil$. Therefore, it follows from~\cref{eq:bound_dependent_j_dec,eq:thm_decoy_boundSDP1} that
\begin{equation}\label{eq:PS02_decoy}
    \Pr[\rv{S}_0\geq \delta^{\rm U}_0 | \vec{j}_{\bar{\mathcal{S}_0}}]
    \leq \epsilon,
\end{equation}
for any $\vec{j}_{\bar{\mathcal{S}_0}}$, and so~\cref{eq:PS01_dec} follows straightforwardly.

Finally, one can obtain analogous probabilistic bounds $\Pr[\rv{S}_m\geq\delta^{\rm U}_m]\leq \epsilon$ for all the other sets $\mathcal{S}_{m}$, and in virtue of the union bound, we have that
\begin{equation}
 \Pr[\rv{M}_{\Tsind}\geq \sum_{m=0}^L\delta^{\rm U}_m=:M_{\Tsind}^{\rm U}] \leq \epsproB,
\end{equation}
where we set $\epsproB :=(L+1)\epsilon$ and $M_{\Tsind}^{\rm U}:= (L+1)\max_{\mathcal{L}_{u}}\left[\bar{N}_LE_{\chiT{}\rvert \mathcal{L}_{u}}^{\rm U}+\bar{N}_L\Delta_{\rm B}(\bar{N}_L,V_{\chi_{\Tsind}\rvert \mathcal{L}_{u}}^{\rm U},c^{\rm U}_{\rvert \mathcal{L}_{u}},\frac{\epsproB}{L+1})\right]$.  
This ends the proof.

\end{proof}
\section{Model for intensity correlations}\label{app:model_correlations}
To illustrate our security analysis for decoy-state QKD schemes, we consider a particular model of intensity correlations. In particular, we focus on a typical scenario in which the actual intensity $I_u$ of the $u$-th transmitted signal satisfies $I_u(i^{u}_{u-L}) = I_u(\mu^{u}_{u-L})$. That is, in our simulations, we assume that only the previous intensity settings---and not the bit/basis settings---may affect the actual intensity of the current emitted signal. This is merely a particular modeling choice; more general models, in which previous bit and basis settings also influence the intensity of the current signal, can be considered in a straightforward manner.

Specifically, we define the average intensity as
\begin{equation}\label{eq:Iu_av_definition}
    \Iav{\mu}:=\sum_{\mu^{u-1}_{u-L}} p_{\mu^{u-1}_{u-L}}\, I_u(\mu,\mu^{u-1}_{u-L}).
\end{equation}
Then, we model the actual intensity as a history-dependent perturbation of this average, namely,
\begin{equation}
    I_u(\mu,\mu^{u-1}_{u-L}) = \Iav{\mu}\bigl(1+\delta(\mu^{u-1}_{u-L})\bigr),
\end{equation}
where $\delta(\mu^{u-1}_{u-L})$ is a scalar correction term accounting for the correlations that must satisfy $\delta(\mu^{u-1}_{u-L})\geq -1$ to ensure that the actual intensity is positive. To compute this latter quantity, we assume that the highest intensity setting ($\mu=0$) slightly increases the intensity of subsequent rounds relative to the average. Conversely, every other intensity setting slightly reduces the intensity of subsequent rounds relative to the average; for simplicity, we assume this reduction to be identical across those settings. Thus, the dependence on each previous choice is modeled only through whether that choice was the highest intensity or not. Precisely, we write
\begin{equation}
    \delta(\mu^{u-1}_{u-L}) = \sum_{l=1}^{L} \epsilon_{\rm ic}^{\,l}\, z(\mu_{u-l}),
\end{equation}
where
\begin{equation}
    \epsilon_{\rm ic}^{\,l} = \epsilon_{\rm ic}^{\,1} e^{-\zeta(l-1)}
\end{equation}
for some $\zeta>0$, where $\epsilon_{\rm ic}^{\,1} > 0$ quantifies the impact of the immediately previous round, $\zeta$ determines the exponential decay of the correlations with the lag $l$, and the function $z(\mu)$ is defined as
\begin{equation}
    z(\mu)=
    \begin{cases}
        1-\sum_a\piA{a,0}, & \text{if } \mu=0,\\[0.5em]
        -\sum_a\piA{a,0}, & \text{if } \mu\neq 0.
    \end{cases}
\end{equation}
Note that, by construction, the random variable $z(\mu)$ has zero mean. Therefore, $\mathbb{E}\!\left[\delta(\mu^{u-1}_{u-L})\right]=0$, which implies $\mathbb{E}_{\mu^{u-1}_{u-L}}\!\left[I_u(\mu,\mu^{u-1}_{u-L})\right] = \Iav{\mu}$. Hence, the model preserves the prescribed average intensity vector while introducing exponentially decaying correlations over the previous $L$ rounds.

We remark that the above model is introduced solely for simulation purposes. The security proof developed here remains valid for other arbitrary forms of correlations.

\section{Normalizing the SDP}\label{app:normalizationSDP}

The SDP in~\cref{eq:primal1} can be re-normalized to make the input statistics of the SDP independent of Alice's settings probabilities $\piA{i}$~\cite{zhou2022numerical}. For this, consider the map $\mathcal{R}_A:\mathcal{L}(\mathcal{H}_A)\to\mathcal{L}(\mathcal{H}_A)$ defined by its action over an operator $\hat{L}_A\in\mathcal{L}(\mathcal{H}_A)$ as
\begin{equation}
\begin{split}
\bra{i}\mathcal{R}_A(\hat{L}_A)\ket{j} &= \frac{1}{\sqrt{\piA{i}\piA{j}}}\bra{i}\hat{L}_A\ket{j},
\\
\bra{i}\mathcal{R}^{-1}_A(\hat{L}_A)\ket{j} &= \sqrt{\piA{i}\piA{j}}\bra{i}\hat{L}_A\ket{j}.
\end{split}
\end{equation}
This allows to define the re-normalized density matrix $\rho^{\rm R}_{AB}:=\mathcal{R}_A\otimes \mathcal{I}_B\left(\rho_{AB}\right)$ and the re-normalized phase-error operator $\hat{E}_{AB}^{\rm ph, R}:=\mathcal{R}_A^{-1}\otimes \mathcal{I}_B(\hat{E}^{\rm ph}_{AB})$, which satisfies $\hat{E}_{AB}^{\rm ph,R}=\frac{\piA{Z}}{2}\hat{E}_{AB}^{\rm ph}$, with $\piA{Z}:=\piA{0}+\piA{1}=2\piA{0}$. Note that, since $\mathcal{R}_A^{\dagger}=\mathcal{R}_A$, these re-normalized operators satisfy $\Tr\set{\hat{E}^{\rm ph}_{AB}\rho_{AB}}=\Tr\set{\hat{E}_{AB}^{\rm ph,R}\rho^{\rm R}_{AB}}=\frac{\piA{Z}}{2}\Tr\set{\hat{E}_{AB}^{\rm ph}\rho^{\rm R}_{AB}}$, and therefore the SDP introduced in~\cref{eq:primal1} can be reformulated as
\begin{equation}
    \begin{split}
        \max_{\rho_{AB}^{\rm R}} \quad &  \Tr\set{\hat{E}_{AB}^{\rm ph}\rho_{AB}^{\rm R}}\\
        \text{s.t.} \quad & \rho_{AB}^{\rm R}\succeq 0, \\
        \quad & \Tr\set{\hat{T}^k_A \rho_{AB}^{\rm R}} = t_k^{\rm R},\qquad \forall k\\
        \quad & \Tr\set{\hat{Q}^l_{AB} \rho_{AB}^{\rm R}} = q_l^{\rm R},\qquad \forall l
    \end{split}
\end{equation}
where now the statistics $t_k^{\rm R}$ and $q_l^{\rm R}:= \sum_{i,\beta,b}c_{i,\beta,b}^l p_{b|i,\beta},$ no longer depend on the selection probabilities $\piA{i}$.
Therefore, we can run the dual SDP
\begin{equation}
\begin{split}
    \min_{\vos{\Lambda}} \quad &  \sum_l\tilde{\eta}_l q^{\rm gs, R}_l + \sum_k\tilde{\lambda}_k t^{\rm gs, R}_{k}\\
    \text{s.t.} \quad & \hat{E}^{\rm ph}_{AB}  \preceq \sum_l\tilde{\eta}_l\hat{Q}_{AB}^{l}  +\sum_k\tilde{\lambda}_k\hat{T}_{A}^{k}\otimes\identity_B,
\end{split}
\end{equation}
and obtain the solutions $\set{\tilde{\eta}_l^{*}}_l$ and  $\set{\tilde{\lambda}_k^{*}}_k$, such that the operator inequality 
\begin{equation}
    \hat{E}^{\rm ph}_{AB}  \preceq \sum_l\tilde{\eta}_l^{*}\hat{Q}_{AB}^{l}  +\sum_k\tilde{\lambda}_k^{*}\hat{T}_{A}^{k}\otimes\identity_B,
\end{equation}
is tight for a target operator $\rho^{\rm gs, R}_{AB}$. Note that
\begin{equation}
\begin{split}
    \Tr\set{\hat{E}^{\rm ph}_{AB}\rho_{AB}} =\frac{\piA{Z}}{2} 
    \Tr\set{\hat{E}^{\rm ph}_{AB}\rho^{\rm R}_{AB}}  
    & \leq 
    \frac{\piA{Z}}{2}\sum_l\tilde{\eta}_l^{*}\Tr\set{\hat{Q}_{AB}^{l}\rho^{\rm R}_{AB}}  +
    \frac{\piA{Z}}{2}\sum_k\tilde{\lambda}_k^{*}\Tr\set{(\hat{T}_{A}^{k}\otimes\identity_B)\rho^{\rm R}_{AB}},
    \\
    & = \frac{\piA{Z}}{2}\sum_l\tilde{\eta}_l^{*}\Tr\set{\mathcal{R}_A(\hat{Q}_{AB}^{l})\rho_{AB}}  +
    \frac{\piA{Z}}{2}\sum_k\tilde{\lambda}_k^{*}\Tr\set{(\mathcal{R}_A(\hat{T}_{A}^{k})\otimes\identity_B)\rho_{AB}}.
\end{split}
\end{equation}
%


\bibliographystyle{apsrev4-2}
\bibliography{refs}

@article{Xoel,
  title = {Security of Decoy-State Quantum Key Distribution with Correlated Intensity Fluctuations},
  author = {Sixto, Xoel and Zapatero, V\'{\i}ctor and Curty, Marcos},
  journal = {Physical Review Applied},
  volume = {18},
  issue = {4},
  pages = {044069},
  numpages = {21},
  year = {2022},
  publisher = {American Physical Society},
  doi = {10.1103/PhysRevApplied.18.044069},
  url = {https://link.aps.org/doi/10.1103/PhysRevApplied.18.044069}
}

@article{bernstein1924modification,
  title={On a modification of {C}hebyshev’s inequality and of the error formula of {L}aplace},
  author={Bernstein, Sergei},
  journal={Ann. Sci. Inst. Sav. Ukraine, Sect. Math},
  volume={1},
  number={4},
  pages={38--49},
  year={1924}
}

@book{boucheron2013concentration,
  author    = {Boucheron, St{\'e}phane and Lugosi, G{\'a}bor and Massart, Pascal},
  title     = {Concentration Inequalities: A Nonasymptotic Theory of Independence},
  publisher = {Oxford University Press},
  year      = {2013},
  doi       = {10.1093/acprof:oso/9780199535255.001.0001}
}

@article{tupkary2024phase,
  title={Phase error rate estimation in {QKD} with imperfect detectors},
  author={Tupkary, Devashish and Nahar, Shlok and Sinha, Pulkit and L{\"u}tkenhaus, Norbert},
  journal={Quantum},
  volume={9},
  pages={1937},
  year={2025},
  publisher={Verein zur F{\"o}rderung des Open Access Publizierens in den Quantenwissenschaften}
}

@article{serfling1974probability,
  title     = {Probability Inequalities for the Sum in Sampling without Replacement},
  author    = {Serfling, R. J.},
  journal   = {The Annals of Statistics},
  volume    = {2},
  number    = {1},
  pages     = {39--48},
  year      = {1974},
  publisher = {Institute of Mathematical Statistics},
  doi       = {10.1214/aos/1176342611}
}

@article{tamaki2014loss,
  title={Loss-tolerant quantum cryptography with imperfect sources},
  author={Tamaki, Kiyoshi and Curty, Marcos and Kato, Go and Lo, Hoi-Kwong and Azuma, Koji},
  journal={Physical Review A},
  volume={90},
  number={5},
  pages={052314},
  year={2014},
  publisher={APS}
}

@article{pereira,
	Author = {Pereira, Margarida and Curty, Marcos and Tamaki, Kiyoshi},
	Journal = {npj Quantum Information},
	Number = {1},
	Pages = {62},
	Title = {Quantum key distribution with flawed and leaky sources},
	Volume = {5},
	Year = {2019}}

@article{pereira2020quantum,
          title={Quantum key distribution with correlated sources},
          author={Pereira, Margarida and Kato, Go and Mizutani, Akihiro and Curty, Marcos and Tamaki, Kiyoshi},
          journal={Science Advances},
          volume={6},
          number={37},
          pages={eaaz4487},
          year={2020},
          publisher={American Association for the Advancement of Science}
        }

@article{azuma,
	Author = {K. Azuma},
	Journal = {Tohoku Mathematical Journal},
	Pages = {357-367},
	Title = {Weighted sums of certain dependent random variables},
	Volume = {19},
	Year = {1967}}

@article{kato,
	Author = {Go Kato},
	Journal = {preprint arXiv:2002.04357},
	Title = {Concentration inequality using unconfirmed knowledge},
	Year = {2020}}

@article{lo2005decoy,
      title={Decoy state quantum key distribution},
      author={Lo, Hoi-Kwong and Ma, Xiongfeng and Chen, Kai},
      journal={Physical Review Letters},
      volume={94},
      number={23},
      pages={230504},
      year={2005},
      publisher={APS}
}

@article{koashi2009simple,
  title={Simple security proof of quantum key distribution based on complementarity},
  author={Koashi, Masato},
  journal={New Journal of Physics},
  volume={11},
  number={4},
  pages={045018},
  year={2009},
  publisher={IOP Publishing}
}

@article{zapatero2021security,
  title={Security of quantum key distribution with intensity correlations},
  author={Zapatero, V{\'\i}ctor and Navarrete, {\'A}lvaro and Tamaki, Kiyoshi and Curty, Marcos},
  journal={Quantum},
  volume={5},
  pages={602},
  year={2021},
  publisher={Verein zur F{\"o}rderung des Open Access Publizierens in den Quantenwissenschaften}
}

@article{curras2023security,
  title={Security of quantum key distribution with imperfect phase randomisation},
  author={Curr{\'a}s-Lorenzo, Guillermo and Nahar, Shlok and L{\"u}tkenhaus, Norbert and Tamaki, Kiyoshi and Curty, Marcos},
  journal={Quantum Science and Technology},
  volume={9},
  number={1},
  pages={015025},
  year={2023},
  publisher={IOP Publishing}
}

@article{xu2020secure,
  title={Secure quantum key distribution with realistic devices},
  author={Xu, Feihu and Ma, Xiongfeng and Zhang, Qiang and Lo, Hoi-Kwong and Pan, Jian-Wei},
  journal={Reviews of Modern Physics},
  volume={92},
  number={2},
  pages={025002},
  year={2020},
  publisher={APS}
}

@article{pirandola2020advances,
  title={Advances in quantum cryptography},
  author={Pirandola, Stefano and Andersen, Ulrik L and Banchi, Leonardo and Berta, Mario and Bunandar, Darius and Colbeck, Roger and Englund, Dirk and Gehring, Tobias and Lupo, Cosmo and Ottaviani, Carlo and others},
  journal={Advances in Optics and Photonics},
  volume={12},
  number={4},
  pages={1012--1236},
  year={2020},
  publisher={Optica Publishing Group}
}

@article{lo2014secure,
  title={Secure quantum key distribution},
  author={Lo, Hoi-Kwong and Curty, Marcos and Tamaki, Kiyoshi},
  journal={Nature Photonics},
  volume={8},
  number={8},
  pages={595--604},
  year={2014},
  publisher={Nature Publishing Group UK London}
}

@article{navarrete2022improved,
  title={Improved finite-key security analysis of quantum key distribution against {T}rojan-horse attacks},
  author={Navarrete, {\'A}lvaro and Curty, Marcos},
  journal={Quantum Science and Technology},
  volume={7},
  number={3},
  pages={035021},
  year={2022},
  publisher={IOP Publishing}
}

@article{bunandar2020numerical,
  title={Numerical finite-key analysis of quantum key distribution},
  author={Bunandar, Darius and Govia, Luke CG and Krovi, Hari and Englund, Dirk},
  journal={npj Quantum Information},
  volume={6},
  number={1},
  pages={104},
  year={2020},
  publisher={Nature Publishing Group UK London}
}

@article{coles2016numerical,
  title={Numerical approach for unstructured quantum key distribution},
  author={Coles, Patrick J and Metodiev, Eric M and L{\"u}tkenhaus, Norbert},
  journal={Nature Communications},
  volume={7},
  number={1},
  pages={11712},
  year={2016},
  publisher={Nature Publishing Group UK London}
}

@article{wang2019characterising,
  title={Characterising the correlations of prepare-and-measure quantum networks},
  author={Wang, Yukun and Primaatmaja, Ignatius William and Lavie, Emilien and Varvitsiotis, Antonios and Lim, Charles Ci Wen},
  journal={npj Quantum Information},
  volume={5},
  number={1},
  pages={17},
  year={2019},
  publisher={Nature Publishing Group UK London}
}

@article{lorente2024quantum,
  title={Quantum key distribution rates from non-symmetric conic optimization},
  author={Lorente, Andr{\'e}s Gonz{\'a}lez and Parellada, Pablo V and Castillo-Celeita, Miguel and Ara{\'u}jo, Mateus},
  journal={Quantum},
  volume={9},
  pages={1657},
  year={2025},
  publisher={Verein zur F{\"o}rderung des Open Access Publizierens in den Quantenwissenschaften}
}

@article{zhou2022numerical,
  title={Numerical method for finite-size security analysis of quantum key distribution},
  author={Zhou, Hongyi and Sasaki, Toshihiko and Koashi, Masato},
  journal={Physical Review Research},
  volume={4},
  number={3},
  pages={033126},
  year={2022},
  publisher={APS}
}

@article{george2021numerical,
  title={Numerical calculations of the finite key rate for general quantum key distribution protocols},
  author={George, Ian and Lin, Jie and L{\"u}tkenhaus, Norbert},
  journal={Physical Review Research},
  volume={3},
  number={1},
  pages={013274},
  year={2021},
  publisher={APS}
}

@article{winick2018reliable,
  title={Reliable numerical key rates for quantum key distribution},
  author={Winick, Adam and L{\"u}tkenhaus, Norbert and Coles, Patrick J},
  journal={Quantum},
  volume={2},
  pages={77},
  year={2018},
  publisher={Verein zur F{\"o}rderung des Open Access Publizierens in den Quantenwissenschaften}
}

@article{kamin2025improved,
  title={Improved finite-size effects in {QKD} protocols with applications to decoy-state {QKD}},
  author={Kamin, Lars and Tupkary, Devashish and L{\"u}tkenhaus, Norbert},
  journal={preprint arXiv:2502.05382},
  year={2025}
}

@article{kamin2024finite,
  title={Finite-size analysis of prepare-and-measure and decoy-state quantum key distribution via entropy accumulation},
  author={Kamin, Lars and Arqand, Amir and George, Ian and L{\"u}tkenhaus, Norbert and Tan, Ernest Y-Z},
  journal={PRX Quantum},
  volume={6},
  number={2},
  pages={020342},
  year={2025},
  publisher={APS}
}

@article{curras2025security,
  title={Security of quantum key distribution with source and detector imperfections through phase-error estimation},
  author={Curr{\'a}s-Lorenzo, Guillermo and Pereira, Margarida and Nahar, Shlok and Tupkary, Devashish},
  journal={preprint arXiv:2507.03549},
  year={2025}
}

@article{curras2023security2,
    title={Security framework for quantum key distribution with imperfect sources},
      author={Curr{\'a}s-Lorenzo, Guillermo and Pereira, Margarida and Kato, Go and Curty, Marcos and Tamaki, Kiyoshi},
      journal={Optica Quantum},
      volume={3},
      number={6},
      pages={525--534},
      year={2025},
      publisher={Optica Publishing Group}
}

@article{kamin2025r,
  title={R{\'e}nyi security framework against coherent attacks applied to decoy-state {QKD}},
  author={Kamin, Lars and Burniston, John and Tan, Ernest Y-Z},
  journal={preprint arXiv:2504.12248},
  year={2025}
}

@article{mannalath2024sharp,
  title   = {Sharp Finite Statistics for Quantum Key Distribution},
  author  = {Mannalath, Vaisakh and Zapatero, V{\'\i}ctor and Curty, Marcos},
  journal = {Physical Review Letters},
  volume  = {135},
  pages   = {020803},
  year    = {2025},
  doi     = {10.1103/l735-x48g}
}

@article{curras2025numerical,
  title={Numerical security analysis for quantum key distribution with partial state characterization},
  author={Curr{\'a}s-Lorenzo, Guillermo and Navarrete, {\'A}lvaro and N{\'u}{\~n}ez-Bon, Javier and Pereira, Margarida and Curty, Marcos},
  journal={Quantum Science and Technology},
  volume={10},
  number={3},
  pages={035031},
  year={2025},
  publisher={IOP Publishing}
}

@article{pereira2024quantum,
  title={Quantum key distribution with unbounded pulse correlations},
  author={Pereira, Margarida and Curr{\'a}s-Lorenzo, Guillermo and Mizutani, Akihiro and Rusca, Davide and Curty, Marcos and Tamaki, Kiyoshi},
  journal={Quantum Science and Technology},
  volume={10},
  number={1},
  pages={015001},
  year={2024},
  publisher={IOP Publishing}
}

@article{curras2021finite,
  title={Finite-key analysis of loss-tolerant quantum key distribution based on random sampling theory},
  author={Curr{\'a}s-Lorenzo, Guillermo and Navarrete, {\'A}lvaro and Pereira, Margarida and Tamaki, Kiyoshi},
  journal={Physical Review A},
  volume={104},
  number={1},
  pages={012406},
  year={2021},
  publisher={APS}
}

@article{navarrete2021practical,
  title={Practical quantum key distribution that is secure against side channels},
  author={Navarrete, {\'A}lvaro and Pereira, Margarida and Curty, Marcos and Tamaki, Kiyoshi},
  journal={Physical Review Applied},
  volume={15},
  number={3},
  pages={034072},
  year={2021},
  publisher={APS}
}

@article{nahar2024postselection,
  title={Postselection technique for optical Quantum Key Distribution with improved de {F}inetti reductions},
  author={Nahar, Shlok and Tupkary, Devashish and Zhao, Yuming and L{\"u}tkenhaus, Norbert and Tan, Ernest Y-Z},
  journal={PRX Quantum},
  volume={5},
  number={4},
  pages={040315},
  year={2024},
  publisher={APS}
}

@article{metger2023security,
  title={Security of quantum key distribution from generalised entropy accumulation},
  author={Metger, Tony and Renner, Renato},
  journal={Nature Communications},
  volume={14},
  number={1},
  pages={5272},
  year={2023},
  publisher={Nature Publishing Group UK London}
}

@article{pereira2025optimal,
  title={Optimal key rates for quantum key distribution with partial source characterization},
  author={Pereira, Margarida and Curr{\'a}s-Lorenzo, Guillermo and Ara{\'u}jo, Mateus},
  journal={preprint arXiv:2510.13085},
  year={2025}
}

@article{tomamichel2017largely,
  title={A largely self-contained and complete security proof for quantum key distribution},
  author={Tomamichel, Marco and Leverrier, Anthony},
  journal={Quantum},
  volume={1},
  pages={14},
  year={2017},
  publisher={Verein zur F{\"o}rderung des Open Access Publizierens in den Quantenwissenschaften}
}

@article{pittaluga2021600,
  title={600-km repeater-like quantum communications with dual-band stabilization},
  author={Pittaluga, Mirko and Minder, Mariella and Lucamarini, Marco and Sanzaro, Mirko and Woodward, Robert I and Li, Ming-Jun and Yuan, Zhiliang and Shields, Andrew J},
  journal={Nature Photonics},
  volume={15},
  number={7},
  pages={530--535},
  year={2021},
  publisher={Nature Publishing Group UK London}
}

@article{uhlmann1976transition,
  title   = {The ``transition probability'' in the state space of a {$*$}-algebra},
  author  = {Uhlmann, Armin},
  journal = {Reports on Mathematical Physics},
  volume  = {9},
  number  = {2},
  pages   = {273--279},
  year    = {1976},
  publisher = {Elsevier}
}

@article{tupkary2026rigorous,
  title={A rigorous and complete security proof of decoy-state {BB84} quantum key distribution},
  author={Tupkary, Devashish and Nahar, Shlok and Arqand, Amir and Tan, Ernest Y-Z and L{\"u}tkenhaus, Norbert},
  journal={preprint arXiv:2601.18035},
  year={2026}
}

@article{agulleiro2025modeling,
  title={Modeling and characterization of arbitrary order pulse correlations for quantum key distribution},
  author={Agulleiro, Ainhoa and Gr{\"u}nenfelder, Fadri and Pereira, Margarida and Curr{\'a}s-Lorenzo, Guillermo and Zbinden, Hugo and Curty, Marcos and Rusca, Davide},
  journal={preprint arXiv:2506.18684},
  year={2025}
}

@article{sixto2025quantum,
  title={Quantum key distribution with imperfectly isolated devices},
  author={Sixto, Xoel and Navarrete, {\'A}lvaro and Pereira, Margarida and Curr{\'a}s-Lorenzo, Guillermo and Tamaki, Kiyoshi and Curty, Marcos},
  journal={Quantum Science and Technology},
  volume={10},
  number={3},
  pages={035034},
  year={2025},
  publisher={IOP Publishing}
}

@article{navarro2025finite,
  title={Finite-size quantum key distribution rates from R{\'e}nyi entropies using conic optimization},
  author={Navarro, Mariana and Lorente, Andr{\'e}s Gonz{\'a}lez and Parellada, Pablo V and Pascual-Garc{\'\i}a, Carlos and Ara{\'u}jo, Mateus},
  journal={preprint arXiv:2511.10584},
  year={2025}
}

@article{hwang2003quantum,
  title={Quantum key distribution with high loss: toward global secure communication},
  author={Hwang, Won-Young},
  journal={Physical Review Letters},
  volume={91},
  number={5},
  pages={057901},
  year={2003},
  publisher={APS}
}

@article{wang2005beating,
  title={Beating the photon-number-splitting attack in practical quantum cryptography},
  author={Wang, Xiang-Bin},
  journal={Physical Review Letters},
  volume={94},
  number={23},
  pages={230503},
  year={2005},
  publisher={APS}
}

@article{curras-lorenzoRigorousPhaseerrorestimation2026,
  title={Rigorous phase-error-estimation security framework for QKD with correlated sources},
  author={Curr{\'a}s-Lorenzo, Guillermo and Pereira, Margarida and Tamaki, Kiyoshi and Curty, Marcos},
  journal={preprint arXiv:2601.08417},
  year={2026}
}

@article{xu2015experimental,
  title={Experimental quantum key distribution with source flaws},
  author={Xu, Feihu and Wei, Kejin and Sajeed, Shihan and Kaiser, Sarah and Sun, Shihai and Tang, Zhiyuan and Qian, Li and Makarov, Vadim and Lo, Hoi-Kwong},
  journal={Physical Review A},
  volume={92},
  number={3},
  pages={032305},
  year={2015},
  publisher={APS}
}

@article{honjo2004differential,
  title={Differential-phase-shift quantum key distribution experiment with a planar light-wave circuit Mach--Zehnder interferometer},
  author={Honjo, T and Inoue, K and Takahashi, H},
  journal={Optics Letters},
  volume={29},
  number={23},
  pages={2797--2799},
  year={2004},
  publisher={Optical Society of America}
}

@misc{NSA_QKD,
  author       = {{US National Security Agency}},
  title        = {Quantum Key Distribution ({QKD}) and Quantum Cryptography ({QC})},
  year         = {2020},
  howpublished = {\url{https://www.nsa.gov/Cybersecurity/Quantum-Key-Distribution-QKD-and-Quantum-Cryptography-QC/}},
  note         = {Accessed: 2026-04-21},
}

@misc{NCSC_QKD,
  author       = {{UK National Cyber Security Centre}},
  title        = {Quantum Security Technologies},
  year         = {2020},
  howpublished = {\url{https://www.ncsc.gov.uk/whitepaper/quantum-security-technologies}},
  note         = {Accessed: 2026-04-21},
}

@misc{ANSSI_BSI_NLNCSA_QKD,
  author       = {{ANSSI} and {BSI} and {NLNCSA} and {Swedish NCSA}},
  title        = {Position Paper on Quantum Key Distribution},
  year         = {2024},
  howpublished = {\url{https://www.bsi.bund.de/SharedDocs/Downloads/EN/BSI/Crypto/Quantum_Positionspapier.pdf}},
  note         = {Accessed: 2026-04-21},
}

@misc{mathworks_global_optimization_toolbox_2025,
  author       = {{The MathWorks Inc.}},
  title        = {{Global Optimization Toolbox, version 25.1 (R2025a)}},
  year         = {2025},
}

@article{naharImperfectDetectors2026,
  title = {Imperfect Detectors for Adversarial Tasks with Applications to Quantum Key Distribution},
  author = {Nahar, Shlok and Tupkary, Devashish and L{\"u}tkenhaus, Norbert},
  year = 2026,
  month = mar,
  journal = {Quantum},
  volume = {10},
  pages = {2044},
  publisher = {Verein zur F\"orderung des Open Access Publizierens in den Quantenwissenschaften},
  doi = {10.22331/q-2026-03-24-2044},
  urldate = {2026-04-08},
  langid = {british}
}

@article{wangPhaseError2025,
  title = {Phase Error Estimation for Passive Detection Setups with Imperfections and Memory Effects},
  author = {Wang, Zhiyao and Tupkary, Devashish and Nahar, Shlok},
  year = 2025,
  journal = {preprint arXiv:2508.21486},
}

@article{grunenfelder2020performance,
  author  = {Gr\"{u}nenfelder, Fadri and Boaron, Alberto and Rusca, Davide and Martin, Anthony and Zbinden, Hugo},
  title   = {Performance and security of {5} {GHz} repetition rate polarization-based quantum key distribution},
  journal = {Appl. Phys. Lett.},
  volume  = {117},
  number  = {14},
  pages   = {144003},
  year    = {2020},
  doi     = {10.1063/5.0021468},
}

@article{roberts2018patterning,
  author  = {Roberts, George L. and Pittaluga, Mirko and Minder, Mariella and Lucamarini, Marco and Dynes, James F. and Yuan, Zhiliang L. and Shields, Andrew J.},
  title   = {Patterning-effect mitigating intensity modulator for secure decoy-state quantum key distribution},
  journal = {Opt. Lett.},
  volume  = {43},
  number  = {20},
  pages   = {5110--5113},
  year    = {2018},
  doi     = {10.1364/OL.43.005110},
}

@article{yoshino2018quantum,
  author  = {Yoshino, Ken-ichiro and Fujiwara, Mikio and Nakata, Kensuke and Sumiya, Tatsuya and Sasaki, Toshihiko and Takeoka, Masahiro and Sasaki, Masahide and Tajima, Akio and Koashi, Masato and Tomita, Akihisa},
  title   = {Quantum key distribution with an efficient countermeasure against correlated intensity fluctuations in optical pulses},
  journal = {npj Quantum Information},
  volume  = {4},
  number  = {1},
  pages   = {8},
  year    = {2018},
  doi     = {10.1038/s41534-017-0057-8},
}

@article{trefilov2025intensity,
  author  = {Trefilov, Daniil and Sixto, Xoel and Zapatero, V\'{i}ctor and Huang, Anqi and Curty, Marcos and Makarov, Vadim},
  title   = {Intensity correlations in decoy-state {BB84} quantum key distribution systems},
  journal = {Optica Quantum},
  volume  = {3},
  pages   = {417},
  year    = {2025},
  doi     = {10.1364/OPTICAQ.561759},
}

@article{curras2026securitydecoy,
  author  = {Curr{\'a}s-Lorenzo, Guillermo and Pereira, Margarida and Marcomini, Alessandro and Tamaki, Kiyoshi and Curty, Marcos},
  title   = {Security of decoy-state quantum key distribution with correlated bit-and-basis encoders},
  journal= {preprint arXiv:2605.11767},
  year    = {2026}
}
\end{document}